\begin{document}
\newcommand{\order}[1]{O\left(#1\right)}
\newcommand{\bigtheta}[1]{\Theta\left(#1\right)}
\newcommand{\littleo}[1]{o\left(#1\right)}
\newcommand{\bigomega}[1]{\Omega\left(#1\right)}
\newcommand{\littleomega}[1]{\omega\left(#1\right)}
\newcommand{\add}[1]{\textcolor{blue}{#1}}

\newcommand{\mat}[1]{\mathbf{#1}}
\newcommand{\vect}[1]{\boldsymbol{#1}}
\newcommand{\ones}[1]{\vect{1}_{#1}}
\newcommand{\zeros}[1]{\vect{0}_{#1}}
\newcommand{\eyes}[1]{\mat{I}_{#1}}

\newcommand{\expint}[1]{\operatorname{Ei}\left(#1\right)}

\newcommand{\idx}[2]{\left[#1\right]_{#2}}
\newcommand{\idxp}[2]{\left(#1\right)_{#2}}
\newcommand{\abs}[1]{\left\lvert#1\right\rvert}

\newcommand{\avg}[1]{\left\langle#1\right\rangle}
\newcommand{\cond}[2]{\left.#1\,\middle\vert\,#2\right.}
\newcommand{\condavg}[2]{\left\langle#1\,\middle\vert\,#2\right\rangle}
\newcommand{\prob}[1]{\mathbb{P}\left(#1\right)}
\newcommand{\condprob}[2]{\mathbb{P}\left(#1\,\middle\vert\,#2\right)}
\newcommand{\expect}[1]{\mathbb{E}\left[#1\right]}
\newcommand{\expectwrt}[2]{\mathbb{E}_{#1}\left[#2\right]}
\newcommand{\iversion}[1]{\left[#1\right]}

\newcommand{\bernoulli}[1]{\operatorname{Bernoulli}\left(#1\right)}
\newcommand{\diag}[1]{\operatorname{diag}\left(#1\right)}

\newcommand{\integernonneg}{\mathbb{Z}_{\ge 0}}
\newcommand{\integerpos}{\mathbb{Z}_{>0}}
\newcommand{\realnonneg}{\mathbb{R}_{\ge 0}}
\newcommand{\realnonnegk}[1]{\mathbb{R}^{#1}_{\ge 0}}
\newcommand{\real}{\mathbb{R}}
\newcommand{\mlog}{\operatorname{mlog}}
\newcommand{\naturals}{\mathbb{N}}
\newcommand{\sgn}{\operatorname{sgn}}
\newcommand{\norm}[1]{\left\lVert#1\right\rVert}
\newcommand{\normk}[2]{\left\lVert#1\right\rVert_#2}

\newcommand{\psiaf}{\psi}
\newcommand{\Psiaf}{\mat{\Psi}}
\newcommand{\omegaaf}{\omega}
\newcommand{\Omegaaf}{\mat{\Omega}}
\newcommand{\psiaaf}{\widetilde{\psi}}
\newcommand{\Psiaaf}{\widetilde{\mat{\Psi}}}
\newcommand{\omegaaaf}{\widetilde{\omega}}
\newcommand{\Omegaaaf}{\widetilde{\mat{\Omega}}}
\newcommand{\psicf}{\underline{\psi}}
\newcommand{\Psicf}{\underline{\mat{\Psi}}}
\newcommand{\omegacf}{\underline{\omega}}
\newcommand{\Omegacf}{\underline{\mat{\Omega}}}
\newcommand{\psiacf}{\underline{\widetilde{\psi}}}
\newcommand{\Psiacf}{\underline{\widetilde{\mat{\Psi}}}}
\newcommand{\omegaacf}{\underline{\widetilde{\omega}}}
\newcommand{\Omegaacf}{\underline{\widetilde{\mat{\Omega}}}}
\newcommand{\approxnaive}{\stackrel{\ast}{\approx}}

\newcommand{\degree}{d}
\newcommand{\degreeout}{d^+}
\newcommand{\degreein}{d^-}
\newcommand{\meandegree}{\widehat{\degree}}
\newcommand{\meandegreeout}{\widehat{\degreeout}}
\newcommand{\meandegreein}{\widehat{\degreein}}
\newcommand{\avgdegree}{\avg{\degree}}
\newcommand{\poisson}[1]{\mathrm{Poisson}\left(#1\right)}
\newcommand{\uniform}[2]{\mathrm{Uniform}\left(#1,#2\right)}
\newcommand{\stduniform}{\uniform{0}{1}}

\newcommand{\pperp}{\perp\!\!\!\!\!\!\perp}
\newcommand{\inde}[2]{#1\pperp#2}
\newcommand{\condinde}[3]{\inde{#1}{#2}\,\vert\,#3}
\newcommand{\hcancel}[1]{\setbox0=\hbox{$#1$}\rlap{\raisebox{.6\ht0}{\rule{\wd0}{0.4pt}}}#1}

\newtheorem{lemma}{Lemma}
\newtheorem{theorem}{Theorem}
\newtheorem{corollary}{Corollary}[lemma]
\newenvironment{corollarygeoperc}{\renewcommand\thecorollary{1.1}\corollary}{\endcorollary}

\makeatletter
\newcommand*{\addFileDependency}[1]{
\typeout{(#1)}
\@addtofilelist{#1}
\IfFileExists{#1}{}{\typeout{No file #1.}}
}\makeatother

\newcommand*{\myexternaldocument}[1]{%
\externaldocument{#1}%
\addFileDependency{#1.tex}%
\addFileDependency{#1.aux}%
}

\preprint{APS/123-QED}

\title{Geodesic Length Distribution in Sparse Network Ensembles}

\author{Sahil Loomba}
 \email{sloomba@mit.edu}
 \altaffiliation[also at ]{MIT IDSS}
\author{Nick S. Jones}%
 \email{nick.jones@imperial.ac.uk}
\altaffiliation[also at ]{Imperial I-X and EPSRC Centre for the Mathematics of Precision Healthcare}
\affiliation{%
 Department of Mathematics, Imperial College London
}%
\date{March 3, 2025}

\begin{abstract}
A key task in the study of networked systems is to derive local and global properties that impact connectivity, synchronizability, and robustness; computing shortest paths or geodesics yields measures of network connectivity that can explain such phenomena. We derive an analytic distribution of geodesic lengths on the giant component in the supercritical regime---when the giant component exists---or on small components in the subcritical regime, of any sparse (and possibly directed) network with conditionally independent edges, in the infinite-size limit. We provide specific results for widely used network models like stochastic block models, dot product graphs, random geometric graphs, and sparse graphons. The survival function of the geodesic length distribution possesses a simple closed-form expression which is asymptotically tight for finite lengths, has a natural interpretation of traversing independent geodesics in the network, and delivers novel insight into the aforementioned network families.
\end{abstract}

\maketitle


\section{Introduction}

Networks have increasingly become important abstractions for representing a plethora of real-world complex systems. Often, network structures are not entirely known, and are inferred from partial observations of connections, or of signals generated by the system \cite{newman2018inference, goldenberg2010inference, peixoto2019dynamics, godoylorite2021socialising, hoffmann2020communityunobserved}. Inference must account for confounding by many sources of uncertainty, especially measurement errors, which has resulted in a surge of statistical inference methods that yield a statistical model for the network \cite{peixoto2018reconstructing}. Moreover, many real-world networks, such as social networks of entire societies, tend to contain a large number of nodes. Estimating properties of very large networks can be computationally prohibitive, especially if one is interested in higher-order properties such as the size of the giant component, or the average geodesic length between any two nodes in the networks. Both of these have impact on overall network connectivity, network robustness, percolation properties and system synchronizability \cite{cohen2000percolation, callaway2000percolation, newman2001random, nishikawa2003sync}, which have been studied deeply for many physical, biological and social systems. The principal motivation for our work is to analytically estimate such network properties by having access either to just an expectation of network realizations, or to a statistical network model, without either simulating networks or obtaining an error-free observation of one. In particular, we focus on geodesic properties of a network, and establish a distribution of geodesic lengths on the giant component in the supercritical regime---when the giant component exists---or on small components in the subcritical regime---when the giant component does not exist.

\paragraph*{Problem setting} We consider any network model wherein edges between any two nodes are independent of one another, perhaps after conditioning on some auxiliary node variables. Many popular statistical network models are instances of this kind---like the stochastic block model (SBM), random dot product graph (RDPG), random geometric graph (RGG) or more generally any inhomogeneous random graph. Those explicitly encoding higher-order interactions and dependencies such as exponential random graph models \cite{robins2007ergm} are excluded from this study. We assume network sparsity---in a manner that implies that the number of edges varies linearly in the number of nodes---and no ``bottlenecks''---in the sense that, if it exists, the size of the set of nodes that have positive connection probabilities from and to any given pair of nodes scales with the network size.

\paragraph*{Prior work} Previous results on the geodesic length distribution (GLD) have focused almost exclusively on the simplest model of Erd\H{o}s--R\'{e}nyi (ER) graphs \cite{blondel2007distance, katzav2015analytical}, or average geodesic lengths in degree-configuration models \cite{vanderhofstad2005distanceconfigmodel, vanderhoorn2018distancedirectedconfig} and scalar latent variable models \cite{fronczak2004average}, and some of these results display appreciable discrepancies between theory and empirics in the tail-end of the distribution---especially for networks with small degrees \cite{blondel2007distance, katzav2015analytical}. Recent work on Gaussian RGGs \cite{rastelli2016gaussianlpm} builds on the results for scalar latent variable models \cite{fronczak2004average} and, as a result, shares their limitations. Related work has determined the distribution of picking a path between two nodes in a given network \cite{franccoisse2017bagofpaths}, but this approach does not directly model the distribution of geodesic \emph{lengths} between node pairs. Branching processes have been used to describe the distribution of geodesic lengths in small-world networks \cite{barbourreinert2001smallworlds, barbourreinert2004smallworldscorrection, barbourreinert2006discretesmallworlds, barbourreinert2013gossipsmallworlds} and bipartite SBMs \cite{barbourreinert2011bipartitesbm} in terms of the limiting random variable associated with the process. However, the accuracy of approximations entailed by the process relies on the dominance of the largest eigenvalue of its associated mean matrix which precludes, for instance, network models that are close to periodic. (Although, we remark that multipartite branching processes allow for progress in specific cases of exact periodicity, as in the bipartite SBM \cite{barbourreinert2011bipartitesbm}.) Notably, previous results are unable to generalize to distances on small components in the subcritical regime. Description of the full GLD for a general family of network models has been a desirable but elusive objective \cite{jackson2017socialnetworkeconomics} before this work.

\paragraph*{Our contributions} We derive a set of recursive equations in the asymptotic limit that stipulate the full GLD between any two nodes on the giant component of a network in the supercritical regime, and on small components otherwise. The proposed approach provides, to the best of our knowledge, the most accurate description of the distribution of geodesic lengths between any node pair for a very general class of (possibly directed) sparse networks that works in both the supercritical and subcritical regimes. We determine a closed-form expression for the GLD's survival function which is tight for finite lengths in the asymptotic limit, and has a natural interpretation of traversing independent walks in the network. The closed form is specified by an iterated integral operator defined over functions on the node space, wherein the operator's kernel function encodes the probability of an edge existing between two nodes. The integral operator is analogous to the expected adjacency matrix in the finite-dimensional setting, and summarizes the dependence of a node function on all other nodes in the network. If the kernel is symmetric, it leads to an expression for the GLD in terms of the spectral decomposition of the integral operator. Under specific scenarios, our method recovers known results on geodesics, such as the small-world property of ER graphs \cite{erdos1960evolution, albert2002networks}, or the ultrasmall property of Barab\'{a}si--Albert (BA) graphs \cite{cohen2003ultrasmall}. We provide new results for the illustrative models considered namely SBMs, Gaussian RGGs, RDPGs and sparse graphons, and also apply them to real-world networks. Prior work in the field has produced results concerning \emph{specific} phenomena on \emph{specific} network models. However, our approach to the GLD unifies the study of shortest paths, and related phenomena of connectedness and centralities discussed in our follow-up work \cite{loomba2024geodesics2}, into one theoretical framework.

\paragraph*{Overview} The article is organized as follows. In Sec. \ref{sec:spd} we describe our probabilistic framework and obtain the GLD in the general setting of networks sampled from a sparse ``ensemble average'' network model. Then, in Sec. \ref{sec:general_graphs}, we consider this formalism in general random network families wherein nodes are additionally located in some arbitrary node space. In Sec. \ref{sec:geodesics_specific} we highlight specific cases of popularly used statistical network models and real-world networks. We conclude in Sec. \ref{sec:geo1_discussion} with a summary of our results and limitations of this framework.

\section{\label{sec:spd}Sparse ensemble average network}
We first derive the distribution of geodesic lengths between two nodes in a network using a recursive approach. In Sec. \ref{sec:spd_supercritical} we consider the supercritical regime, describe a lemma that permits the construction of a geodesic via intervening nodes, and discuss the sparsity condition required to generate a set of recursive equations. To supply the initial condition, in Sec. \ref{sec:perc_prob} we derive the percolation probability of a node. In Sec. \ref{sec:spd_subcritical}, the formalism is extended to the subcritical regime. Then in Sec. \ref{sec:closedform_bound}, we extract a closed-form expression for the GLD. 

\paragraph*{Setup} We consider $n$ nodes indexed by the set $[n]\triangleq\{1,2,\dots,n\}$ in a network without self-loops represented by the $n\times n$ adjacency matrix $\mat{A}$. That is, for two nodes indexed by $i\in[n]$ (``source'') and $j\in[n]\setminus\{i\}$ (``target''), $A_{ij}=1$ if there is an edge from $i$ to $j$ and $A_{ij}=0$ otherwise, and $A_{ii}=0$. We assume knowledge of the expectation in the ensemble average sense, that is, access to the expected adjacency matrix $\avg{\mat{A}}$; we use the notation $\avg{\cdot}$ when averaging over the ensemble. We consider a network to be a sparse ensemble average network (SEAN) if it is generated from a network model for which the following assumptions hold.
\begin{enumerate}
    \item \emph{Conditionally independent edges:} Conditioned on the identity of a node pair, the existence of an edge between them is independent of the edge between another node pair:
    \begin{equation}
        \label{eq:ciem}
        A_{ij}\sim\bernoulli{\avg{A_{ij}}}.
    \end{equation}
    \item \emph{Sparsity:} If a node pair has a non-zero connection probability, then it is appropriately small:
    \begin{equation}
        \label{eq:sparsity_constraint}
        \avg{A_{ij}}=\bigtheta{n^{-1}}\textrm{ or }0\textrm{ \footnote{$f(n)=\bigtheta{g(n)}$ if $\limsup_{n\to\infty}\frac{f(n)}{g(n)}<\infty$ and $\liminf_{n\to\infty}\frac{f(n)}{g(n)}>0$.}},
    \end{equation}
    and $\avg{A_{ii}}=0$ (i.e. no self-loops \footnote{Our formalism works just as well for a network with loops as long as loops are added sparsely. That is, the chance of a loop for any given node is either $\bigtheta{n^{-1}}$ or $0$, and so we can asymptotically ignore loops, as we do here.}).
    \item \emph{No bottlenecks:} If there is a set of nodes with non-zero connection probabilities from and to a given node pair, then its size is non-vanishing:
    \begin{equation}
        \label{eq:nonultrabottlenecked}
        \abs{\{k\in[n]\setminus\{i,j\}: \avg{A_{ik}}>0, \avg{A_{kj}}>0\}} = \bigomega{n}\textrm{ or }0 \textrm{ \footnote{$f(n)=\bigomega{g(n)}$ if $\liminf_{n\to\infty}\frac{f(n)}{g(n)}>0$.}}.
    \end{equation}
    \item  \emph{Asymptotic limit:} The network size is large: $n\gg 1.$
\end{enumerate}
The corresponding network model is said to be a SEAN model. For directed networks, all edges are added independently of each other. For undirected networks, we additionally enforce $\avg{\mat{A}}=\avg{\mat{A}}^T$ and, without loss of generality, use node indices as an arbitrary ordering: for $i<j$, $A_{ij}$ is generated independently from Eq. \eqref{eq:ciem}, and set $A_{ji}= A_{ij}$. We remark that Eqs. \eqref{eq:ciem} and \eqref{eq:sparsity_constraint} imply that the network is sparse in the sense that nodes have an asymptotically bounded expected degree. We emphasize that Eq. \eqref{eq:nonultrabottlenecked} is a technical condition that is not required for most standard network models---see Sec. \ref{sec:general_graphs}---as they trivially satisfy it. Let $\lambda_{ij}\in\integernonneg$ be the random variable denoting the length of a shortest path or geodesic from node $i$ to node $j$.

\subsection{\label{sec:spd_supercritical}Supercritical regime}

In this section we focus on the supercritical regime. We say that a node pair $(i,j)$ is supercritical if asymptotically there can exist a giant component---a connected component whose size is of the order of the network size, i.e. $\bigtheta{n}$---such that there exists a path from $i$ to $j$ going via nodes on that giant component. For an undirected network, if node $i$ can reach node $j$ on a giant component, then it can reach (and be reached from) \emph{every} other node on that giant component: being supercritical is a node-level property. For directed networks, it is not necessary for a (directed) path to exist from $i$ to $j$, even if one exists from $j$ to $i$, which requires consideration of giant in-/out-components specific to a node pair: being supercritical is a node pair-level property. Given this non-trivial difference, in the main text we mostly consider networks in an undirected setting, with results for directed networks reserved for Appendix \ref{sec:apdx_asymmetric}.

Without loss of generality, we assume that $\avg{\mat{A}}$ is not permutation similar to a block diagonal matrix, i.e. it is irreducible \footnote{If $\avg{\mat{A}}$ were reducible, there would exist node subsets that can never have edges between them, and we can simply consider the geodesics separately for subgraphs induced by those node subsets. This assumption is not necessary for our formalism and can be relaxed, but it simplifies the exposition since if $\avg{\mat{A}}$ is irreducible, then asymptotically there can only exist a \emph{unique} giant component.}. Let $\phi_i$ be the event that node $i$ is on the giant component, and $\neg\phi_i$ be the event that it is not. We consider the distribution of $\lambda_{ij}$ conditioned on the source node $i$ being on the giant component. It is useful to define matrices $\Psiaf_l, \Omegaaf_l$ encoding the survival function and conditional probability mass function (PMF) of the GLD respectively:
\begin{subequations}
\label{eq:def_omega_psi}
\begin{align}
    \label{eq:def_psi}
    [\Psiaf_l]_{ij}&\triangleq \condprob{\lambda_{ij}>l}{\phi_i},\\
    \label{eq:def_omega}
    [\Omegaaf_l]_{ij}&\triangleq \condprob{\lambda_{ij}=l}{\lambda_{ij}>l-1,\phi_i},
\end{align}
\end{subequations}
which we refer to as the ``survival function matrix'' and ``conditional PMF matrix'' respectively. It is important to note here that the geodesic length distribution is not a \emph{single} distribution but instead a (possibly continuous) set of distributions indexed by node pairs (or node pair locations, as in Sec. \ref{sec:general_graphs}). We use the notation $[\cdot]_{ij}$ to refer to the $(i,j)^\mathrm{th}$ element of a matrix to avoid any confusion where it might arise (when the matrix has an associated super/subscript or is a product or power of matrices). We remark that, due to sparsity, the conditional PMF matrix $\Omegaaf_l$ is asymptotically well-defined for any $l\in\integernonneg$ \footnote{Because the network is sparse, node $j$ can have an edge to any other node with probability strictly less than $1$, and therefore there is a strictly positive---although possibly vanishing---probability of $j$ to be entirely isolated and therefore of the geodesic length $\lambda_{ij}$ being infinite, and thus being longer than $l-1$ for any finite $l$.}.

\paragraph*{Recursive setup} Since the geodesic being longer than $l$ necessarily implies that it is longer than $l-1$, it is convenient to recursively model for the non-existence of a geodesic from $i$ to $j$ up to some length $l$:
\begin{equation}
    \label{eq:spd_recursion_0}
    \begin{split}
        \condprob{\lambda_{ij}>l}{\phi_i} =\ &\condprob{\lambda_{ij}>l}{\lambda_{ij}>l-1,\phi_i}\\&\times\condprob{\lambda_{ij}>l-1}{\phi_i}.
    \end{split}
\end{equation}
We can write the first factor of the recursive equation as the conditional probability of no path of length $l$ existing between $i,j$, which can be written by accounting for the non-existence of paths of length $l$ from $i$ to $j$ via any node $u\in[n]\setminus \{i,j\}$ such that there is a direct edge from $u$ to $j$. Due to sparsity and no bottlenecks, the probability of existence of a geodesic of any finite length $l$ is asymptotically of order $\bigtheta{n^{-1}}$ or $0$ while the contribution of any given node to a possible geodesic between a given node pair is of order $\bigtheta{n^{-2}}$ or $0$ (see Lemma \ref{lemma:vanishingbridge} in Appendix \ref{sec:apdx_lemmas12_bridging} for details). It then follows that there is vanishing correlation between pairs of geodesics of length $l$ from $i$ to $j$ via nodes $u$ and $v\in [n]\setminus\{i,j,u\}$ (from Lemma \ref{lemma:vanishingbridge}), which simplifies the first factor in Eq. \eqref{eq:spd_recursion_0} into a product of individual probabilities:
\begin{equation}\label{eq:spd_recursion}
    \begin{split}        
        &\condprob{\lambda_{ij}> l}{\lambda_{ij}>l-1,\phi_i} \approx \\ &\prod_{u\in[n]\setminus\{i,j\}}\left[1-\condprob{\lambda_{iu}=l-1,\lambda_{uj}=1}{\lambda_{ij}>l-1,\phi_i}\right].
    \end{split}
\end{equation}
Here, and at various points in the article, we use ``$\approx$'' to denote a first-order asymptotic approximation in the sense that the relative difference between the quantity of interest---on the left-hand side (LHS)---and its approximation---on the right-hand side (RHS)---vanishes with increasing network size $n$. We emphasize that this approximation holds on the giant component for finite geodesic lengths as $n\to\infty$: as we consider longer geodesics of length that scales as $\bigtheta{\log n}$, the probability of a node to be at that distance approaches $\bigtheta{1}$ instead of $\bigtheta{n^{-1}}$, even for sparse networks. This induces finite-size effects on the GLD, which become incrementally concentrated around the mode of the GLD as network size increases; see Fig. \ref{fig:spd_er_d2_errs} and Appendix \ref{sec:apdx_finite_size} for an extended treatment of finite-size effects. To simplify the term on the RHS of Eq. \eqref{eq:spd_recursion}, we use Lemma \ref{lemma:1} in Appendix \ref{sec:apdx_lemmas12_bridging}:
\begin{equation}\label{eq:lemma1}
\begin{split}
\condprob{\lambda_{iu}=l-1, \lambda_{uj}=1}{\lambda_{ij}\ge l,\phi_i} \approx &\ \condprob{\lambda_{iu}=l-1}{\phi_i}\\ &\times\prob{A_{uj}=1}.    
\end{split}
\end{equation}
Finally, note that
\begin{equation*}
    \begin{split}
        \condprob{\lambda_{iu}=l-1}{\phi_i} =&\ \condprob{\lambda_{iu}=l-1}{\lambda_{iu}>l-2,\phi_i}\\&\times\condprob{\lambda_{iu}>l-2}{\phi_i},
    \end{split}
\end{equation*}
which, alongside Eqs. \eqref{eq:lemma1} and \eqref{eq:spd_recursion}, yields 
\begin{equation}
\label{eq:spd_recursion_2}
    \begin{split}
        &\condprob{\lambda_{ij}>l}{\lambda_{ij}>l-1,\phi_i} = 1-\condprob{\lambda_{ij}= l}{\lambda_{ij}>l-1,\phi_i}\\
        &\approx\prod_{u\in[n]\setminus\{i,j\}}[1-\condprob{\lambda_{iu}=l-1}{\lambda_{iu}>l-2,\phi_i}\\&\hspace{53pt}\times\condprob{\lambda_{iu}>l-2}{\phi_i}\prob{A_{uj}=1}].
    \end{split}
\end{equation}
Using the definitions in Eq. \eqref{eq:def_omega_psi} we can write Eqs. \eqref{eq:spd_recursion_0}, \eqref{eq:spd_recursion_2} succinctly in terms of the survival function matrix and conditional PMF matrix of the GLD:
\begin{subequations}
    \label{eq:spd_main}
    \begin{align}
        \label{eq:spd_main_psi}
        [\Psiaf_l]_{ij} &= (1 - [\Omegaaf_l]_{ij})[\Psiaf_{l-1}]_{ij}, \\
        \label{eq:spd_main_omega}
        [\Omegaaf_l]_{ij} &\approx 1 - \exp\left(\sum_u\log\left(1-[\Omegaaf_{l-1}]_{iu}[\Psiaf_{l-2}]_{iu}\avg{A_{uj}}\right)\right),
    \end{align}
\end{subequations}
where $\exp$ and $\log$ refer to element-wise exponentiation and natural logarithm, respectively. The above pair of recursive equations, together with the initial conditions
\begin{equation*}
    \begin{split}
        [\Omegaaf_1]_{ij}&\triangleq \condprob{\lambda_{ij}=1}{\lambda_{ij}>0,\phi_i}=\condprob{A_{ij}=1}{\phi_i},\\
        [\Psiaf_0]_{ij}&\triangleq \condprob{\lambda_{ij}>0}{\phi_i}=1,
    \end{split}
\end{equation*}
completely define the distribution of geodesic lengths, from which other network properties of interest can be extracted; see our follow-up work in Ref. \onlinecite{loomba2024geodesics2}. Note that $[\Omegaaf_0]_{ij}$ does not exist, while $1-  [\Psiaf_\infty]_{ij}$ naturally encodes the probability of a shortest path between $i,j$ being of infinite length, i.e. $j$ not being on the giant component, as $i$ is already on the giant component. Since loops are ignored, $[\Psiaf_l]_{ii}\triangleq 0$ and $[\Omegaaf_l]_{ii}\triangleq 0$.

To define the initial condition $[\Omegaaf_1]_{ij}=\condprob{A_{ij}=1}{\phi_i}$, we use Lemma \ref{lemma:deg_gcc} in Appendix \ref{sec:apdx_connect_gcc} which shows:
\begin{equation}
    \label{eq:prob_connect_exact}
    \condprob{A_{ij}=1}{\phi_i} \approx \prob{A_{ij}=1}\left\{1+\left[\frac{1}{\prob{\phi_i}}-1\right]\prob{\phi_j}\right\}.
\end{equation}
Estimating the RHS of Eq. \eqref{eq:prob_connect_exact} requires the probability of a node to be on the giant component, or ``percolating'', which we derive next in Sec. \ref{sec:perc_prob}.

\subsection{\label{sec:perc_prob}Percolation probability}

In this section we provide two solutions for the percolation probability $\prob{\phi_i}$ of node $i$ which respectively yield an ``analytic form'' and an ``approximate analytic form'' of the GLD.

\paragraph*{Analytic form} In principle, percolation probabilities can be extracted from the survival function of the GLD. By continuity of the distribution,
\begin{equation}\label{eq:perc_prob_limit}
    \condprob{\lambda_{ij}=\infty}{\phi_i} = \lim_{l\to\infty} \condprob{\lambda_{ij}>l}{\phi_i}.
\end{equation}
That is, the steady state of recursive Eq. \eqref{eq:spd_main_psi} is indicative of the amount of probability mass at $\lambda_{ij}=\infty$, when $i$ is on the giant component. Then $\forall (i, j)$ we have
\begin{equation}\label{eq:perc_prob_pinf}
    \prob{\phi_j}=1-\condprob{\lambda_{ij}=\infty}{\phi_i}.
\end{equation}
Eqs. \eqref{eq:spd_main}--\eqref{eq:perc_prob_pinf} provide us with the full ``analytic form'' of the GLD.

\paragraph*{Approximate analytic form}At first, computing the RHS of Eq. \eqref{eq:perc_prob_pinf} appears to be a circular problem, since we require the limiting value of the survival function of the GLD to obtain its initial condition. (We will shortly find a self-consistent equation to independently compute percolation probabilities; see Eq. \eqref{eq:gcc_consistency_sparse}.) However, we empirically observe that precision in the initial condition is important for agreement of the survival function for only smaller geodesic lengths, whereas we obtain the expected limiting value when using a na\"{i}ve approximation of the initial condition by setting
\begin{equation}
    \label{eq:prob_connect_apx}
    \condprob{A_{ij}=1}{\phi_i} \approxnaive \prob{A_{ij}=1},
\end{equation}
where we use ``$\approxnaive$'' to denote this na\"{i}ve approximation, to distinguish it from a first-order asymptotic approximation. Therefore, running the recursive setup once gives access to $\condprob{\lambda_{ij}=\infty}{\phi_i}$, which can be used to derive the (exact) analytic form of the GLD from Eqs. \eqref{eq:spd_main}, \eqref{eq:prob_connect_exact} and \eqref{eq:perc_prob_pinf} in a second recursion. Henceforth, we refer to the GLD obtained by using the na\"{i}ve initial condition in Eq. \eqref{eq:prob_connect_apx}, alongside Eq. \eqref{eq:spd_main}, as the ``approximate analytic form'' of the GLD. We note from Eq. \eqref{eq:prob_connect_exact} that the approximation in Eq. \eqref{eq:prob_connect_apx} holds, in particular, when the network is almost surely connected: $\prob{\phi_i}\to 1$. More generally, the approximation will underestimate the probability mass on short geodesics; see Appendix \ref{sec:apdx_connect_gcc}. For an ER graph of mean degree $\avgdegree$, where $\avg{A_{ij}}=\frac{\avgdegree}{n}$, Fig. \ref{fig:spd_er_d2} indicates good agreement between the empirical and analytic cumulative distribution function (CDF) of the GLD. We note that the approximate analytic CDF remains a good approximation even for shorter geodesics.

\begin{figure}
    \centering
    \includegraphics[width=\columnwidth]{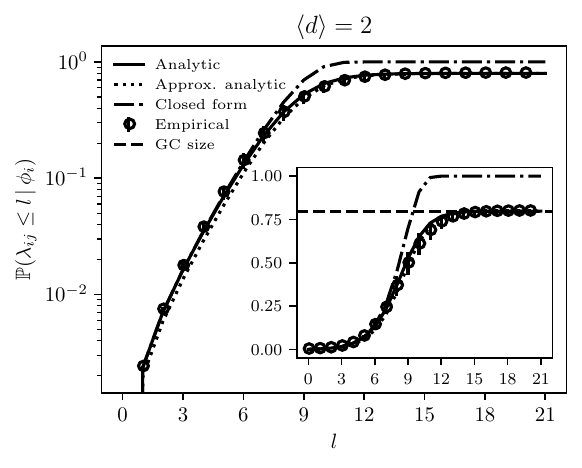}
    \caption{Analytic CDF of geodesic lengths for an ER graph agree with the empirical CDF, where the source node is on the giant component. Network size is fixed at $n=1024$ and mean degree at $\avgdegree=2$. Solid and dotted lines indicate analytic solutions derived from analytic (Eqs. \eqref{eq:spd_main}, \eqref{eq:prob_connect_exact}, \eqref{eq:gcc_consistency}) and approximate analytic forms (Eqs. \eqref{eq:spd_main}, \eqref{eq:prob_connect_apx}) respectively, while dash-dotted line indicates the closed form obtained from Eq. \eqref{eq:sf_avg}. Symbols ($\circ$) and bars indicate empirical estimates: mean and standard error over 10 network samples. Dashed asymptote indicates size of the giant component as estimated from the self-consistent Eq. \eqref{eq:gcc_consistency}. The approximate analytic form marginally underestimates the probability mass for shorter lengths, as is evident on the logarithmic scale (main plot). The closed form shows good agreement for shorter lengths, but deviates strongly for longer ones (inset plot on the linear scale)---saturating to unity for any percolating network. There is good agreement between the analytic and empirical estimates, with some deviation around the mode of the distribution due to finite-size effects---see Appendix \ref{sec:apdx_finite_size}.}
    \label{fig:spd_er_d2}
\end{figure}

\paragraph*{Self-consistent equation}Alternatively, we can derive percolation probabilities without invoking the GLD. Corollary \ref{lemma:perc} in Appendix \ref{sec:percprob} shows that, in a SEAN, the percolation probabilities are asymptotically given by the following self-consistent equation:
\begin{equation}\label{eq:gcc_consistency_sparse}
    \prob{\phi_i}\approx 1-\exp\left(-\sum_{j\ne i}\prob{A_{ij}=1}\prob{\phi_j}\right).
\end{equation}
We remark that, due to sparsity, Eq. \eqref{eq:gcc_consistency_sparse} yields $\prob{\phi_i}=\bigtheta{1}$ or $0$, i.e. the percolation probability of a node in a sparse network is asymptotically independent of network size $n$. Let $\boldsymbol{\rho}$ be the vector encoding $\rho_i\triangleq \prob{\phi_{i}}$. Then the percolation probability of every node is given by the non-trivial solution to the following transcendental vector equation:
\begin{equation}\label{eq:gcc_consistency}
    \boldsymbol\rho = \ones{n}-\exp\left(-\avg{\mat{A}}\boldsymbol\rho\right),
\end{equation}
where $\ones{n}$ is an all-ones vector of length $n$. For ER graphs with mean degree $\avgdegree$ this becomes a scalar equation: $\rho=1-\exp(-\avgdegree\rho)$. In Fig. \ref{fig:gcc_consistency_spld} in Appendix \ref{sec:apdx_finite_size}, we demonstrate that there is a strong agreement between the percolation probabilities estimated empirically, obtained from the approximate analytic form (using Eqs. \eqref{eq:spd_main}, \eqref{eq:perc_prob_limit}--\eqref{eq:prob_connect_apx}), and obtained from the self-consistent Eq. \eqref{eq:gcc_consistency} via function iteration. In Appendix \ref{sec:apdx_asymmetric} we show that, for directed networks, there are circumstances when using the approximate analytic form of the GLD is more useful. In all results that follow, we use Eqs. \eqref{eq:spd_main}, \eqref{eq:prob_connect_exact}, and \eqref{eq:gcc_consistency} to yield the ``analytic form'' of the GLD in the supercritical regime.

\subsection{\label{sec:spd_subcritical}Subcritical regime}

We now extend the GLD formalism to the subcritical regime. A node pair $(i,j)$ is said to be subcritical if asymptotically there cannot exist a giant component such that there exists a path from $i$ to $j$ that goes through it. This implies that, asymptotically, either there cannot exist \emph{any} path between $i,j$---in which case $\avg{\mat{A}}$ is reducible and the GLD trivially has all probability mass at infinity---or $i,j$ can only exist on a small component yielding the subcriticality condition
\begin{equation*}
    \prob{\phi_i}=\prob{\phi_j}=0,
\end{equation*}
in which case we consider the GLD on the small component containing $i,j$.

Analogous to Eq. \eqref{eq:def_omega_psi}, we can define the conditional PMF and survival function matrices without the conditioning on $\phi_i$ (which is an impossible event). Also, the recursive setup described in Sec. \ref{sec:spd_supercritical}, alongside the key result in Lemma \ref{lemma:1}, can be applied verbatim in the subcritical regime. (In fact, the asymptotic approximations in Eqs. \eqref{eq:spd_recursion} and \eqref{eq:lemma1} hold on the small component for \emph{all} geodesic lengths since the existence of geodesics of longer lengths becomes rarer.) The only difference is in the initial condition for the conditional PMF:
\begin{equation}
    \label{eq:init_omega_subcritical}
    [\Omegaaf_1]_{ij}\triangleq \condprob{\lambda_{ij}=1}{\lambda_{ij}>0}=\prob{A_{ij}=1},
\end{equation}
which is completely defined under the ensemble average model. That is, Eqs. \eqref{eq:spd_main} and \eqref{eq:init_omega_subcritical} yield the analytic GLD when $i,j$ are in the subcritical regime. In the sections that follow we focus on the supercritical regime, but remark that results for the subcritical regime naturally follow by dropping the conditioning on $\phi_i$. In Appendix \ref{sec:apdx_finite_size}, we show the empirics and analytics agree for subcritical ER graphs of varying mean degree (Fig. \ref{fig:spd_er_smallcomponent}), and for a subcritical bipartite SBM (Fig. \ref{fig:bipartite_cdf_smallcomponent}).

\subsection{\label{sec:closedform_bound}Closed form of the GLD}

While the recursive formulation in Eq. \eqref{eq:spd_main} is powerful, additional approximations allow for further analytical progress. Since the conditional PMF matrix $\Omegaaf_{l-1}$, survival function matrix $\Psiaf_{l-2}$, and expected adjacency matrix $\avg{\mat{A}}$ encode probability measures, their entries will always be non-negative and no larger than unity, allowing us to apply a generalized version of Bernoulli's inequality to the RHS of Eq. \eqref{eq:spd_main_omega} to obtain the bound:
\begin{equation}
\label{eq:omegaaf_bernoulli_bound}
\Omegaaf_l \le (\Omegaaf_{l-1}\odot\Psiaf_{l-2})\avg{\mat{A}} \le \Omegaaf_{l-1}\avg{\mat{A}} \le \Omegaaf_{1}\avg{\mat{A}}^{l-1},
\end{equation}
where $\le$ and $\odot$ respectively indicate element-wise comparison and product. We can leverage network sparsity in Eq. \eqref{eq:sparsity_constraint} to show that the final bound in Eq. \eqref{eq:omegaaf_bernoulli_bound} is asymptotically tight. As $n\to\infty$, Eq. \eqref{eq:sparsity_constraint} allows for a first-order approximation for the logarithm in Eq. \eqref{eq:spd_main_omega}:
\begin{equation}
    \label{eq:spd_analytic_omega_1}
    \Omegaaf_l \approx \ones{n}\ones{n}^T-\exp\left(-(\Omegaaf_{l-1}\odot\Psiaf_{l-2})\avg{\mat{A}}\right).
\end{equation}
Let $l=2$, for which $[\Psiaf_{l-2}]_{iu}\triangleq \condprob{\lambda_{iu}>l-2}{\phi_i}= \condprob{\lambda_{iu}>0}{\phi_i}=1$. Then we can write from Eq. \eqref{eq:spd_analytic_omega_1}:
\begin{equation}
    \label{eq:spd_analytic_omega_2}
    \Omegaaf_2 \approx \ones{n}\ones{n}^T-\exp\left(-\Omegaaf_1\avg{\mat{A}}\right).
\end{equation}
Due to sparsity in Eq. \eqref{eq:sparsity_constraint}, and Eq. \eqref{eq:prob_connect_exact}, we obtain:
\begin{equation}
    \label{eq:spd_analytic_omega_3}
    \Omegaaf_1=\bigtheta{n^{-1}}\textrm{ or }0\implies\Omegaaf_1\avg{\mat{A}}=\bigtheta{n^{-1}}\textrm{ or }0.
\end{equation}
Application of a first-order approximation for the exponential in Eq. \eqref{eq:spd_analytic_omega_2} yields:
\begin{equation}
    \label{eq:spd_analytic_omega_4}
    \Omegaaf_2 \approx \Omegaaf_1\avg{\mat{A}}.
\end{equation}
From Eqs. \eqref{eq:spd_main_psi} and \eqref{eq:spd_analytic_omega_2} we obtain:
\begin{equation}
    \label{eq:spd_analytic_omega_5}
    \Psiaf_1 \approx\exp\left(-\Omegaaf_1\avg{\mat{A}}\right).
\end{equation}
Next, consider Eq. \eqref{eq:spd_analytic_omega_1} with $l=3$, for which using Eqs. \eqref{eq:spd_analytic_omega_3} and \eqref{eq:spd_analytic_omega_5} we can assume $\Psiaf_{l-2}=\Psiaf_1\approx \ones{n}\ones{n}^T$, where $\ones{k}$ is the all-ones vector of size $n$. This yields an equation for $l=3$ similar to Eq. \eqref{eq:spd_analytic_omega_2}: 
\begin{equation*}
    \Omegaaf_3 \approx \ones{n}\ones{n}^T-\exp\left(-\Omegaaf_2\avg{\mat{A}}\right)\approx\ones{n}\ones{n}^T-\exp\left(-\Omegaaf_1\avg{\mat{A}}^2\right),
\end{equation*}
where we have used Eq. \eqref{eq:spd_analytic_omega_4}. Due to sparsity, we can apply identical arguments as above to obtain: 
\begin{equation*}
\begin{split}
    &\Omegaaf_1\avg{\mat{A}}^2=\bigtheta{n^{-1}}\textrm{ or }0
    \implies\Omegaaf_3\approx\Omegaaf_1\avg{\mat{A}}^2\\
    &\implies\Psiaf_2\approx\exp\left(-\Omegaaf_1\left(\avg{\mat{A}}+\avg{\mat{A}}^2 \right)\right).    
\end{split}
\end{equation*}
In the infinite-size limit, by induction for any finite $l$, we effectively propagate the sparsity assumption through
\begin{equation}
    \label{eq:spd_analytic_omega_8}
    \Omegaaf_{l}=\bigtheta{n^{-1}}\textrm{ or }0,
\end{equation}
that results in
\begin{subequations}
\label{eq:spd_analytic}
    \begin{align}
        \label{eq:spd_analytic_omega}
        \Omegaaf_l &\approx \Omegaaf_{l-1}\avg{\mat{A}} \implies \Omegaaf_l \approx \Omegaaf_{1}\avg{\mat{A}}^{l-1}, \\
        \label{eq:spd_analytic_psi}
        \Psiaf_l &\approx \exp\left(-\sum_{k=1}^l\Omegaaf_k\right).
    \end{align}
\end{subequations}
We emphasize that this induction relies on the application of the sparsity assumption: asymptotically, the survival function $[\Psiaf_{l-2}]_{iu}\triangleq \condprob{\lambda_{iu}>l-2}{\phi_i}$ is unity or, equivalently, the conditional PMF $\condprob{\lambda_{ij}=l}{\lambda_{ij}>l-1,\phi_i}$ is the PMF $\condprob{\lambda_{ij}=l}{\phi_i}$. In the subcritical regime (and therefore ignoring the conditioning on $\phi_i$), this holds for any value of $l$ in a network of finite (but large) size $n$, since $j$ is almost surely on a different component from $i$'s, i.e. $\prob{\lambda_{ij}>l-1}\approx 1$. In the supercritical regime, this holds for any finite value of $l$ in the infinite-size limit, since a shortest path between any two arbitrary nodes is almost surely no shorter than $l$, and thus the almost-sure event $\lambda_{ij}>l-1$ does not inform the almost-never event $\lambda_{ij}=l$. (We will rigorously state in Ref. \onlinecite{loomba2024geodesics2} that there is a direct connection between the existence of a giant component in the infinite-size limit and the support over which this approximation holds.) For finite-sized networks in the supercritical regime, this approximation will evidently work only for smaller geodesic lengths. More precisely, for geodesic lengths that scale as $\bigtheta{\log n}$---which is around the mode of the GLD---correlations due to finite network size become apparent, Eq. \eqref{eq:spd_analytic_omega_8} no longer holds, and Eq. \eqref{eq:spd_analytic_omega} stops being a tight approximation on the actual conditional PMF. (See Appendix \ref{sec:apdx_finite_size} for details.) As shown by the bound in Eq. \eqref{eq:omegaaf_bernoulli_bound}, the ``conditional PMF'' encoded by the RHS in Eq. \eqref{eq:spd_analytic_omega} will always overestimate the analytic conditional PMF given by Eq. \eqref{eq:spd_main_omega} and in practice it may not be a valid probability measure. Nevertheless, it yields an expression for the survival function matrix in Eq. \eqref{eq:spd_analytic_psi} in terms of the sum of powers of $\avg{\mat{A}}$:
\begin{equation}
    \label{eq:sf_avg}
    \Psicf_l \triangleq \exp\left(-\Omegaaf_1\sum_{k=1}^l\avg{\mat{A}}^{k-1}\right),
\end{equation}
which, as such, \emph{is} a valid probability measure, and alongside Eqs. \eqref{eq:prob_connect_exact}, \eqref{eq:gcc_consistency} completely describes the GLD. We refer to Eqs. \eqref{eq:prob_connect_exact}, \eqref{eq:gcc_consistency} and \eqref{eq:sf_avg} as the ``closed form'' of the GLD, and emphasize that these approximations underestimate probability mass on longer geodesics: the closed form of the conditional PMF of the GLD is an upper bound on its analytic form, which is tight for shorter lengths in finite-size networks. These approximations will not be useful if we are interested in computing the size of the giant component, since the closed form in Eq. \eqref{eq:sf_avg} will always have a limiting value of unity in the supercritical regime (see Theorems 1 and 4 in Ref. \onlinecite{loomba2024geodesics2}). However, if we are interested in geodesics shorter than the modal length, or the expectation of geodesic lengths rather than the full GLD, then these are seen to be reasonable approximations. We demonstrate this behavior for ER graphs in Fig. \ref{fig:spd_er}, and for other statistical network models in Sec. \ref{sec:geodesics_specific}.

\paragraph*{Approximate closed form of the GLD}Finally, we consider a scenario which produces an insightful interpretation for the survival function of the GLD. The na\"{i}ve approximation to the initial condition from Eq. \eqref{eq:prob_connect_apx}---which given Eq. \eqref{eq:prob_connect_exact} is asymptotically tight if $\forall i: \prob{\phi_i}\approx 1$ i.e. the network is almost surely connected---can be inserted in Eq. \eqref{eq:sf_avg} to yield an ``approximate closed form'' of the survival function of the GLD:
\begin{equation}
    \label{eq:sf_avg_uncorrected}
    \Psiacf_l \triangleq \exp\left(-\sum_{k=1}^l\avg{\mat{A}}^{k}\right).
\end{equation}
Eq. \eqref{eq:sf_avg_uncorrected} shows that the approximate closed form of the survival function of the GLD at length $l$ is encoded by the sum of powers of $\avg{\mat{A}}$ from $1$ to $l$. This is reminiscent of the well-known result that the sum of powers of $\mat{A}$ encodes the number of walks of up to length $l$ \cite{newman2018networks}. We emphasize that if all node pairs are in the subcritical regime, then Eq. \eqref{eq:init_omega_subcritical} yields exactly $\Omegaaf_1=\avg{\mat{A}}$, i.e. Eq. \eqref{eq:sf_avg_uncorrected} is exactly the closed form of the survival function of the GLD, evident in Figs. \ref{fig:spd_er_smallcomponent} and \ref{fig:bipartite_cdf_smallcomponent} in Appendix \ref{sec:apdx_finite_size}.

\paragraph*{Interpretation}For an alternate comprehension of the expression in Eq. \eqref{eq:sf_avg_uncorrected}, apply a first-order approximation to its RHS---applicable for SEANs in the infinite-size limit---to obtain for node pair $(i,j)$: $\idx{\Psiacf_l}{ij}\approx 1-\sum_{k=1}^l\idx{\avg{\mat{A}}^k}{ij}$. The approximate closed form of the survival function \emph{$\idx{\Psiacf_l}{ij}$ can itself be approximated as the probability of $i$, which is on the giant component, failing to connect to $j$ (in the asymptotic limit) via any of the asymptotically independent walks of lengths $1$ through $l$.} To see how, note that $\avg{\mat{A}}=\bigtheta{n^{-1}}\textrm{ or }0\implies\avg{\mat{A}}^k=\bigtheta{n^{-1}}\textrm{ or }0$. If all walks of length $k$ between $i,j$ are independent of one another---an independent-walk condition---then $\idx{\avg{\mat{A}}^k}{ij}$ encodes the probability of a walk of length $k$ between them, since higher-order terms cancel out due to the sparsity of $\avg{\mat{A}}^k$ noted above. If walks of length $1,2,\dots, l$ are independent of one another---an independent-walk-length condition---then $\sum_{k=1}^l\idx{\avg{\mat{A}}^k}{ij}$ encodes the probability of a walk existing between $i,j$ of any length up to $l$ i.e. $\condprob{\lambda_{ij}\le l}{\phi_i}$, since  higher-order terms cancel out again due to sparsity. Then $1-\sum_{k=1}^l\idx{\avg{\mat{A}}^k}{ij}$ is $\condprob{\lambda_{ij}>l}{\phi_i}$, which is exactly the LHS of Eq. \eqref{eq:sf_avg_uncorrected}. (Intuitively, sparsity implies that, asymptotically, walks of finite length are unlikely to share edges, which in turn implies the asymptotic conditions of independent walks and independent walk lengths---which, as previously discussed, are more tenable for shorter geodesic lengths $l$ in finite-size networks.)  This interpretation provides a retrospective derivation for the approximate closed form of the survival function of the GLD.

\begin{figure}
    \centering
    \includegraphics[width=\columnwidth]{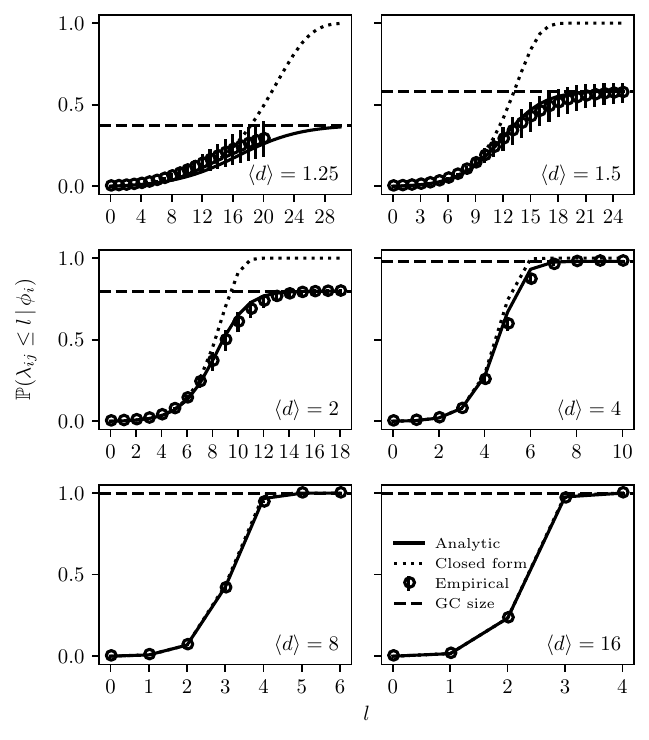}
    \caption{Empirical, analytic and closed-form CDF of geodesic lengths where the source node is on the giant component, for an ER graph with varying connectivity. Network size is fixed at $n=1024$, while mean degree varies as $\avgdegree\in\{1.25, 1.5, 2, 4, 8, 16\}$. Solid and dotted lines indicate analytic (Eqs. \eqref{eq:spd_main}, \eqref{eq:prob_connect_exact}, \eqref{eq:gcc_consistency}) and closed-form solutions (Eqs. \eqref{eq:prob_connect_exact}, \eqref{eq:gcc_consistency}, \eqref{eq:sf_avg}), respectively. Symbols ($\circ$) and bars indicate empirical estimates: mean and standard error over 10 network samples. Dashed asymptote indicates size of the giant component as estimated from the self-consistent Eq. \eqref{eq:gcc_consistency}. The analytic GLD is in good agreement for all connectivities at all lengths, while the closed-form GLD is in good agreement for all connectivities at shorter lengths (with deviation beginning around the mode).}
    \label{fig:spd_er}
\end{figure}

\section{\label{sec:general_graphs}Sparse general random network}

Since $\avg{\mat{A}}$ can encode any underlying statistical network model, we can consider equations for the GLD when we may not have access to the full ensemble average $\avg{\mat{A}}$, but knowledge of a (possibly inferred) statistical network model that treats both edges and node identities as random variables. Many widely used network models, from SBMs to RGGs, satisfy this condition. We consider a network to be a sparse general random network (SGRN; a special case of SEAN---see later) if it is generated from a network model for which the following assumptions hold.
\begin{enumerate}
    \item \emph{Independent node locations:} Nodes are independently located at some coordinate in a topological node space $V$ based on probability measure $\mu$---whose definition is independent of $n$
    \begin{equation}
        \label{eq:gnodelocation}
        X_i\sim\mu.
    \end{equation}
    \item \emph{Conditionally independent edges:} Conditioned on the identity of a node pair, the existence of an edge between them is independent of the edge between another node pair, and only depends on their respective locations in $V$ through a connectivity kernel $\nu:V\times V\to[0,1)$:
    \begin{equation}
        \label{eq:bernoulli_model_general}
        A_{ij}\sim\bernoulli{\nu(X_i,X_j)}.
    \end{equation}
    \item \label{eq:gsparsityassume} \emph{Sparsity:} If a node pair has a non-zero connection probability, then it is appropriately small:
    \begin{equation}
        \label{eq:sparsity_constraint_general}
        \nu(x,y)=\bigtheta{n^{-1}}\textrm{ or }0 \textrm{ almost everywhere}.
    \end{equation}
    \item \emph{Continuity:} The connectivity kernel $\nu$ is a continuous function from $V\times V$ to $[0,1)$.
    \item \emph{Asymptotic limit:} The network size is large: $n\gg 1$.
\end{enumerate}
We call the corresponding network model an SGRN model. The node space $V$ can take various forms---such as a discrete space for SBMs \cite{holland1983sbm}, a Euclidean space for RGGs \cite{penrose2003rgg}, or an inner product space for RDPGs \cite{young2007rdpg}, that we consider in detail in Sec. \ref{sec:geodesics_specific}. We enforce $A_{ii}=0$, i.e. no self-loops. For undirected networks, we additionally enforce that $\nu$ is symmetric, i.e. $\nu(x,y)=\nu(y,x)$, and, without loss of generality, use node indices as an arbitrary ordering: for $i<j$, $A_{ij}$ is generated independently from Eq. \eqref{eq:bernoulli_model_general}, and set $A_{ji}= A_{ij}$. We emphasize that while both directed and undirected networks can be generated using symmetric connectivity kernels, networks generated from asymmetric connectivity kernels must necessarily be directed; by permitting asymmetry, this model is an extension of the sparse inhomogeneous random graph model \cite{bollobas2007phase, bollobas2011sparsegraphs}, which we specifically consider in Sec. \ref{sec:graphons}.

We emphasize that SGRN is a special case of SEAN, since the assumptions of independent node locations and continuity of the connectivity kernel imply the assumption of no bottlenecks in Eq. \eqref{eq:nonultrabottlenecked}; see Lemma \ref{lemma:generalultrabottleneck} in Appendix \ref{sec:percprob}. Therefore, all asymptotic results derived in Sec. \ref{sec:spd} hold by replacing each sum over $n$ nodes by $n$ times the expectation over the node space $V$. Since nodes are completely identified by their locations in $V$, in this section we consider analogous properties of interest for given node location: the distribution of geodesic lengths between nodes at a given pair of locations $(x,y)\in V\times V$, and the percolation probability of a node at a given location $x\in V$.

\paragraph*{Analytic form of the GLD} Analogous to the survival function matrix $\Psiaf_l$ and conditional PMF matrix $\Omegaaf_l$ of the GLD in SEANs, defined in Eq. \eqref{eq:def_omega_psi}, we define the survival function kernel $\psiaf_l(\cdot,\cdot)$ and conditional PMF kernel $\omegaaf_l(\cdot,\cdot)$ of the GLD in SGRNs:
\begin{equation*}
\begin{split}
    \psiaf_l(x,y)&\triangleq \condprob{\lambda_{ij}>l}{\phi_i,X_i=x,X_j=y},\\
    \omegaaf_l(x,y)&\triangleq \condprob{\lambda_{ij}=l}{\lambda_{ij}>l-1,\phi_i,X_i=x,X_j=y},
\end{split}
\end{equation*}
where the source (target) node is at $x\in V$ ($y\in V$). Then, asymptotically, the sum over $n$ nodes in Eq. \eqref{eq:spd_main} becomes $n$ times the integral over node space $V$:
\begin{subequations}
    \label{eq:spd_main_general}
    \begin{align}\label{eq:spd_main_general_psi}
        \psiaf_l(x,y) &=[1 - \omegaaf_l(x,y)]\psiaf_{l-1}(x,y), \\
        \omegaaf_l(x,y) &\approx 1 - \exp\bigg(n\int_V\log(1-(\omegaaf_{l-1}\cdot\psiaf_{l-2})(x,z)\nonumber\\
        \label{eq:spd_main_general_omega}
        &\hspace{100pt}\times\nu(z,y))d\mu(z)\bigg).
    \end{align}
\end{subequations}
Eq. \eqref{eq:spd_main_general}, together with the initial conditions 
\begin{equation*}
    \begin{split}
        \omegaaf_1(x,y)&\triangleq \condprob{\lambda_{ij}=1}{\lambda_{ij}>0,\phi_i,X_i=x,X_j=y}\\&=\condprob{A_{ij}=1}{\phi_i,X_i=x,X_j=y},\\
        \psiaf_0(x,y)&\triangleq \condprob{\lambda_{ij}>0}{\phi_i,X_i=x,X_j=y}=1,
    \end{split}
\end{equation*}
completely define the distribution of geodesic lengths. We remark that $\psiaf_l(x,x)$ and $\omegaaf_l(x,x)$ encode the distribution of geodesic lengths between nodes with identical locations in $V$.

\paragraph*{Percolation probability} To define the initial condition $\omegaaf_1(\cdot,\cdot)$ we require percolation probabilities in $V$. As in Sec. \ref{sec:perc_prob}, let $$\rho(x)\triangleq\condprob{\phi_i}{X_i=x}$$ be the probability that a node at $x$ is on the giant component. Following the same argument as for the SEAN model, and replacing the sum over nodes by $n$ times the integral over node space, we can asymptotically write $\rho(\cdot)$ as the solution to a self-consistent integral equation analogous to Eq. \eqref{eq:gcc_consistency}:
\begin{equation}\label{eq:gcc_consistency_general}
    \rho(x) \approx 1-\exp\left(-n\int_V\nu(x,y)\rho(y)\,d\mu(y)\right).
\end{equation}
This leads to the base case, analogous to Eq. \eqref{eq:prob_connect_exact}:
\begin{equation}\label{eq:init_omega_general}
\omegaaf_1(x,y)\approx\nu(x,y)\left\{1+\left[\frac{1}{\rho(x)}-1\right]\rho(y)\right\},
\end{equation}
needed to solve Eq. \eqref{eq:spd_main_general}.

\paragraph*{Closed form of the GLD} Applying asymptotic approximations identical to those made for the SEAN model, we obtain a set of equations analogous to Eq. \eqref{eq:spd_analytic} for the closed form of the GLD:
\begin{subequations}
    \label{eq:spd_general}
    \begin{align}
    \label{eq:spd_general_omega}
    \omegacf_l(x,y) &\triangleq n\int_V\omegacf_{l-1}(x,z)\nu(z,y)\,d\mu(z),\\
    \label{eq:spd_general_psi}
    \psicf_l(x,y) &\triangleq \exp\left(-\sum_{k=1}^l\omegacf_k(x,y)\right),
    \end{align}
\end{subequations}
To our knowledge, this is the best first-order approximation that allows one to analytically access the GLD for sparse general random network families with conditionally independent edges, in both subcritical and supercritical regimes. Substituting $\omegaaf_1(x,y)$ in Eq. \eqref{eq:spd_general_omega} yields:
\begin{equation}
    \label{eq:spd_analytic_general_omega}
    \begin{split}        
    \omegacf_l(x,y) = n^{l-1}\int_V\int_V&\cdots\int_V\omegaaf_1(x,z_1)\nu(z_1,z_2)\dots\\
    \times&\nu(z_{l-1},y)\,d\mu(z_1)\,d\mu(z_2)\dots d\mu(z_{l-1}).
    \end{split}
\end{equation}
From Eqs. \eqref{eq:spd_general_psi} and \eqref{eq:spd_analytic_general_omega}, the closed form of the survival function of the GLD is encoded by the sum of iterated integrals.

\paragraph*{Integral operator} Viewing the expected adjacency matrix $\avg{\mat{A}}$ of the SEAN model as a linear operator acting on an $n$-dimensional vector space, it is convenient to define an analogous integral operator $T$ for the SGRN model acting on the space of functions in $V$, whose kernel is given by $n$ times the connectivity kernel:
\begin{equation}
    \label{eq:integral_op}
    (Tf)(x)\triangleq n\int_V\nu(x,y)f(y)\,d\mu(y).
\end{equation}
Due to sparsity assumed in Eq. \eqref{eq:sparsity_constraint_general}, $T$ is a compact operator; see Appendix \ref{sec:apdx_general_sparsity}. $T$ maps a function evaluated at node location $x$ to an expectation of the function evaluated at other node locations $y$, weighted by the node density at $y$ and the probability of a node at $y$ to connect to a node at $x$. For example, if $\forall x\in V: f(x)=1$, then $(Tf)(\cdot)$ is the expected out-degree function; see Eq. \eqref{eq:general_degrees_out}. When applied to the percolation probability function $\rho(x)$, the self-consistent Eq. \eqref{eq:gcc_consistency_general} can be written as $\rho(x)=1-\exp\left(-(T\rho)(x)\right)$. Therefore, many network quantities of interest can be extracted using $T$. For the rest of this section, we assume that the connectivity kernel $\nu$ is symmetric, implying that $T$ is self-adjoint, and refer the reader to Appendix \ref{sec:apdx_asymmetric} for a discussion on asymmetric connectivity kernels. This allows us to apply the spectral theorem for compact self-adjoint operators \cite{riesz1955hilbert}: Let $T$ have rank $N$, then there exists an orthonormal system of eigenfunctions $\{\varphi_i\}_{i=1}^N$ of $T$, corresponding to ordered non-zero eigenvalues $\{\tau_i\}_{i=1}^N$, such that $\{|\tau_i|\}_{i=1}^N$ is monotonically non-increasing with index $i$ and---due to compactness of $T$---either $N$ is finite, or $\lim_{i\to\infty}\tau_i=0$. Also, the following eigenfunction expansions hold:
\begin{subequations}
\begin{align}
    \label{eq:transform_rep}
    (Tf)(x) &= \sum_{i=1}^N\tau_i\left(\int_V f(y)\varphi_i(y)\,d\mu(y)\right)\varphi_i(x),\\
    \label{eq:kernel_rep}
    \nu(x,y) &= \frac{1}{n}\sum_{i=1}^N\tau_i\varphi_i(x)\varphi_i(y).
\end{align}
\end{subequations}

\paragraph*{Eigenvalues and homophily}Given Eq. \eqref{eq:kernel_rep}, we note that if an eigenvalue $\tau_i$ is positive (negative), then it raises the connection probability for node locations having the same (opposite) sign of the eigenfunction $\varphi_i(\cdot)$. Since nodes with the same (opposite) sign of an eigenfunction can be seen as being similar (dissimilar) along that ``dimension'', positive eigenvalues indicate \emph{homophily}---the phenomenon of similar nodes being more likely to connect to one another, that is widely observed in social networks \cite{mcpherson1987homophily, mcpherson2001birds}---whereas negative eigenvalues indicate \emph{heterophily}---dissimilar nodes being more likely to connect to one another, such as in multipartite graphs.

\paragraph*{Spectral closed form of the GLD}As in Sec. \ref{sec:closedform_bound}, substituting Eq. \eqref{eq:kernel_rep} in Eq. \eqref{eq:spd_analytic_general_omega}, we can integrate all intermediate variables from $z_{l-1}$ to $z_{2}$ by exploiting the orthonormality of $\{\varphi_i\}_{i=1}^N$:
\begin{equation*}
    \int_V\varphi_i(x)\varphi_j(x)\,d\mu(x)=\delta_{ij},
\end{equation*}
where $\delta_{ij}$ is the Kronecker delta which results in
\begin{equation}
    \label{eq:spd_analytic_general_eig_omega}
    \omegacf_l(x,y) = \int_V\sum_{i=1}^N\tau_i^{l-1}\omegaaf_1(x,z)\varphi_i(z)\varphi_i(y)\,d\mu(z).
\end{equation}
Inserting Eq. \eqref{eq:spd_analytic_general_eig_omega} in Eq. \eqref{eq:spd_general_psi}, we can push the outermost sum over length $k$ through due to compactness of $T$. Let $\widetilde{S}_l(a)\triangleq{1+a+\dots+ a^{l-1}}$ be the geometric sum of $a$ starting at $1$ up to $l$ terms---where $\widetilde{S}_0(a)\triangleq 0$---then the closed form of the survival function is given by:
\begin{equation}
\label{eq:spd_analytic_general_eig}
    \psicf_l(x,y) = \prod_{i=1}^N\exp\left(-\int_V\widetilde{S}_l(\tau_i)\omegaaf_1(x,z)\varphi_i(z)\varphi_i(y)\right).
\end{equation}

\paragraph*{Approximate closed form of the GLD}Similarly to Eq. \eqref{eq:sf_avg_uncorrected}, we can solve Eqs. \eqref{eq:spd_general_omega} and \eqref{eq:spd_general_psi}, with a na\"{i}ve initial condition analogous to Eq. \eqref{eq:prob_connect_apx}:
\begin{equation}
    \label{eq:prob_connect_apx_general}
    \omegaaf_1(x,y)\approxnaive\nu(x,y).
\end{equation}
Let $S_l(a)\triangleq{a+a^2+\dots+ a^{l}}$ be the geometric sum of $a$ starting at $a$ up to $l$ terms---where $S_0(a)\triangleq 0$. Then from Eq. \eqref{eq:spd_analytic_general_eig} we obtain the approximate closed form of the survival function, analogous to Eq. \eqref{eq:sf_avg_uncorrected}:
\begin{equation}
\label{eq:spd_analytic_general_eig_uncorrected}
    \psiacf_l(x,y) \triangleq \prod_{i=1}^N\exp\left(-\frac{1}{n}S_l(\tau_i)\varphi_i(x)\varphi_i(y)\right).
\end{equation}

\paragraph*{Reformulation as an extreme value distribution} If $\forall i\in[N]:\tau_i\ne 1$, we can write $S_l(\tau_i)=\tau_i\frac{\tau_i^l-1}{\tau_i-1}$. Define $a_i(x,y)\triangleq\frac{\tau_i\varphi_i(x)\varphi_i(y)}{n(\tau_i-1)}$, $b_i\triangleq\log|\tau_i|$, and $\sgn(\cdot)$ as the sign function. Then we can rewrite Eq. \eqref{eq:spd_analytic_general_eig_uncorrected} as:
\begin{equation}
    \label{eq:sf_gompertz}
    \psiacf_l(x,y)=\prod_{i=1}^N\exp\left(-a_i(x,y)\left[\sgn(\tau_i)^le^{b_il}-1\right]\right).
\end{equation}
Furthermore, if $\forall i\in[N]:\tau_i> 1$ and the nodes at $(x,y)$ are homophilous i.e. $\varphi_i(x)\varphi_i(y)>0$, then Eq. \eqref{eq:sf_gompertz} can be viewed as a product of survival functions of a (discrete version of the) Gompertz distribution, wherein the $i^\mathrm{th}$ term has shape and scale parameters $(a_i, b_i)$ that depend only on the eigenpair $(\tau_i,\varphi_i)$. Notably, the scale parameter $b_i$ of geodesic lengths is given by $\log\tau_i$. The Gompertz distribution---one of the three extreme value distributions with an exponential tail \cite{gnedenko1943distribution}---has been previously shown to model lengths of self-avoiding walks in ER graphs \cite{tishby2016sarw}.

\paragraph*{GLD for a random node pair}We can further define a distribution for the geodesic length between a randomly selected node pair in the network, which should be informative about geodesic lengths for typical node pairs, by averaging over the source and target node locations:
\begin{equation}
    \label{eq:spd_general_psi_agg}
    \psiacf(l) \triangleq \expectwrt{\mu^2}{\psiacf_l(x,y)}=\int_{V}\int_{V}\psiacf_l(x,y)\,d\mu(x)\,d\mu(y).
\end{equation}
We use the notation $\expectwrt{\mu^2}{\cdot}$ when averaging over $V\times V$. By applying Jensen's inequality \cite{jensen1906fonctions} to Eq. \eqref{eq:spd_general_psi_agg}---in the form $\exp(\expect{\cdot})\le\expect{\exp(\cdot)}$---and using Eq. \eqref{eq:spd_analytic_general_eig_uncorrected}, we can obtain a lower bound on $\psiacf(l)$, which will be tight if the variance in survival function over different node pairs is small:
\begin{equation}
    \label{eq:spd_general_psi_agg_bound}
    \psiacf(l) \ge \prod_{i=1}^N\exp\left(-\frac{1}{n}S_l(\tau_i)\expectwrt{\mu}{\varphi_i(x)}^2\right).
\end{equation}
This permits a description of the network's expected geodesic length in terms of the eigenvalues and expectation of eigenfunctions of $T$. Fig. \ref{fig:grgg_sbm_eig} illustrates the eigendecomposition of $T$ for a homophilous stochastic block model over $V=\{1,2,\dots, 32\}$, constructed by discretizing a Gaussian random geometric graph over $V=\real$.
\begin{figure}
    \centering
    \includegraphics[width=\columnwidth]{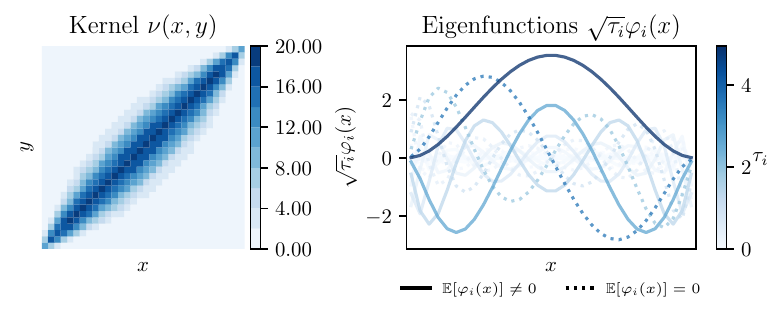}
    \caption{The closed-form GLD can be decomposed over the eigenfunctions $\{\varphi_i(x)\}_{i=1}^N$ of the integral operator $T$ of the connectivity kernel $\nu(x,y)$ (Eqs. \eqref{eq:spd_analytic_general_eig}, \eqref{eq:spd_analytic_general_eig_uncorrected}), shown here for a 32-block SBM (over $V=[32]$) formed by discretizing a Gaussian RGG (over $V=\real$); see Fig. \ref{fig:spl_grgg_sbm_rank1} in Appendix \ref{sec:apdx_general} for model details. When sorted by decreasing eigenvalues $\tau_i$, eigenfunctions $\varphi_i$ of even-numbered $i$ do not contribute to the bound on the approximate closed form of the survival function of the GLD for an average node pair, in Eq. \eqref{eq:spd_general_psi_agg_bound}.}
    \label{fig:grgg_sbm_eig}
\end{figure}

\section{\label{sec:geodesics_specific}Key statistical network models}
In this section, we consider network models that are important special cases of the SGRN model considered in Sec. \ref{sec:general_graphs}. We focus on symmetric connectivity kernels, equivalent to the sparse inhomogeneous random graph models of Ref. \onlinecite{bollobas2007phase}. In particular, we elaborate on the percolation probability of nodes, and the approximate closed form of the survival function of GLD between node pairs---which is asymptotically tight for finite lengths in the supercritical regime, when the network is almost surely fully connected, and for all lengths in the subcritical regime.

\subsection{Stochastic block model (SBM)}\label{sec:sbm}

Stochastic block models (SBMs) have been widely used for modeling social networks \cite{holland1983sbm, karrer2011dcsbm}, since they directly capture notions of social homophily that ``like befriends like'' \cite{mcpherson1987homophily, mcpherson2001birds}, and social segregation \cite{moody2001race}. They are discrete space models wherein nodes are divided into communities or ``blocks'', whose probabilistic connections are modeled by block-level parameters. In essence, SBMs are a form of ``network coarsening'' wherein edges between nodes can be aggregated into edges between blocks of nodes \cite{peixoto2014nestedsbm}. This property can be leveraged to study the GLD in empirical networks (see Sec. \ref{sec:graphcoarsening}). Theorem \ref{lemma:general_sbm_equiv} in Appendix \ref{sec:apdx_general} shows that there exists an $\epsilon$-equivalent SBM for any SGRN model, that is, one can approximate a continuous space model by an SBM via discretization up to a desired level of accuracy. Establishing a framework for the GLD in SBMs therefore yields applications in a large variety of settings.

\paragraph*{Setup} Consider a discrete node space $V=[k]$ wherein each of $n$ nodes belong to one of $k$ blocks, according to a categorical distribution: $\mu(x)=\pi_x$ where $\vect{\pi}$ is a $(k-1)$-standard simplex vector in $\real^k$ called the ``distribution vector''. The probability of two nodes connecting to each other depends on the blocks they belong to: $\nu(x,y)=B_{xy}/n$ where $B_{xy}\ge 0$ measures the ``affinity'' between blocks $x$ and $y$ and is called the ``block matrix'' $\mat{B}$. Using Eq. \eqref{eq:gcc_consistency_general} we obtain an equation for the length-$k$ ``block percolation'' vector $\boldsymbol{\rho}$, where $\rho_x$ is the percolation probability for a node in block $x$:
\begin{equation}
    \label{eq:gcc_consistency_sbm}
    \boldsymbol{\rho}\approx\ones{k}-\exp\left(-\mat{B}\mat{\Pi}\boldsymbol{\rho}\right),
\end{equation}
where we define $\mat{\Pi}\triangleq\diag{\boldsymbol\pi}$ as the diagonal distribution matrix, and $\ones{k}$ is the all-ones vector of length $k$.

\paragraph*{Analytic form of the GLD}From Eq. \eqref{eq:spd_main_general} we get recursive equations for $k\times k$ ``block survival function'' and ``block cumulative PMF'' matrices $\Psiaf_l, \Omegaaf_l$ respectively:
\begin{subequations}
    \label{eq:spd_main_sbm}
    \begin{align}
        \Psiaf_l &= (\ones{k}\ones{k}^T - \Omegaaf_l)\odot\Psiaf_{l-1}, \\
        \Omegaaf_l &\approx \ones{k}\ones{k}^T - \exp \Bigg(n\sum_{x=1}^k\pi_x\log\Bigg(\ones{k}\ones{k}^T\nonumber\\
        &\hspace{100pt}-\frac{[\Omegaaf_{l-1}\odot\Psiaf_{l-2}]_{:x}[\mat{B}]_{x:}}{n}\Bigg)\Bigg),
    \end{align}
\end{subequations}
with the initial condition from Eq. \eqref{eq:init_omega_general} yielding:
\begin{equation}
    \label{eq:spd_sbm_init}
    \Omegaaf_1\approx\frac{\mat{B} + (\mat{R}^{-1}-\eyes{k})\mat{BR}}{n},
\end{equation}
where $\mat{R}\triangleq\diag{\boldsymbol{\rho}}$, $\Psiaf_0=\ones{n}\ones{n}^T$, and $\eyes{k}$ is the identity matrix of size $k\times k$. We have used the notation $[\cdot]_{:i}$ ($[\cdot]_{i:}$) to indicate the $i^\mathrm{th}$ column (row) vector of the matrix. Fig. \ref{fig:bipartite_cdf} shows the distribution of geodesic lengths between nodes in a $2$-block SBM with a bipartite structure, obtained by solving Eqs. \eqref{eq:gcc_consistency_sbm}--\eqref{eq:spd_sbm_init}, which is in good agreement with the empirical GLD. Refer to Fig. \ref{fig:bipartite_cdf_smallcomponent} in Appendix \ref{sec:apdx_finite_size} for the GLD of a bipartite SBM in a subcritical regime.
\begin{figure}
    \centering
    \includegraphics[width=\columnwidth]{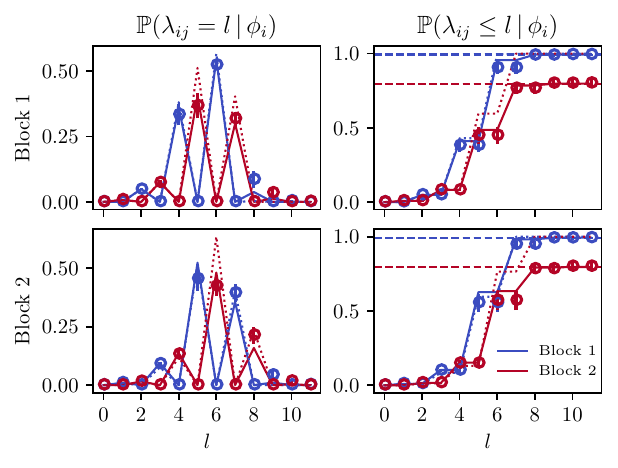}
    \caption{Empirical, analytic, and approximate closed-form CDF of geodesic lengths where the source node is on the giant component, agree with each other for a bipartite SBM. The block matrix $\mat{B}=\big(\begin{smallmatrix} 0 & 8\\ 8 & 0 \end{smallmatrix}\big)$, distribution vector $\boldsymbol\pi=(0.2, 0.8)$, and network size $n=1024$. Rows correspond to the block membership of source node. Left column depicts the PMF, which highlights bipartitivity of the network, and the right column depicts the CDF, whose tail value agrees with the percolation probability of target node, indicated by the dashed asymptote and calculated from Eq. \eqref{eq:gcc_consistency_sbm}. Solid lines represent analytic form using Eqs. \eqref{eq:gcc_consistency_sbm}--\eqref{eq:spd_sbm_init}, while dotted lines represent approximate closed form using Eq. \eqref{eq:spd_sbm}, and symbols ($\circ$) with bars represent empirics, i.e. mean and standard error over 10 samples.}
    \label{fig:bipartite_cdf}
\end{figure}

\paragraph*{Approximate closed form of the GLD}Using Eq. \eqref{eq:spd_analytic_general_omega} yields the survival function of the GLD as the summation over matrix powers (see Appendix \ref{sec:apdx_sbm}), producing the $k\times k$ block survival function matrix:
\begin{equation}
    \label{eq:spd_sbm}
    \Psiacf_l = \exp\left(-\frac{\mat{B}}{n}\sum_{i=1}^l(\mat{\Pi B})^{i-1}\right),
\end{equation}
analogous to Eq. \eqref{eq:sf_avg_uncorrected} wherein $\mat{\Pi B}$ works analogously to $\avg{\mat{A}}$. Fig. \ref{fig:bipartite_cdf} shows the approximate closed form of the GLD between nodes in a $2$-block bipartite SBM, obtained by solving Eq. \eqref{eq:spd_sbm}. As previously discussed in Sec. \ref{sec:closedform_bound}, the approximate closed form GLD agrees with the empirical GLD for shorter geodesic lengths in finite-size networks. If $\mat{\Pi B}-\eyes{k}$ is non-singular, we can evaluate the expression in Eq. \eqref{eq:spd_sbm} as a matrix series to obtain $$\Psiacf_l = \exp\left(-\frac{\mat{B}}{n}((\mat{\Pi B})^l-\eyes{k})(\mat{\Pi B}-\eyes{k})^{-1}\right).$$ Generically $\Psiacf_l$ is a matrix, but for the special case of an ER graph---where the probability of an edge is identical across all node pairs i.e. $k=1$, $\mat{B}\triangleq\avgdegree\ones{1}\ones{1}^T$, $\mat{\Pi}\triangleq \eyes{1}$---the survival function is a scalar:
\begin{equation}
    \label{eq:spd_er}
    \psiacf(l) = \begin{cases}
    \exp\left(-\frac{\avgdegree(\avgdegree^l-1)}{n(\avgdegree-1)}\right) &\mbox{if }\avgdegree\ne 1,\\
    \exp\left(-\frac{l}{n}\right) &\mbox{otherwise.}
    \end{cases}
\end{equation}
Evidently, larger the mean degree, shorter the geodesic lengths in the network. Previous work has demonstrated a closed-form expression for the survival function of geodesic lengths in ER graphs, given by Eq. (14) in Ref. \onlinecite{fronczak2004average} as $\psiacf(l) = \exp\left(-\frac{\avgdegree^l}{n}\right)$ which is in disagreement with Eq. \eqref{eq:spd_er}. Particularly at $l=0$, Eq. \eqref{eq:spd_er} will correctly evaluate to $1$, while the other to $\exp(-n^{-1})$. Notably, as discussed in Sec. \ref{sec:closedform_bound} and unlike Eq. (14) in Ref. \onlinecite{fronczak2004average}, Eq. \eqref{eq:spd_er} works especially well when $\avg{d}<1$ and the network is in the subcritical regime; see Fig. \ref{fig:spd_er_smallcomponent} in Appendix \ref{sec:apdx_finite_size} where the empirics and analytics from Eq. \eqref{eq:spd_er} agree well for subcritical ER graphs of varying mean degree.

\paragraph*{Illustrative example}For a less trivial example, we consider a $k$-block SBM with equi-sized blocks and constant mean degree $\avgdegree$, such that $\mat{B}=\delta \eyes{k}+\left(\avgdegree-\delta/k\right)\ones{k}\ones{k}^T$, where $\delta\in[-\avgdegree k/(k-1),\avgdegree k]$ quantifies the amount of homophily---positive $\delta$, wherein nodes are more likely to connect to nodes from the same block---or heterophily---negative $\delta$, wherein nodes are more likely to connect to nodes from other blocks. As before, let $S_l(a)$ be the geometric sum of $a$ starting at $a$ up to $l$ terms, then from Eq. \eqref{eq:spd_sbm} the survival function block matrix is given by (see Appendix \ref{sec:apdx_sbm})
\begin{equation}
    \label{eq:spd_sbm_k}
    \Psiacf_l = \exp\boldsymbol{\bigg(}-\frac{1}{n}\left\{S_l(\delta/k)k\eyes{k}+[S_l(\avgdegree)-S_l(\delta/k)]\ones{k}\ones{k}^T\right\}\boldsymbol{\bigg)}.
\end{equation}
The form of $\Psiacf_l$ is analogous to that of the block matrix $\mat{B}$---naturally $\Psiacf_1=\mat{B}$, but for larger $l$ the off-diagonal (inter-block) elements of the block survival function matrix evolve as the exponential of difference in geometric sums of $S_l(\delta/k)$ and $S_l(\avgdegree)$. In particular, consider a 2-block perfectly heterophilous SBM, i.e. $k=2$ and $\delta/k=-\avgdegree$, which corresponds to a bipartite network since nodes never connect directly with nodes of their own block. Consequently, all paths between nodes of different blocks must be of odd length. The expression in Eq. \eqref{eq:spd_sbm_k} correctly suggests that the inter-block survival function does not change for even values of $l$ due to canceling out of the even powers of $\avgdegree$, that is, there is no probability mass at even values of $l$.

\paragraph*{Spectral approximate closed form of the GLD}We next consider a general SBM with a symmetric block matrix $\mat{B}$; let $\mat{Q\Lambda Q}^T$ be the eigendecomposition of the symmetric matrix $\mat{\Pi}^\frac{1}{2}\mat{B\Pi}^\frac{1}{2}$ such that columns of the orthogonal matrix $\mat{Q}$ encode the eigenvectors and the diagonal matrix $\mat{\Lambda}$ encodes the corresponding eigenvalues. Then, we can write $(\mat{\Pi B})^{i-1}=\mat{\Pi}^\frac{1}{2}\mat{Q\Lambda}^{i-1}\mat{Q}^T\mat{\Pi}^{-\frac{1}{2}}$. Putting in Eq. \eqref{eq:spd_sbm}, we obtain 
\begin{equation}
    \label{eq:spd_sbm_eig}
    \Psiacf_l = \exp\left(-\frac{1}{n}\mat{\Pi}^{-\frac{1}{2}}\mat{Q}S_l(\mat{\Lambda})\mat{Q}^T\mat{\Pi}^{-\frac{1}{2}}\right),
\end{equation}
which is the matrix analogue of Eq. \eqref{eq:spd_analytic_general_eig_uncorrected}. Clearly, for general symmetric block matrices, a weighted geometric sum of eigenvalues of $\mat{B\Pi}$ governs the whole GLD, with positive eigenvalues---indicating homophily---and negative eigenvalues---indicating heterophily---contributing differently to the GLD. An equation similar to---but different from---Eq. \eqref{eq:spd_sbm_eig} has been previously described for the GLD in SBMs with the additional constraint of an irreducible and aperiodic block matrix, in terms of the limiting random variables of two independent branching processes initiated at the source and target nodes of a network \cite{barbourreinert2011bipartitesbm}. Since the expectation of the limiting random variables is unity, Eq. \eqref{eq:spd_sbm_eig} can be partially interpreted as an application of Jensen's inequality \cite{jensen1906fonctions} to Eq. (6.31) in Ref. \onlinecite{barbourreinert2011bipartitesbm}---in the form $\exp(\expect{\cdot})\le\expect{\exp(\cdot)}$---although we note that Eq. (6.31) in Ref. \onlinecite{barbourreinert2011bipartitesbm} ignores the contribution of non-leading eigenvalues. Consequently, in the scenario where the non-leading eigenvalues are negligible, Eq. \eqref{eq:spd_sbm_eig} provides an asymptotic lower bound to the survival function of the GLD from Eq. (6.31) in Ref. \onlinecite{barbourreinert2011bipartitesbm}, which aligns well with the derivation of the closed form in Sec. \ref{sec:closedform_bound}.

\subsection{\label{sec:graphcoarsening}Labeled empirical network}

Previous work in estimating dissimilarity measures on empirical networks has determined a Gibbs–Boltzmann distribution on picking a path between two nodes over a bag-of-paths in a given network \cite{franccoisse2017bagofpaths}. However, this bag-of-paths approach does not directly model the distribution of geodesic lengths between node pairs. Our proposed approach provides the desired distribution, but for networks generated by an underlying sparse statistical network model with conditionally independent edges. One way to apply our approach is to first infer such a model given the observed network; inference can be performed in a myriad of ways \cite{newman2016inferenceannotated, newman2018inference, goldenberg2010inference}, but may potentially incur computational overhead if the network is very large.

A computationally cheap form of inference exploits the network-coarsening property of SBMs described in Sec. \ref{sec:sbm}, wherein nodes are completely defined by their block membership. This has the advantage of reducing the parametrization from one in terms of a large number of nodes $n$, to one in terms of a small number of blocks $k\ll n$. For a network represented by the adjacency matrix $\mat{A}$, with known node labels indicated by the $n\times k$ assignment matrix $\mat{Z}$, assuming that $(\mat{A},\mat{Z})$ is generated by a $k$-block SBM permits a maximum likelihood estimate of its parameters $(\mat{B},\vect{\pi})$ regardless of how node labels were determined, and with no computational cost except for summation of node and edge counts. We refer the reader to Appendix \ref{sec:apdx_gcsbm} for more details. The GLD associated with the inferred SBM can then be used to study geodesic lengths in the empirical network.

\paragraph*{Illustrative example: completely observed network} We consider a real-world network of email communications between $n=986$ members of a European research institution \cite{snapeuemail, snapcollab-euemail-gnutella, snapnets} denoted by the adjacency matrix $\mat{A}_\text{eue}$. Each node has an attribute corresponding to one of the $k=42$ departments which the individual belongs to, which can serve to provide the assignment matrix $\mat{Z}_\text{dep}$ at no additional cost. We can also derive more meaningful community labels for the nodes by applying community detection methods \cite{girvan2002community, newman2006modularity}, such as modularity maximization \cite{clauset2004modmax, hagberg2008networkx}, which can provide a different assignment $\mat{Z}_\text{mod}$ across $k=5$ modules. We also infer a hierarchical SBM for this network that can infer blocks at multiple levels of coarsening \cite{peixoto2014nestedsbm, peixoto2014graphtool}, generating a hierarchy of labels yielding assignments $\mat{Z}_\text{sbm2}, \mat{Z}_\text{sbm3}$ at levels 2 and 3 of the inferred hierarchical SBM, possessing $k=43$ and $k=9$ blocks respectively. Using Eq. \eqref{eq:sbm_mle} we can derive corresponding SBMs, and the analytic form of the GLD between block pairs using Eq. \eqref{eq:spd_main_sbm}. In Fig. \ref{fig:spl_statistics_empirical} we compare the empirical and analytic means of geodesic lengths for every block pair in this network, which are in good agreement for all assignment procedures considered. Notably, this includes $\mat{Z}_\text{dep}$ which requires no additional computational overhead. The agreement is stronger for block pairs with shorter geodesics between them. The departure for longer geodesics is likely due to correlations within $\mat{A}_\text{eue}$, wherein longer-than-expected geodesics would beget even longer geodesics, resulting in the analytics mostly underestimating the empirics. 

\begin{figure}
    \centering
    \includegraphics[width=\columnwidth]{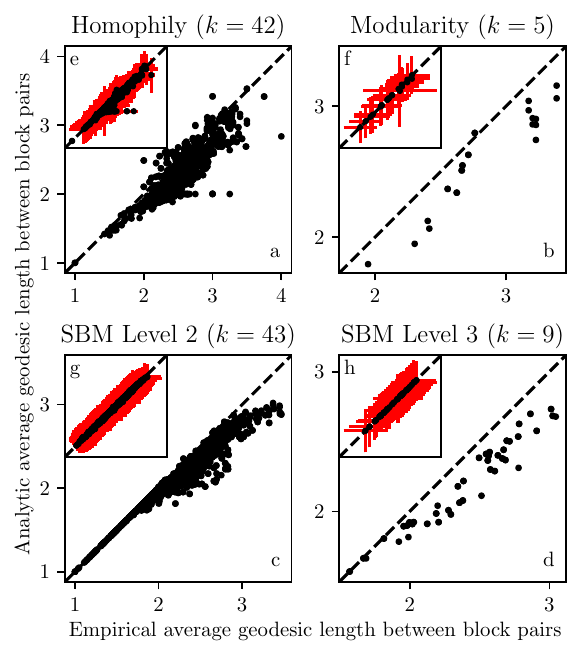}
    \caption{The empirical average geodesic lengths of a real-world email network $\mat{A}_\text{eue}$ (on the $x$-axis) are well-approximated by the average obtained from the analytic form of the GLD (on the $y$-axis, using Eqs. \eqref{eq:gcc_consistency_sbm}, \eqref{eq:spd_main_sbm}, \eqref{eq:spd_sbm_init}), when nodes with the same ``label'' are grouped into a single block to form an SBM with $k$ blocks. Panels (a--d) correspond to different labelings: (top-left) $\mat{Z}_\text{dep}$ leverages the homophily assumption by using a node attribute as the label---here, the department of the e-mailer; (top-right) $\mat{Z}_\text{mod}$ uses modularity maximization \cite{clauset2004modmax, hagberg2008networkx} to infer network modules assigned as the label; (bottom) $\mat{Z}_\text{sbm2}, \mat{Z}_\text{sbm3}$ use a hierarchical SBM \cite{peixoto2014nestedsbm, peixoto2014graphtool} to infer a hierarchy of blocks which allows for multi-level coarsening. Points indicate the mean geodesic length between nodes of block pairs. To confirm that any deviations are only due to suboptimal SBM fits or having a \emph{single} empirical network, insets (e--h) show the mean (black markers) and standard deviation (red bars) of geodesic lengths between block pairs, averaged over 10 samples of the corresponding SBM fit via coarsening. Both the mean and standard deviation of the empirical GLD are well approximated by the analytic GLD.} 
    \label{fig:spl_statistics_empirical}
\end{figure}

\paragraph*{Illustrative example: partially observed network} To demonstrate the empirical power of the GLD, we consider a scenario where the network is not fully observed but there exists a reasonable statistical model for how it could have been generated. Facebook's individual-level friendship network is very large, and can help one understand social segregation at a national scale. However, analyzing its global properties like shortest paths is infeasible, due to privacy concerns and computational intractability. Facebook's social connectedness index dataset encodes, at the region level, the probability of two individuals in any two region pairs to be friends with each other on the platform \cite{scidata}, thus circumventing the concerns of viewing or analyzing a fully observed network. By assuming an SBM, wherein every block is a sub-national region, we can apply our GLD framework to understand segregation on the Facebook friendship network at a sub-national level via shortest paths. Fig. \ref{fig:uk_map_aspl} shows analytic estimates of the mean geodesic lengths for the Facebook friendship network across different regions of the United Kingdom, computed using the approximate closed form of the GLD in Eq. \eqref{eq:spd_sbm}, and compares them against the mean degrees. Evidently, the mean degrees and mean geodesic lengths are negatively correlated, with regions in London and Wales having respectively some of the lowest/highest degrees and longest/shortest geodesic lengths, and regions of England lying in between. However, regions in Scotland and Northern Ireland have a longer mean geodesic length relative to what would be expected based only on their mean degrees, likely due to high spatial homophily: our framework reveals higher-order structural information which the mean degrees alone miss out on.

\begin{figure*}
    \centering
    \includegraphics[width=2\columnwidth]{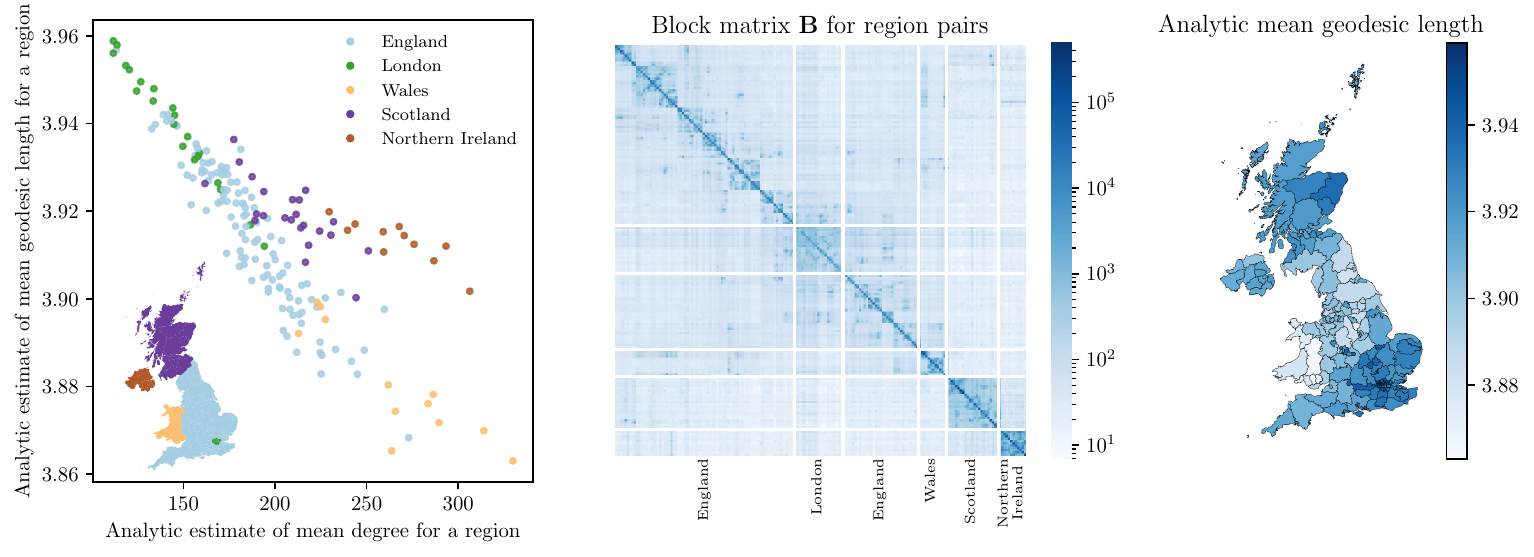}
    \caption{Analytic estimation of mean geodesic lengths in partially observed empirical networks: Sub-national mapping of mean geodesic lengths in the Facebook friendship network for different regions in the United Kingdom, computed from the approximate closed form of the GLD in Eq. \eqref{eq:spd_sbm} by assuming a stochastic block model whose block matrix $\mat{B}$ is given by the social connectedness index data \cite{scidata}, appropriately scaled such that the mean degree is $189.4$ \cite{backstrom2010fbgeo}, and whose distribution vector $\vect{\pi}$ is given by the true population distribution \cite{ukeurostatspopest}. The mean degrees and mean geodesic lengths are negatively correlated (left), but regions in Scotland and Northern Ireland have a longer mean geodesic length relative to what would be expected based only on their mean degrees, likely due to high spatial homophily evident in the relatively higher diagonal dominance of the block matrix $\mat{B}$ (center). The GLD reveals higher-order structural information which the mean degrees alone miss out on.}
    \label{fig:uk_map_aspl}
\end{figure*}

\subsection{\label{sec:rdpg:}Random dot product graph (RDPG)}

While SBMs are used often due to their simplicity and ability to model any community structure, they can be limited by their discreteness. This has led to the exploration of continuous space network models of which the random dot product graph (RDPG) \cite{young2007rdpg} is an important example. Consider some $k$-dimensional bounded real vector space $X \subset\real^k$ wherein nodes are ``embedded'' such that the probability of a pair of nodes at $(\boldsymbol{x},\boldsymbol{y})\in X\times X$ connecting is proportional to a function of the dot product of their positions $\boldsymbol{x}^T\boldsymbol{y}$. This notion of ``graph embeddings'' is especially popular in statistical machine learning, wherein continuous representations of discrete objects (such as graphs or networks) are learned \cite{hamilton2017graphrepresentation, cai2018surveygraphembedding, goyal2018graphembedding}, making them amenable to downstream prediction tasks. Many approaches for generating such an embedding rely on the dot product model: techniques based on matrix factorization, like graph factorization \cite{ahmed2013graphfactorization}, assume that the probability of nodes located at $\boldsymbol{x}$ and $\boldsymbol{y}$ to connect is proportional to $\boldsymbol{x}^T\boldsymbol{y}$, while random-walk based techniques, like ``node2vec'' \cite{grover2016node2vec}, assume that the connection probability is proportional to $\exp(\boldsymbol{x}^T\boldsymbol{y})$.

\paragraph*{Symmetric \& positive semi-definite kernels}We emphasize that a sparse RDPG is a special case of the SGRN model described in Sec. \ref{sec:general_graphs} when the connectivity kernel is symmetric and positive semi-definite, i.e. all eigenvalues of $T$ (Eq. \eqref{eq:integral_op}) are non-negative \footnote{Contrary to SBMs, this excludes the possibility of heterophilous structures like bipartitivity, but results in uniform absolute convergence of the kernel's eigenexpansion in Eq. \eqref{eq:kernel_rep} \cite{riesz1955hilbert, mercer1909kernel}.}. In other words, any positive semi-definite kernel can be written as a dot product in some feature space (see Appendix \ref{sec:apdx_rdpg}). Therefore, for the purpose of deriving the GLD, it is sufficient to consider the simplest setting of $V$ being Euclidean and the kernel being linear in the dot product. For a connectivity kernel that is non-linear in the dot product, we can use random Fourier features \cite{rahimi2007random} to derive an explicit feature map to a space in which the connectivity kernel is linear in the dot product. We refer the reader to Appendix \ref{sec:apdx_rdpg} for an illustrative example with a non-linear dot product connectivity kernel.

\paragraph*{Setup}Consider nodes in a non-negative bounded subspace $V=X\subset\real^k_{\ge 0}$ with density $\mu$, and the connectivity kernel $\nu(\boldsymbol{x}, \boldsymbol{y}) = \beta\boldsymbol{x}^T\boldsymbol{y}/n$ where $\beta> 0$. This is a common setting for RDPGs: in the canonical degree-configuration model where $k=1$, $X$ encodes precisely the expected degree of a node, with the node density $\mu$ governing the degree distribution \cite{young2007rdpg}. We define:
\begin{subequations}
\begin{align}
    \label{eq:def_rdpg_meanvec}
    \boldsymbol{\phi}&\triangleq\int_X \boldsymbol{x}\,d\mu, \\
    \label{eq:def_rdpg_mommat}
    \mat{\Phi}&\triangleq \beta\int_X \boldsymbol{x}\boldsymbol{x}^T\,d\mu,
\end{align}
\end{subequations}
where $\boldsymbol{\phi}$ indicates the length-$k$ mean vector in $X$, and $\mat{\Phi}$ refers to the $k\times k$ matrix of second moments scaled by $\beta$ that encodes the covariance in space $X$ as per the measure $\mu$ and is necessarily positive semi-definite. Given that $\rho(\boldsymbol{x})$ encodes the percolation probability for a node at $\boldsymbol{x}$, define $\boldsymbol{\rho}\triangleq\int_X \boldsymbol{x}\rho(\boldsymbol{x})\,d\mu$ to be the ``mean percolation vector'' in $X$, and Eq. \eqref{eq:gcc_consistency_general} yields:
\begin{subequations}
\label{eq:gcc_consistency_rdpg}
\begin{align}
    \label{eq:gcc_consistency_rdpg_rhox}
        &\rho(\boldsymbol{x})\approx 1-\exp\left(-\beta\boldsymbol{x}^T\int_X\boldsymbol{y}\rho(\boldsymbol{y})\,d\mu\right)\nonumber\\&\hspace{20pt}=1-\exp\left(-\beta\boldsymbol{x}^T\boldsymbol{\rho}\right),\\
    \label{eq:gcc_consistency_rdpg_rho}
    \begin{split}
        \implies&\int_X\boldsymbol{x}\rho(\boldsymbol{x})\,d\mu\approx \int_X\boldsymbol{x}\,d\mu-\int_X\boldsymbol{x}\exp \left(-\beta\boldsymbol{x}^T\boldsymbol{\rho}\right)d\mu,\\
        \implies&\boldsymbol{\rho}\approx\boldsymbol{\phi}-\int_X\boldsymbol{x}\exp\left(-\beta\boldsymbol{x}^T\boldsymbol{\rho}\right)d\mu,
    \end{split}
\end{align}
\end{subequations}
where we apply the definition of $\boldsymbol\phi$ from Eq. \eqref{eq:def_rdpg_meanvec}. Eq. \eqref{eq:gcc_consistency_rdpg_rho} is a self-consistent vector equation for $\boldsymbol{\rho}$ which once solved can be used to solve the self-consistent scalar Eq. \eqref{eq:gcc_consistency_rdpg_rhox} for the percolation probability of any node location. To solve the former, we can make use of a discrete SBM approximation of the $k$-dimensional RDPG---as described in Appendix \ref{sec:apdx_general}---and solve for $\boldsymbol{\rho}$ numerically via function iteration.

\paragraph*{Approximate closed form of the GLD}For this and subsequent sections (Secs. \ref{sec:rgg} and \ref{sec:graphons}), we focus on the approximate closed form of the survival function of the GLD. In particular, Eq. \eqref{eq:spd_general_omega} for the conditional PMF would read as $\omegaacf_l(\boldsymbol{x},\boldsymbol{y})=\beta\boldsymbol{x}^T\left[\beta\int_X\boldsymbol{z}\boldsymbol{z}^T\,d\mu(\boldsymbol{z})\right]^{l-1}\boldsymbol{y}/n$, translating Eq. \eqref{eq:spd_general_psi} for the survival function into
\begin{equation}
    \label{eq:spd_rdpg}
    \psiacf_l(\boldsymbol{x}, \boldsymbol{y}) = \exp\left(-\frac{\beta}{n}\boldsymbol{x}^T\left(\sum_{k=0}^{l-1}\mat{\Phi}^k\right)\boldsymbol{y}\right),
\end{equation}
where we use the definition of $\mat{\Phi}$ from Eq. \eqref{eq:def_rdpg_mommat}. Since $\mat{\Phi}$ is symmetric, let $\mat{Q\Lambda Q}^T$ be its eigendecomposition, such that columns of the orthogonal matrix $\mat{Q}$ encode the eigenvectors and the diagonal matrix $\mat{\Lambda}$ encodes corresponding eigenvalues. Let $\widetilde{S}_l(a)$ be the geometric sum of $a$ starting at $1$ up to $l$ terms. Then Eq. \eqref{eq:spd_rdpg} provides:
\begin{equation*}
    \psiacf_l(\boldsymbol{x}, \boldsymbol{y}) = \exp\left(-\frac{\beta}{n}\boldsymbol{x}^T\mat{Q}\widetilde{S}_l(\mat{\Lambda})\mat{Q}^T\boldsymbol{y}\right).
\end{equation*}
This is analogous to the expressions obtained via eigendecomposition for the SGRN model in Eq. \eqref{eq:spd_analytic_general_eig_uncorrected}, for ER graphs in Eq. \eqref{eq:spd_er}, and for SBMs in Eq. \eqref{eq:spd_sbm_eig}. Alternately, consider the expected survival function of the whole network in Eq. \eqref{eq:spd_general_psi_agg}. By applying Jensen's inequality \cite{jensen1906fonctions} to Eq. \eqref{eq:spd_rdpg}, we obtain a lower bound on $\psiacf(l)$, similar to the one in Eq. \eqref{eq:spd_general_psi_agg_bound}, which will be tight if the variance in survival function over node pairs is small:
\begin{equation}
    \label{eq:spd_rdpg_network}
    \psiacf(l) \ge \exp\left(-\frac{\beta}{n}\boldsymbol{\phi}^T\widetilde{S}_l(\mat{\Phi})\boldsymbol{\phi}\right),
\end{equation}
where we use the definition of $\boldsymbol\phi$ from Eq. \eqref{eq:def_rdpg_meanvec}. This permits a description of the network's expected geodesic length entirely in terms of the first and second moments in $X$, which can be especially useful when we have access to just the sample mean and covariance, instead of the full distribution in $X$.

\paragraph*{Illustrative example} We consider $X$ restricted to the $(k-1)$-standard simplex in $\real^k$, and $\mu$ corresponding to the Dirichlet distribution on that simplex given by the concentration vector $\boldsymbol{\alpha}\in[0,\infty)^k$. This allows us to interpret the node's location as the probability of belonging to one of $k$ communities located at the corners of the simplex---a continuous analogue of the SBM. Let $\avgdegree$ be the mean degree of the network, and $\bar\alpha\triangleq\boldsymbol\alpha^T\ones{k}$, then it can be shown that the approximate closed form of the conditional PMF of the GLD is given by: $\omegaacf_l(\boldsymbol{x},\boldsymbol{y})=\frac{\avgdegree\bar\alpha^2}{n\left\lVert\boldsymbol{\alpha}\right\rVert^2}\boldsymbol{x}^T\left\{\frac{\avgdegree\bar\alpha}{\left\lVert\boldsymbol{\alpha}\right\rVert^2(1+\bar\alpha)}\left[\diag{\boldsymbol{\alpha}}+\boldsymbol{\alpha}\boldsymbol{\alpha}^T\right]\right\}^{l-1}\boldsymbol{y}$ (see Appendix \ref{sec:apdx_rdpg}). In Fig. \ref{fig:spl_rdpg}, we plot various node and node pair statistics for a ``Dirichlet RDPG'' when $n=512, \avgdegree=4$ and $\boldsymbol\alpha=\{0.8, 0.8, 2\}$ using Eq. \eqref{eq:degree_rdpg} for a node's degree, Eq. \eqref{eq:gcc_consistency_rdpg_rhox} for a node's percolation probability, and the approximate closed form of the GLD in Eq. \eqref{eq:spd_rdpg} to compute an analytic estimate of expected geodesic length between node pairs. In Fig. \ref{fig:spl_emp_vs_ana} we further show that this analytic estimate of the geodesic length is in good agreement with the empirical estimates.

\begin{figure*}
    \centering
    \subfloat[Dirichlet RDPG ($k=3$)]{\label{fig:spl_rdpg} \includegraphics[width=\columnwidth]{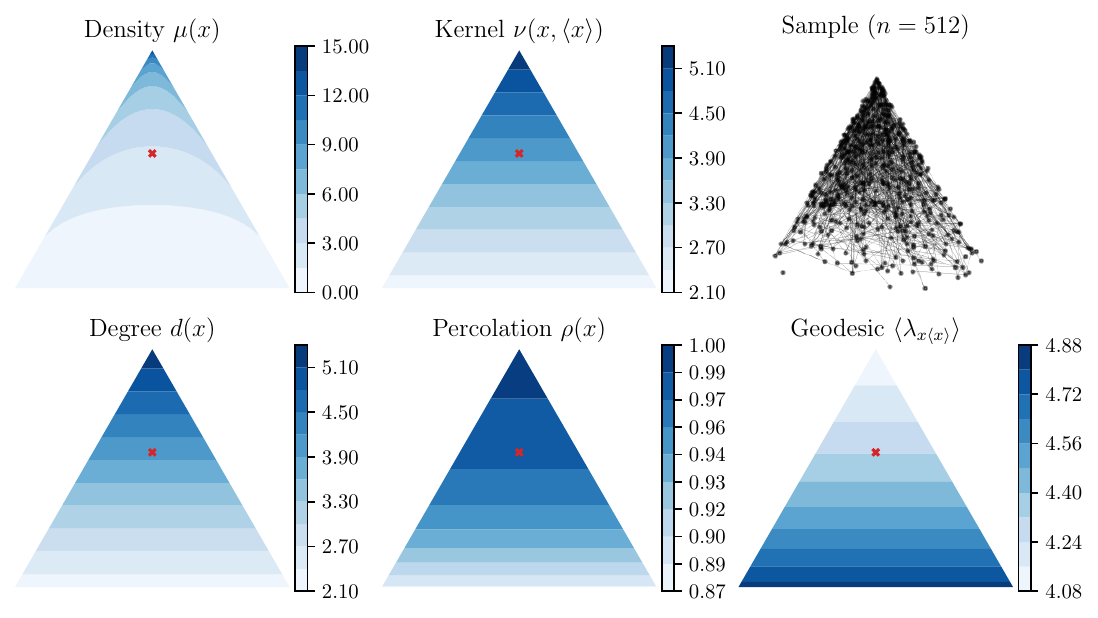}} 
    \subfloat[Gaussian RGG ($k=2$)]{\label{fig:spl_grgg} \includegraphics[width=\columnwidth]{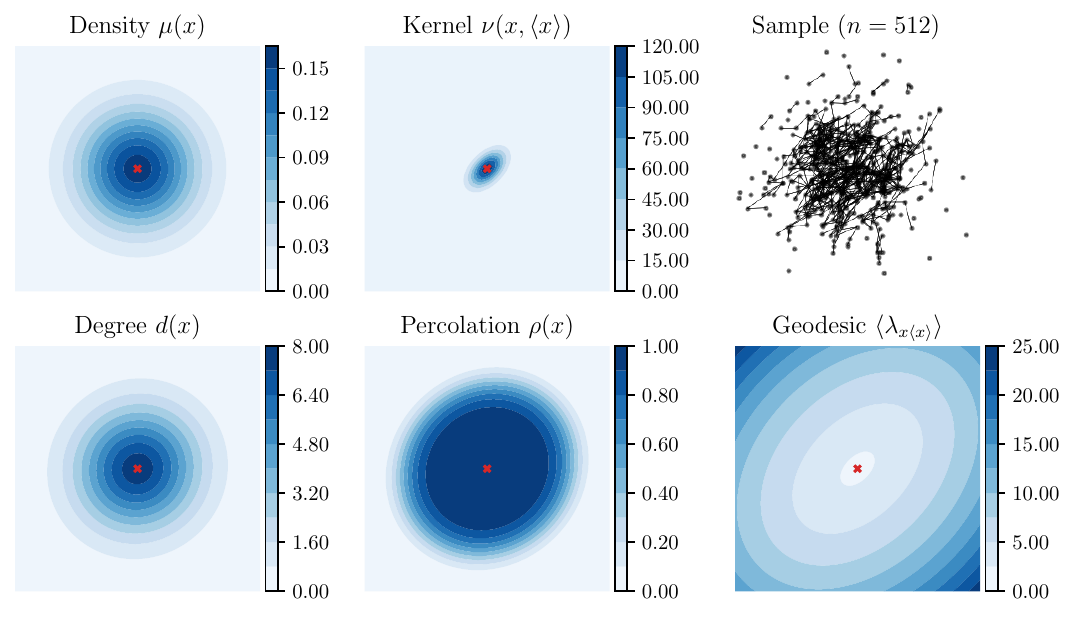}}\\
    \subfloat[Max graphon ($k=1$)]{\label{fig:spl_maxg} \includegraphics[width=\columnwidth]{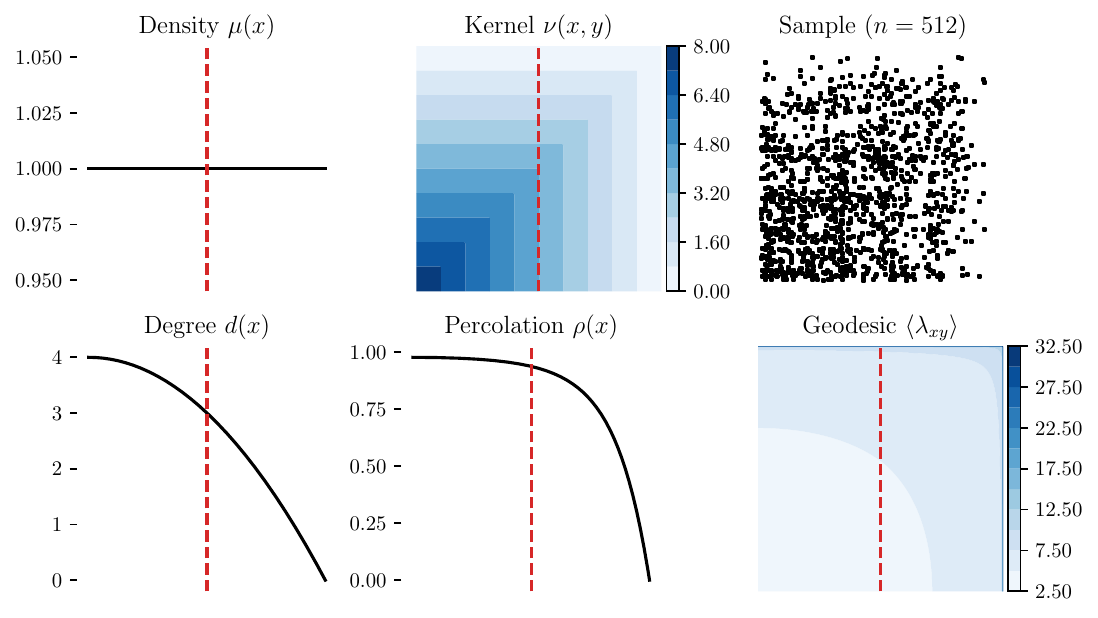}}
    \subfloat[Scale-free graphon ($k=1$)]{\label{fig:spl_sfg} \includegraphics[width=\columnwidth]{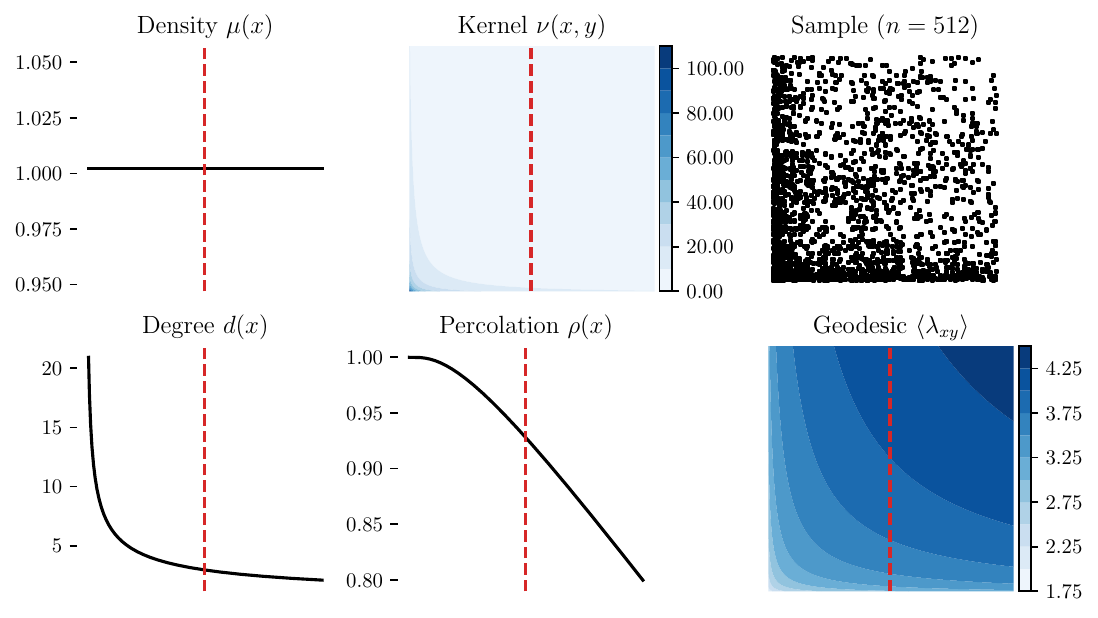}}
    \caption{Node and node pair functions for various statistical network models considered in Sec. \ref{sec:geodesics_specific}. Density function $\mu(x)$ refers to the distribution of nodes in node space $V$, connectivity kernel $\nu(x,y)$ refers to probability of connection between node pairs, degree function $\meandegree(x)$ refers to the expected degree of a node at $x$, percolation function $\rho(x)$ indicates the probability of a node at $x$ to be on the giant component, and geodesic function $\avg{\lambda_{xy}}$ refers to the expected geodesic length between node pairs. For one-dimensional models---(c) max graphon ($V=[0,1]$) and (d) scale-free graphon ($V=[0.01,1]$)---node functions ($\mu$, $\rho$, and $\meandegree$) are shown on $V$, while node pair functions ($\nu$ and $\lambda$) and network sample are shown on $V\times V$. For higher dimensional models---(a) Dirichlet RDPG (functions shown on the standard 2-simplex) and (b) Gaussian RGG (functions shown on $[-3,3]\times[-3,3]$)---node pair functions are shown between $x$ and the mean in $V$ indicated by $\avg{x}$, which is itself marked by red dashed lines or crosses. Description of model parameters and equations used to compute these functions are in respective sections. We note that $\lambda$ is estimated from the approximate closed form of the GLD.}
    \label{fig:spl_models}
\end{figure*}

\begin{figure}
    \centering
    \includegraphics[width=\columnwidth]{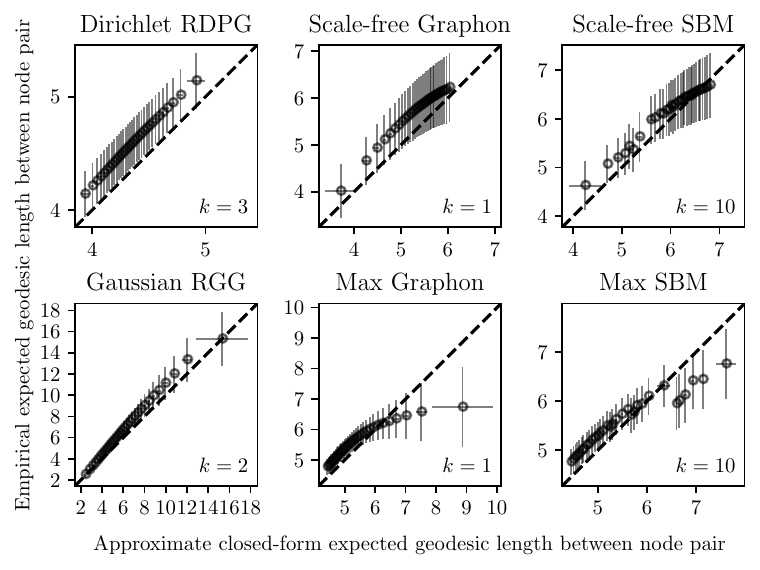}
    \caption{Estimates of expected geodesic lengths between node pairs using the approximate closed form of the GLD (on the $x$-axis) agree with empirical estimates (on the $y$-axis) for the sparse general random network models considered in Sec. \ref{sec:geodesics_specific}. For each model, a total of 1024 network samples---each of size $n=512$---were generated in two steps: 32 node-location samples (based on node density $\mu$) followed by 32 edge samples (based on the connectivity kernel $\nu$) for each of those node-location samples. For every node pair in a given node-location sample, if they were on the giant component then their expected geodesic length was empirically estimated by averaging over the empirical geodesic lengths for that node pair in their edge samples, and analytically estimated using the approximate closed form of the GLD. To prevent clutter, the range of closed-form estimates so-derived was divided into 32 partitions such that every partition contained data for an equal number of node pairs, and the mean ($\circ$) and standard deviation (bars) of closed-form and empirical estimates within each partition were plotted. Any deviations between theory and empirics are discussed in the corresponding text.}
    \label{fig:spl_emp_vs_ana}
\end{figure}

\subsection{\label{sec:rgg}Gaussian random geometric graph (RGG)}

An inner product space is a convenient abstraction when the node space is latent, as in graph embeddings, but for real-world spaces equipped with distances over which the probability of connection decays---like \emph{spatial} networks \cite{barnett2007spatially}---it is appropriate to consider a metric space of nodes: a notion captured by random geometric graph (RGG) models \cite{penrose2003rgg, penrose2016rgg, dettmann2016rgg}. The metric space usually depends on the nature of networks being modeled, such as a Euclidean space for communication networks or a hyperbolic space for social networks \cite{krioukov2010hyperbolic, barthelemy2011spatial}.

\paragraph*{Setup}In this section, we focus on \emph{soft} random geometric graphs, wherein the probability of connection decays smoothly with distance; specifically, we consider a $k$-dimensional Euclidean space $V=\real^k$, with a squared-exponential decay function as the connectivity kernel \cite{penrose2016rgg}. This is akin to having an ellipsoidal connection ``bubble'' around each node:
\begin{equation}
    \label{eq:grgg_nu}
    \nu(\boldsymbol{x},\boldsymbol{y}) = \frac{\beta}{n}\exp\left(-\frac{1}{2}(\boldsymbol{x}-\boldsymbol{y})^T\mat{R}^{-1}(\boldsymbol{x}-\boldsymbol{y})\right),
\end{equation}
where $\beta > 0 $ is the probability of connecting to a node with identical coordinates, and $\mat{R}$ is a $k\times k$ symmetric positive-definite matrix encoding the scale of connections in this node space. 
For concreteness, assume a standard multivariate Gaussian distribution of nodes centered at the origin:
\begin{equation}
    \label{eq:grgg_mu}
    \mu(\boldsymbol{x})=(2\pi)^{-\frac{k}{2}}\exp\left(-\frac{1}{2}\boldsymbol{x}^T\boldsymbol{x}\right).
\end{equation}
We remark that this formalism extends to a multivariate Gaussian node distribution through an affine transformation of the node space $V$ and scale matrix $\mat{R}$ (see Appendix \ref{sec:apdx_grgg}). We refer to this as a Gaussian RGG \footnote{Arguably, this should be termed a \emph{doubly} Gaussian RGG, where both the node distribution and connectivity kernel are Gaussian; see Ref. \onlinecite{garrod2018connectivity}.}. We now consider the percolation probability for a node at $\boldsymbol{x}$. Similar to the discrete approximation used for RDPGs, we can discretize the node space into a fine grid to derive percolation probabilities (see Appendix \ref{sec:apdx_general}). However, here we apply an ansatz that the percolation probability is given by a generalization of the Gaussian function:
\begin{equation}
    \label{eq:rho_grgg}
    \rho(\boldsymbol{x}) = a\exp\left(-\left(\boldsymbol{x}^T\mat{C}\boldsymbol{x}\right)^b\right),
\end{equation}
where $0\le a\le 1$ governs percolation probability at the origin, $b\ge 1$ controls the shape of the percolation surface, and $\mat{C}$ is a $k\times k$ symmetric positive semi-definite matrix that indicates the scale of the surface. As shown in Appendix C\,2 of Ref. \onlinecite{loomba2024geodesics2}, $\mat{C}$ commutes with the scale matrix $\mat{R}$, implying that $\mat{C}$ and $\mat{R}$ preserve each others eigenspaces. Consequently, we only need to infer the $k$ non-negative eigenvalues of $\mat{C}$, yielding a total of $k+2$ parameters to fit the percolation surface at grid locations via constrained optimization. We can then use Eq. \eqref{eq:rho_grgg} to obtain percolation probabilities at any node location in $\real^k$.

\paragraph*{Approximate closed form of the GLD} Using the expression in Eq. \eqref{eq:spd_general_omega}, we can write the approximate closed form of the conditional PMF of the GLD succinctly via a set of recursive coefficients: see Eq. \eqref{eq:grgg_omega_ansatz} in Appendix \ref{sec:apdx_grgg}. To make further analytical progress, we consider a special scenario of high ``spatial homophily'': $\mat{R}\to \mat{0}$. Using Eq. \eqref{eq:spd_general_omega} with the na\"{i}ve initial condition in Eq. \eqref{eq:prob_connect_apx_general}, the approximate closed form of the conditional PMF of the GLD can be written as:
\begin{equation}
    \label{eq:spd_grgg_homophily}
        \omegaacf_l(\boldsymbol{x},\boldsymbol{y}) = \frac{\left(\avgdegree 2^\frac{k}{2}\right)^l}{n\sqrt{|l\mat{R}|}}\exp\left(-\frac{1}{2}(\boldsymbol{x}-\boldsymbol{y})^T(l\mat{R})^{-1}(\boldsymbol{x}-\boldsymbol{y})\right),
\end{equation}
where $|\cdot|$ is the matrix determinant and $\avgdegree$ is the mean degree; see Appendix \ref{sec:apdx_grgg}. Here, we can interpret $\omegaacf_l(\boldsymbol{x},\boldsymbol{y})$ as a ``geodesic kernel'' encoding $\prob{\lambda_{\boldsymbol{x}\boldsymbol{y}}=l}$, where a node ``inflates'' its bubble of nearest neighbors, as defined by the scale matrix $\mat{R}$, by a factor of $l$ to form geodesics of length $l$. (For $l=1$, this simply reduces to the connectivity kernel in Eq. \eqref{eq:grgg_nu}.) It then follows that the approximate closed form of the survival function of the GLD is given by Eq. \eqref{eq:spd_general_psi} as $\psiacf_l(\boldsymbol{x},\boldsymbol{y}) = \exp\left(-\sum_{q=1}^l\omegaacf_q(\boldsymbol{x},\boldsymbol{y})\right)$ with, as previously, an analogous interpretation to Eq. \eqref{eq:sf_avg_uncorrected} in terms of independent geodesics. In Fig. \ref{fig:spl_grgg}, we plot various node and node pair statistics for a Gaussian RGG when $n=512, \avgdegree=4$ and the scale matrix is given by  $\mat{R}=\big(\begin{smallmatrix} 0.08 & 0.04\\ 0.04 & 0.08 \end{smallmatrix}\big)$, using Eq. \eqref{eq:grgg_degree} for a node's degree, Eq. \eqref{eq:rho_grgg} for a node's percolation probability, and the approximate closed form of the GLD in Eq. \eqref{eq:grgg_omega_ansatz} for the analytic estimate of expected geodesic lengths between node pairs. We also show in Fig. \ref{fig:spl_emp_vs_ana} that this estimate is in good agreement with the empirics.

\subsection{Sparse graphon}\label{sec:graphons}

In the most general setting, any conditionally-independent-edge model with a symmetric connectivity kernel can be expressed by considering a sequence of graphs in some continuum limit, called graph functions or graphon \cite{lovasz2006graphon, lovasz2012graphon, orbanz2014graphons}. Typically, the node space for a graphon is restricted to the real interval $[0,1]$ where nodes are distributed according to the standard uniform distribution $\mathcal{U}(0,1)$. Then all burden of modeling edge probabilities is transferred to the symmetric kernel $W:[0,1]^2\to[0,1]$, referred to as the ``$W$-graphon''. Given their flexibility, these functions can get arbitrarily complex. While $W$-graphons are usually formulated as the dense limit of a graph sequence with $\bigtheta{n^2}$ edges \cite{lovasz2006graphon}, here we are interested in the sparse limit with $\bigtheta{n}$ edges \cite{caronfox2017sparsegraphons}. In particular, we consider a sparse sequence of graphs where each graph is finitely exchangeable, also referred to as the ``inhomogeneous random graph model'' \cite{bollobas2011sparsegraphs}. For brevity, throughout this article we refer to $W_n:[0,1]^2\to[0,1]$ such that $W_n=\bigtheta{n^{-1}}\textrm{ or }0$ as a ``sparse graphon'', or simply as ``graphon''. The GLD framework of Sec. \ref{sec:general_graphs} translates immediately to sparse graphons, with the simplicity offered by the symmetry of $W_n$, and by assuming $x\sim\mathcal{U}(0,1)$, with regards to numerical integration.

\paragraph*{Illustrative example: max graphon}As an example, consider a sparse version of the ``max graphon'', that arises as the limit of a uniform attachment process \cite{borgs2011maxgraphon, klimm2021modularity} given by $W_n(x,y)=\frac{\beta}{n}(1-\max(x,y))$, where $\beta >0$. In Fig. \ref{fig:spl_maxg} we show various node and node pair functions for this graphon with $n=512$, $\beta=8$, wherein the expected degree, percolation probability, and expected geodesic lengths are all obtained via numerical integration of Eqs. \eqref{eq:general_degrees_deg}, \eqref{eq:gcc_consistency_general} and \eqref{eq:spd_analytic_general_eig_uncorrected} respectively. In Fig. \ref{fig:spl_emp_vs_ana}, we show that the empirical and analytic estimates of the geodesic lengths are in good agreement, except for longer geodesic lengths---likely due to the vanishingly low percolation probabilities of the tail-end of the node space as $x\to 1$. As previously noted, we can discretize any continuous network model at a chosen scale to obtain an equivalent SBM representation of it---see Appendix \ref{sec:apdx_general} for a discussion on discretizing graphons in particular. In Fig. \ref{fig:spl_emp_vs_ana}, we also plot empirics and analytics for the SBM corresponding to max graphons, that corroborate well with results obtained via numerical integration.

\paragraph*{Sparse multiplicative graphons}We next consider a scenario where the formalism simplifies further. Let $f:[0,1]\to[0,1]$ be a function such that $f=\bigtheta{n^{-\frac{1}{2}}}\textrm{ or }0$. Then, we define a sparse multiplicative graphon over a node pair $W^\times_n(x,y)$ to be one which can be written as the product of that function applied to the nodes separately:
\begin{equation}
    \label{eq:def_mult_graphon}
    W^\times_n(x,y)\triangleq f(x)f(y),
\end{equation}
that are equivalent to canonical degree-configuration models since $f(x)\propto\meandegree(x)$ where we use $\meandegree(x)$ to denote the expected degree of a node at $x$; see Appendix \ref{sec:apdx_graphons} and Eq. \eqref{eq:mult_graphon_foo_deg}. We remark that an asymmetric and directed version of this graphon can be obtained by considering two distinct functions $f$ and $g$, but we restrict our discussion here to symmetric multiplicative graphons. To derive their network properties, it will be useful to define the first and second moments of $f(x)$ that can be shown to encode the first and second moments of the degree distribution (see Appendix \ref{sec:apdx_graphons}):
\begin{subequations}
\label{eq:mult_graphon_stats}
\begin{align}
    \label{eq:mult_graphon_stats_zeta}
    \zeta&\triangleq\sqrt{n}\int_0^1f(x)\,dx=\sqrt{\avgdegree},\\
    \label{eq:mult_graphon_stats_eta}
    \eta&\triangleq n\int_0^1f(x)^2\,dx=\frac{\avg{\degree^2}}{\avgdegree}-1.
\end{align}
\end{subequations}
We now consider the percolation probability at $x$: define $\rho\triangleq\sqrt{n}\int_0^1f(x)\rho(x)\,dx$, then we obtain from Eq. \eqref{eq:gcc_consistency_general} for percolation probability and Eq. \eqref{eq:mult_graphon_stats_zeta}:
\begin{subequations}
\label{eq:gcc_consistency_r1g}
\begin{align}
    \label{eq:gcc_consistency_r1g_rhox}
    \begin{split}
        \rho(x)&\approx 1-\exp\left(-n\int_0^1f(x)f(y)\rho(y)\,dy\right)\\&=1-\exp\left(-\sqrt{n}\rho f(x)\right),\\
    \end{split}\\
    \label{eq:gcc_consistency_r1g_rho}
    \begin{split}
        \implies&\int_0^1f(x)\rho(x)dx\approx \int_0^1f(x)\left[1-\exp\left(-\sqrt{n}\rho f(x)\right)\right]\,dx\\
        \implies&\rho\approx\zeta-\sqrt{n}\int_0^1f(x)\exp\left(-\sqrt{n}\rho f(x)\right)\,dx.
    \end{split}
\end{align}
\end{subequations}
Eq. \eqref{eq:gcc_consistency_r1g_rho} is a self-consistent scalar equation for $\rho$ which once solved can be used to solve the self-consistent scalar Eq. \eqref{eq:gcc_consistency_r1g_rhox} for the percolation probability of any node location.

\paragraph*{Approximate closed form of the GLD} Exploiting the multiplicative nature of the connectivity kernel in Eq. \eqref{eq:def_mult_graphon}, we obtain the conditional PMF $\omegaacf_l(x, y)$ and survival function $\psiacf_l(x,y)$ of the approximate closed form GLD from Eqs. \eqref{eq:spd_general_omega}, \eqref{eq:spd_general_psi} as
\begin{subequations}
    \label{eq:spd_r1g_main}
    \begin{align}
    \label{eq:spd_r1g_omega}
    \omegaacf_l(x, y) &= \eta^{l-1}f(x)f(y)=\eta^{l-1}\frac{\meandegree(x)\meandegree(y)}{n\avgdegree},\\
    \label{eq:spd_r1g_psi}
    \psiacf_l(x, y) &= \begin{cases}
    \exp\left(-\frac{\eta^l-1}{\eta-1}\frac{\meandegree(x)\meandegree(y)}{n\avgdegree}\right) &\mbox{if }\eta\ne 1,\\
    \exp\left(-l\frac{\meandegree(x)\meandegree(y)}{n\avgdegree}\right) &\mbox{otherwise,}
    \end{cases}
    \end{align}
\end{subequations}
where we express $f(x)$ in terms of the expected degree function $\meandegree(\cdot)$ using Eq. \eqref{eq:mult_graphon_foo_deg}. From Eq. \eqref{eq:spd_r1g_psi} we note that the distribution of geodesic lengths between two nodes in a degree-configuration model is encoded by the product of their expected degree. We also observe that a larger variance in the degree distribution renders a larger value for $\eta$ from Eq. \eqref{eq:mult_graphon_stats_eta}, and therefore shorter geodesic lengths \cite{vanderhofstad2005distanceconfigmodel}. If we consider the expected survival function of the GLD for the whole network $\psiacf(l)$, as defined in Eq. \eqref{eq:spd_general_psi_agg}, then applying Jensen's inequality \cite{jensen1906fonctions} to Eq. \eqref{eq:spd_r1g_psi} yields a lower bound on $\psiacf(l)$, analogous to the bounds in Eqs. \eqref{eq:spd_general_psi_agg_bound}, \eqref{eq:spd_rdpg_network}:
\begin{equation}
    \label{eq:spd_r1g_network}
    \psiacf(l) \ge \begin{cases}
    \exp\left(-\frac{\avgdegree(\eta^l-1)}{n(\eta-1)}\right) &\mbox{if }\eta\ne 1,\\
    \exp\left(-l\frac{\avgdegree}{n}\right) &\mbox{otherwise,}
    \end{cases}
\end{equation}
where we have used the definition of mean degree in Eq. \eqref{eq:general_degrees_mean}. Together, Eqs. \eqref{eq:spd_r1g_network} and \eqref{eq:mult_graphon_stats_eta} provide a bound for the closed form  of the GLD in a degree-configuration model entirely in terms of the first and second moments of the degree distribution. This bound is tight when the variance in survival function across node pairs is small. For instance, in an ER graph where every node has the same expected degree $\avgdegree$, there is no variance in the survival function across node pairs. Each node has a Poisson degree distribution yielding $\eta=\avgdegree$ from Eq. \eqref{eq:mult_graphon_stats_eta}, which substituted in Eq. \eqref{eq:spd_r1g_network} leads precisely to the expression we previously obtained in Eq. \eqref{eq:spd_er}.

\paragraph*{Illustrative example: random regular graphs} Since the GLD in multiplicative graphons---that impose constraints only on node degrees in an otherwise random graph---depends only on the first two moments of the degree distribution, we can consider an extreme where the degree distribution has zero variance: the example of random $\degree$-regular graphs, wherein every node has the same degree $\degree$, and $\degree\in\integerpos$ such that $n\degree$ is even, but connections are otherwise random between node pairs \cite{bollobas2001random}. Evidently, the degree constraint on random \emph{regular} graphs prohibits conditionally independent edges, whereas the framework used to derive Eq. \eqref{eq:spd_r1g_psi} assumes a conditionally-independent-edge model. Remarkably, since Eq. \eqref{eq:spd_r1g_psi} is based only on degree moments, and because the only constraint imposed in random regular graphs is---much like in multiplicative graphons---on node degrees, we can still obtain a GLD for random regular graphs. Because every node has the same degree $\degree$:
\begin{subequations}
\label{eq:def_random_reg_graph}
\begin{align}
    &\forall x\in V: \meandegree(x)=\avgdegree=\degree,\\
    &\avg{\degree^2}=\degree^2.
\end{align}
\end{subequations}
Then Eqs. \eqref{eq:mult_graphon_stats_eta} and \eqref{eq:def_random_reg_graph} yield $\eta=\degree-1$, and we can rewrite Eq. \eqref{eq:spd_r1g_psi} as:
\begin{equation}
    \label{eq:spd_randomregular_psi}
    \psiacf_l(x, y) = \begin{cases}
    \exp\left(-\frac{\degree[(\degree-1)^l-1]}{n(\degree-2)}\right) &\mbox{if }\degree\ne 2,\\
    \exp\left(-\frac{2l}{n}\right) &\mbox{otherwise.}
    \end{cases}
\end{equation}
It is worth analyzing Eq. \eqref{eq:spd_randomregular_psi} when $\degree=1$: every node in the random regular graph is attached to exactly one other node, thus the network is composed of $n/2$ disconnected edges and does not have a giant component. Picking a node at random, the probability that another random node is directly connected to it is asymptotically $n^{-1}$, and the probability mass at other geodesic lengths is zero, yielding the survival function of the GLD as $1-n^{-1}$ for any $l\ge 1$. This is precisely what we obtain from Eq. \eqref{eq:spd_randomregular_psi} by setting $\degree=1$ and applying a first-order approximation to the exponential. For $\degree=2$, the network is composed of one or more disconnected cycles. Picking a node at random on a cycle of asymptotically large length, there are exactly $2$ nodes at a distance of $l$ from it, on either side. Thus, the probability mass at length $l$ is asymptotically $2n^{-1}$, yielding the survival function of the GLD as $1-2ln^{-1}$. As before, we obtain this expression by applying a first-order approximation to the exponential in Eq. \eqref{eq:spd_randomregular_psi}. In Fig. \ref{fig:spd_randomregular}, we show that Eq. \eqref{eq:spd_randomregular_psi} is an extremely good approximation of the GLD for other (larger) degrees too.

\begin{figure}
    \centering
    \includegraphics[width=\columnwidth]{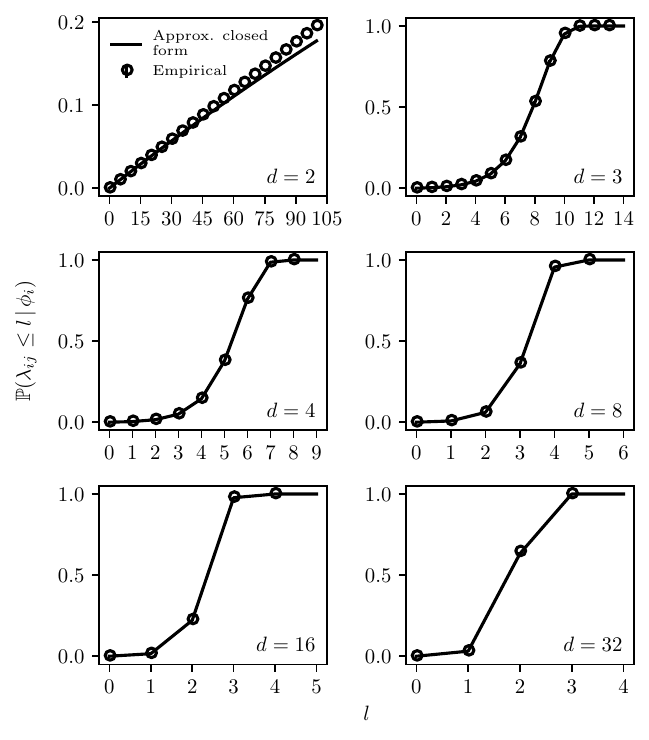}
    \caption{Empirical and approximate closed-form CDF of geodesic lengths for a random $\degree$-regular graph are in good agreement. Network size is fixed at $n=1024$, while degree varies as $\degree\in\{2, 3, 4, 8, 16, 32\}$. Solid line indicates the approximate closed-form solution (Eqs. \eqref{eq:spd_randomregular_psi}). Symbols ($\circ$) and bars indicate empirical estimates: mean and standard error over 10 network samples. (The variation over samples is negligible.) For $\degree=2$, the network is at the phase transition (see Sec. III\,A in Ref. \onlinecite{loomba2024geodesics2}), and symbols are shown at every fifth geodesic length for clarity.}
    \label{fig:spd_randomregular}
\end{figure}

\paragraph*{Illustrative example: scale-free networks} We next consider the other extreme where the degree distribution has potentially unbounded variance. A typical example is of ``scale-free'' networks, whose nodes follow a power law degree distribution. Although scale-free networks are usually generated by a dynamic process like preferential attachment \cite{albert2002networks, barabasi1999bagraph}, here we consider a \emph{static} version based on vertex-fitness that permits a model with conditionally independent edges, similar to the stochastic fitness model in Ref. \onlinecite{caldarelli2002scale}. We define a multiplicative scale-free graphon with $f(x)=\sqrt{\frac{\beta}{n}}\left(\frac{x}{h}\right)^{-\alpha}$, i.e. $W^\times_n(x,y)=\frac{\beta}{n}\left(\frac{xy}{h^2}\right)^{-\alpha}$, for some scalars $0<\alpha\le 1$, $0< h \ll 1$, $\beta>0$, with the node space restricted to the real interval $[h,1]$ i.e. $\mu(x)=\frac{1}{1-h}$ if $x\in[1,h]$ and $0$ otherwise---a minor modification to the usual assumption of a standard uniform distribution. Here, $\alpha$ controls the exponent of the power law governing the degree distribution: using Eq. \eqref{eq:degree_r1g}, the expected degree at $x$ is $\meandegree(x)\propto x^{-\alpha}$ which implies a degree distribution $\degree\propto \degree^{-\theta}$ where $\theta\triangleq 1+\frac{1}{\alpha}$. (The distribution is not a \emph{pure} power law, and might show departures from usually studied scale-free networks in the small $\theta$ regime, as shown in the derivation of the degree distribution in Eq. \eqref{eq:sfg_degree_true} in Appendix \ref{sec:apdx_graphons}.) The definition for $\eta$ in Eq. \eqref{eq:mult_graphon_stats_eta} yields:
\begin{equation}
    \label{eq:sfg_eta}
    \eta =
    \begin{cases}
    \beta\frac{h\log h^{-1}}{1-h} &\mbox{if }\alpha=1/2,\\
    \beta\frac{h^{2\alpha}-h}{(1-h)(1-2\alpha)} &\mbox{otherwise,}
    \end{cases}
\end{equation}
and for $\zeta$ in Eq. \eqref{eq:mult_graphon_stats_zeta} yields:
\begin{equation}
    \label{eq:sfg_gamma}
    \zeta=\begin{cases}
            \sqrt{\beta}\frac{h\log h^{-1}}{1-h} &\mbox{if }\alpha=1,\\
            \sqrt{\beta}\frac{h^{\alpha}-h}{(1-h)(1-\alpha)} &\mbox{otherwise},
        \end{cases}
\end{equation}
which can be inserted into Eqs. \eqref{eq:degree_r1g_overall}, \eqref{eq:spd_r1g_psi} to obtain the approximate closed form of the survival function of the GLD in scale-free graphons. We illustrate with three special cases based on the power law exponent of the degree distribution: (1) an ER graph (where the power law exponent $\theta\to\infty\implies\alpha\to 0$), (2) a BA scale-free graphon (where the power law exponent $\theta=3\implies\alpha=\frac{1}{2}$ \cite{barabasi1999bagraph}), and (3) a ``highly scale-free'' graphon (for which $\theta=2\implies\alpha=1$). We express the degree-controlling parameter $h$ in terms of the mean degree $\avgdegree$ (see Appendix \ref{sec:apdx_graphons}). Assuming $h\ll 1$, from Eqs. \eqref{eq:degree_net_r1g}, \eqref{eq:sfg_eta}, \eqref{eq:sfg_gamma} we obtain: 
\begin{equation}
    \label{eq:sfg_h}
    h \approx 
    \begin{cases}
    \frac{\avgdegree}{4\beta} &\mbox{if }\alpha=\frac{1}{2},\\
    \left(\sqrt{\frac{\beta}{\avgdegree}}\log\sqrt{\frac{\beta}{\avgdegree}}\right)^{-1} &\mbox{if }\alpha=1,
    \end{cases}
\end{equation}
and for the ER graph ($\alpha=0$) Eqs. \eqref{eq:degree_net_r1g}, \eqref{eq:sfg_eta} yield:
\begin{equation}
    \label{eq:sfg_er_degree}
    \avgdegree=\beta.
\end{equation}
Assuming $h\ll 1$, we obtain from Eqs. \eqref{eq:sfg_eta}, \eqref{eq:sfg_h}, \eqref{eq:sfg_er_degree}:
\begin{equation}
    \label{eq:sfg_eta_}
    \eta \approx 
    \begin{cases}
    \avgdegree &\mbox{if }\alpha\to 0,\\
    \frac{\avgdegree}{4}\log\left(\frac{4\beta}{\avgdegree}\right) &\mbox{if }\alpha=\frac{1}{2},\\
    \frac{\sqrt{\beta \avgdegree}}{\log \sqrt{\frac{\beta}{\avgdegree}}} &\mbox{if }\alpha=1.
    \end{cases}
\end{equation}
From Eq. \eqref{eq:sfg_eta_} we note that assuming sparsity from Eq. \eqref{eq:sparsity_constraint_general}---of the form $f(x)=\bigtheta{n^{-\frac{1}{2}}}\implies\beta=\bigtheta{1}$---yields asymptotically bounded values of $\eta$ for the BA and highly scale-free graphons, and thus for the degree's variance, unlike typical scale-free networks. However, we can use large values of $\beta=\bigtheta{n}$ while maintaining finite mean degree $\avgdegree$, which permits an assessment of our formalism when ``sparsity'' is not enforced. In Fig. \ref{fig:spl_sfg} we show various node and node pair functions for the BA scale-free graphon with $n=512=\beta$, $\avgdegree=4$ and $\alpha=1/2$. The expected degree, percolation probability, and expected geodesic lengths are obtained via Eqs. \eqref{eq:degree_r1g}, \eqref{eq:gcc_consistency_r1g} and \eqref{eq:spd_r1g_psi} respectively. In Fig. \ref{fig:spl_emp_vs_ana}, we show that the empirical and analytic estimates of expected geodesic lengths are in good agreement for this scale-free graphon, and its discretised SBM counterpart.

\section{Discussion\label{sec:geo1_discussion}}

In this article, we have derived an analytic geodesic length distribution (GLD) for every node pair in networks---directed or undirected---generated by a sparse statistical network model---symmetric or asymmetric---with conditionally independent edges and no bottlenecks, in the asymptotic limit---what we refer to as a sparse ensemble average network (SEAN) model. The distribution describes geodesic lengths on the giant component when it exists (in the supercritical regime), and on the small components otherwise (in the subcritical regime). The GLD is given by a pair of recursive equations which can be easily solved with initial conditions supplied by the form of the statistical network model. We have obtained a closed-form expression for the survival function of the GLD. In the supercritical regime, the expression is tight for finite lengths in the asymptotic limit, and for shorter lengths in finite-size networks. In the subcritical regime, it is tight for all lengths in asymptotically large networks. The expression provides an approximate closed form of the survival function of the GLD up to length $l$ that resembles the process of hitting a target node $j$ from a source node $i$ via any of the independent geodesics of independent lengths up to $l$, i.e. it is given by an exponential of the negative probability of independent geodesics up to length $l$. This closed form generalizes previous analytic \cite{blondel2007distance, katzav2015analytical} and closed-form approaches \cite{fronczak2004average} to model the GLD in ER graphs and scalar latent variable models, and in some settings can be viewed as the application of Jensen's inequality \cite{jensen1906fonctions} to the closed form which conditions on the limiting random variable associated with the corresponding branching process \cite{barbourreinert2001smallworlds, barbourreinert2006discretesmallworlds}.

\paragraph*{GLD in sparse general random network families} Transitioning away from the ensemble average setting where one has access to the expected adjacency matrix $\avg{\mat{A}}$, we have considered sparse general random network (SGRN) models in some node space $V$ that naturally do not contain any bottlenecks, and generalize inhomogeneous random graphs \cite{bollobas2007phase} to the asymmetric setting permitting interesting behaviors in directed networks. This encompasses a diverse set of models like stochastic block models (SBMs), (soft) random geometric graphs (RGGs), random dot product graphs (RDPGs) and (sparse) graphons. We have derived a closed-form expression for the survival function of the GLD in this general setting, determined by an iterated integral operator $T$ defined over functions on $V$, and the connectivity kernel represents the probability of an edge existing between two nodes in $V$. The operator $T$ is analogous to $\avg{\mat{A}}$ in the SEAN model. For symmetric kernels, this yields an expression for the GLD in terms of the spectral decomposition of $T$. For illustrative examples of each of the above-mentioned models, we have derived the approximate closed form of the GLD revealing novel insights, particularly for higher-dimensional models whose geodesics have not been previously studied analytically. Importantly, we do so without imposing any restrictions on the eigenvalues of the operator---such as periodicity of the kernel or the existence of a giant component (see Theorem 2 in Ref. \onlinecite{loomba2024geodesics2}). We showed for:
\begin{enumerate}
    \item SBMs that the survival function is expressed via an eigendecomposition of the block matrix; Eq. \eqref{eq:spd_sbm_eig}.
    \item RDPGs that the mean vector and covariance matrix of the node distribution can specify a lower bound of the survival function for geodesics between a random node pair; Eq. \eqref{eq:spd_rdpg_network}.
    \item Gaussian RGGs with high spatial homophily that the (conditional) probability mass function at length $l$ can be interpreted as a ``geodesic kernel'' that changes the connection scales by a factor of $l$; Eq. \eqref{eq:spd_grgg_homophily}.
    \item Multiplicative (sparse) graphons where the connectivity kernel of a node pair is a product of node functions, (equivalent to canonical degree-configuration models,) that the product of expected degrees of the source and target nodes specifies the survival function (Eq. \eqref{eq:spd_r1g_psi}), and the survival function of the GLD for a random node pair can be bounded from below using the first and second moments of the degree distribution; Eqs. \eqref{eq:spd_r1g_network}, \eqref{eq:mult_graphon_stats_eta}.
\end{enumerate}

Despite the assumptions involved, we have shown for various models---including ``Gaussian RGG'', ``Dirichlet RDPG'', random $\degree$-regular graphs, and ``scale-free graphons''---that there is good agreement in the approximate closed form and empirical estimates of expected geodesic lengths between node pairs.

\paragraph*{Scope of applications} From an applied perspective, we provide empirical corroboration of our framework by demonstrating how real-world networks, when their structure is known, can be cheaply ``coarsened'' into SBMs to compute their expected GLD analytically (see Fig. \ref{fig:spl_statistics_empirical}, and Fig. 3 in Ref. \onlinecite{loomba2024geodesics2}). For partially observed networks, an appropriate statistical network model, such as an SBM, can be inferred and the GLD can be obtained analytically; see Fig. \ref{fig:uk_map_aspl} where we map geodesics statistics for a large-scale social network at the country level. Our results on RDPG find relevance in the burgeoning field of statistical machine learning on graphs, wherein nodes are typically embedded in a Euclidean space $\real^k$ equipped with the dot product. We have shown that the matrix of second moments in $\real^k$ (or in a corresponding feature space $\real^d$ when the kernel is not linear in the dot product) completely defines the GLD for nodes located at $\boldsymbol{x},\boldsymbol{y}\in \real^k$. This is particularly useful for sequential learning or querying for distances from individual nodes in prohibitively large networks. More generally, our theoretical framework to analytically estimate the GLD can advance our understanding of graph representation learning based on message-passing graph neural networks (GNNs) \cite{rubin2023geodesic} and push the state-of-the-art, since incorporating knowledge of inter-node distances beyond immediate neighbors has been shown to provably improve GNN performance \cite{li2020distanceencodinggnn}.

\paragraph*{Limitations}Our assumptions imply limitations on where our approach is applicable, and suggest extensions to be considered for future work. The assumption of conditionally independent edges might hinder tight local clustering which is found commonly in some networks like social networks \cite{holland1971transitivity, holland1976local}, and sparsity constraints may not hold for broad degree distributions over finite network sizes---both being important assumptions for deriving the recursive equations for the GLD. The approximations involved in obtaining the closed-form expression discount probability mass at longer geodesic lengths, and thus may not hold well for some models, particularly for finite-size network with very large diameters. In practice however, we have observed good agreement between analytics and empirics for the statistics we considered, both for Gaussian RGGs that are highly spatial---and thus exhibit some degree of local clustering---and scale-free graphons---that have heavy-tailed degree distributions.

\begin{acknowledgments}
This work has been supported by EPSRC grant EP/N014529/1. The authors would like to thank Gesine Reinert and Mauricio Barahona for insightful feedback, and Asher Mullokandov, George Cantwell, Florian Klimm, Till Hoffmann and Matthew Garrod for helpful comments on the manuscript.
\end{acknowledgments}

\appendix

\section{Technical lemmas for the GLD}

Appendices \ref{sec:apdx_connect_gcc}, \ref{sec:apdx_lemmas12_bridging}, and \ref{sec:percprob} provide key technical lemmas on connection, bridging, and percolation probabilities, respectively, required to derive the geodesic length distribution (GLD) in sparse ensemble average networks (SEANs; Sec. \ref{sec:spd}). 

\subsection{\label{sec:apdx_connect_gcc}Connection probability on the giant component}

First, we derive connection probabilities for nodes on the giant component. We distinguish between three kinds of nodes: ``percolating'', ``dangling'', and ``non-percolating'' nodes that respectively have a non-vanishing, vanishing but non-zero, and zero probability of being on the giant component---see Appendix \ref{sec:percprob} for their technical definitions and further details.
\paragraph*{Notation} Let $G$ be a network, or graph, with $n$ nodes encoded by the adjacency matrix $\mat{A}$, and $G^{\setminus ij}$ indicate the subgraph with the edge $A_{ij}$ unconsidered/removed. We assume that $G$ is a SEAN. With a minor change in notation from the main text, let $\phi_i(G)$ refer to the event that node $i$ is on the giant component of a graph $G$, and $\prob{\phi_i(G)}$ be the ``percolation probability'' of node $i$ in graph $G$. We use the notation $f(n)=\littleo{g(n)}$ if $\lim_{n\to\infty}\frac{f(n)}{g(n)}=0$, $f(n)=\bigomega{g(n)}$ if $\liminf_{n\to\infty}\frac{f(n)}{g(n)}>0$, $f(n)=\order{g(n)}$ if $\limsup_{n\to\infty}\frac{f(n)}{g(n)}<\infty$, $f(n)=\littleomega{g(n)}$ if $\lim_{n\to\infty}\frac{f(n)}{g(n)}=\infty$, and $f(n)=\bigtheta{g(n)}$ if $f(n)=\bigomega{g(n)}$ and $f(n)=\order{g(n)}$. Below we derive the connection probabilities conditioned on the source node being on the giant component, that provides the initial condition for the GLD.

\begin{lemma}[Connection probability on the giant component]\label{lemma:deg_gcc}
    For nodes $i,j$ in a sparse ensemble average network $G$ that is undirected, if $i$ is a percolating node $(\prob{\phi_i(G)}=\bigomega{1})$ and $j$ is a percolating or non-percolating node $(\prob{\phi_j(G)}=\bigomega{1}$ or $0)$, then asymptotically:
    \begin{equation}\label{eq:probconnectgc}
    \begin{split}
        \condprob{A_{ij}=1}{\phi_i(G)}&\approx \prob{A_{ij}=1}\\&\times\left\lbrace 1+\left[\frac{1}{\prob{\phi_i(G)}}-1\right]\prob{\phi_j(G)}\right\rbrace. 
    \end{split}
    \end{equation}
\end{lemma}
\begin{proof}
    Without loss of generality, generate the graph $G$ such that the edge from node $i$ to $j$ is generated at the final step, which will be independent of anything else (Eq. \eqref{eq:ciem}), and consider the probability 
    \begin{equation*}
        \begin{split}                   \prob{A_{ij}=1,\phi_i(G),\phi_j(G)}&=\prob{A_{ij}=1,\phi_i(G^{\setminus{ij}}),\phi_j(G^{\setminus{ij}})} \\&+ \prob{A_{ij}=1,\phi_i(G^{\setminus{ij}}),\neg\phi_j(G^{\setminus{ij}})}\\
            &+\prob{A_{ij}=1,\neg\phi_i(G^{\setminus{ij}}),\phi_j(G^{\setminus{ij}})}\\
            &=\prob{A_{ij}=1}\\&\times\left[1-\prob{\neg\phi_i(G^{\setminus{ij}}),\neg\phi_j(G^{\setminus{ij}})}\right].
        \end{split}
    \end{equation*}
    From Lemma \ref{lemma:perccontribs}, the percolation events of two percolating nodes in $G$ are asymptotically independent, that is, $\prob{\neg\phi_i(G^{\setminus{ij}}),\neg\phi_j(G^{\setminus{ij}})}\approx\prob{\neg\phi_i(G^{\setminus{ij}})}\prob{\neg\phi_j(G^{\setminus{ij}})}$. Since $i$ and $j$ are, by assumption, non-dangling nodes in $G$, we know that removal of any single node has a vanishing effect on their percolation probabilities (Eq. \eqref{eq:percsubfullgraph}), therefore the removal of a single edge $A_{ij}$ has a vanishing effect on their percolation probabilities: $\prob{\phi_i(G^{\setminus ij})}\approx\prob{\phi_i(G)}$ and $\prob{\phi_j(G^{\setminus ij})}\approx\prob{\phi_j(G)}$. This yields:
    \begin{align*}
        \prob{A_{ij}=1,\phi_i(G),\phi_j(G)}\approx&\ \prob{A_{ij}=1}\\&\times\left[1-\prob{\neg\phi_i(G)}\prob{\neg\phi_j(G)}\right].
    \end{align*}
    Also, we can write
    \begin{equation*}
    \begin{split}        
        \prob{A_{ij}=1,\phi_i(G),\phi_j(G)}&=\condprob{\phi_j(G)}{A_{ij}=1,\phi_i(G)}\\&\quad\times\condprob{A_{ij}=1}{\phi_i(G)}\prob{\phi_i(G)}\\&=\condprob{A_{ij}=1}{\phi_i(G)}\prob{\phi_i(G)},
    \end{split}
    \end{equation*}
    since $(A_{ij}=1)\cap\phi_i(G)\implies\phi_j(G)$, i.e. $\condprob{\phi_j(G)}{A_{ij}=1,\phi_i(G)}=1$.
    Substituting the value derived for $\prob{A_{ij}=1,\phi_i(G),\phi_j(G)}$ from above then gives us the desired expression on the RHS of Eq. \eqref{eq:probconnectgc}.
\end{proof}

\begin{corollary}[Sparse connection probability of percolating nodes]\label{cor:sparsepercolating}
    For nodes $i,j$ in a sparse ensemble average network $G$ that is undirected, if $i$ is a percolating node $(\prob{\phi_i(G)}=\bigomega{1})$ and $j$ is a percolating or non-percolating node $)\prob{\phi_j(G)}=\bigomega{1}$ or $0)$, then asymptotically: $$\condprob{A_{ij}=1}{\phi_i(G)}=\bigtheta{n^{-1}}\textrm{ or }0.$$
\end{corollary}
\begin{proof}
    This follows from Lemma \ref{lemma:deg_gcc} and sparsity (Eq. \eqref{eq:sparsity_constraint}).
\end{proof}
From Sec. \ref{sec:spd}, the initial condition for the recursive equations to derive the GLD between node $i,j$ in an undirected setting is given by $\condprob{A_{ij}=1}{\phi_i(G)}$. Here, we have shown how this conditional probability of an edge relates to its marginal (i.e. unconditional) probability. To fix the initial condition we need to compute percolation probability $\prob{\phi_i(G)}$: the probability of being on the giant component for every node $i$ of the network. However, under some circumstances, we need not compute $\prob{\phi_i(G)}$. In particular, from Lemma \ref{lemma:deg_gcc}, it is evident that in the case where (1) $i$ is very likely to percolate i.e. $\prob{\phi_i(G)}\to 1$, or when (2) $j$ is very likely to \emph{not} percolate i.e. $\prob{\phi_j(G)}\to 0$, then we have $\condprob{A_{ij}=1}{\phi_i(G)}\to \prob{A_{ij}=1}$, i.e. the conditional and marginal probabilities coincide. When (3) both $i$ and $j$ are very likely to not percolate then $\condprob{A_{ij}=1}{\phi_i(G)}\to \prob{A_{ij}=1}\left[1+\frac{\prob{\phi_j(G)}}{\prob{\phi_i(G)}}\right]$. If the ratio of percolation probabilities is known, then the initial condition is also known. For instance, in an ER graph where every node is equivalent, we have in the just-supercritical regime: $\prob{\phi_i(G)}=\prob{\phi_j(G)}\to 0$, and obtain $\condprob{A_{ij}=1}{\phi_i(G)}\to 2\prob{A_{ij}=1}$. Thus, we need to compute the probability of percolating to compute the initial condition in intermediate settings of percolation. We remark that we can also derive the probability of an edge when the node is \emph{not} percolating: $\condprob{A_{ij}=1}{\neg\phi_i(G)} = \frac{\prob{A_{ij}=1,\neg\phi_i(G)}}{\prob{\neg\phi_i(G)}}=\frac{\prob{A_{ij}=1}-\prob{A_{ij}=1,\phi_i(G)}}{1-\prob{\phi_i(G)}}.$ Then using Lemma \ref{lemma:deg_gcc} we get:
\begin{equation}\label{eq:deg_nongcc}
    \condprob{A_{ij}=1}{\neg\phi_i(G)} \approx \prob{A_{ij}=1}[1-\prob{\phi_j(G)}].
\end{equation}
In Fig. \ref{fig:degree_gcc_nongcc}, we show how the expected degree varies for percolating and non-percolating nodes, for connectives above the percolation threshold of mean degree $\avgdegree=1$ \cite{erdos1960evolution}.
\begin{figure}[h]
    \centering
    \includegraphics[width=\columnwidth]{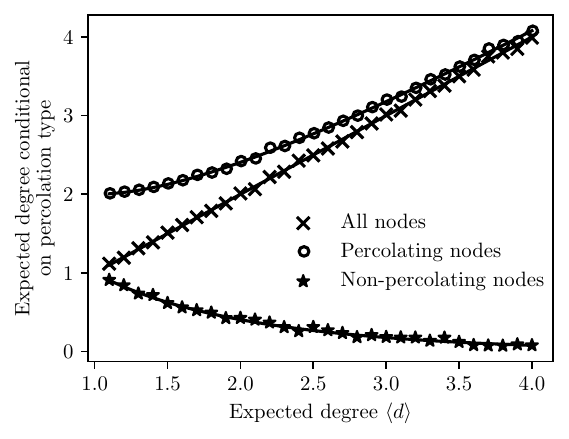}
    \caption{Empirical expected degree of nodes---in the whole network ($\times$), on the giant component ($\circ$), and \emph{not} on the giant component ($\star$)---for ER graphs with varying mean connectivity, agree with the analytic estimates. Network size is fixed at $n=1024$, while mean degree varies from $\avgdegree\in[1.1, 4]$. Lines indicate analytics corresponding to the probability of connection between two nodes $i,j$---$\prob{A_{ij}=1}$, $\condprob{A_{ij}=1}{\phi_i}$ from Lemma \ref{lemma:deg_gcc} and $\condprob{A_{ij}=1}{\neg\phi_i}$ from Eq. \eqref{eq:deg_nongcc}, respectively---scaled by $n$ for interpretation as mean degree, while symbols ($\times, \circ, \star$) indicate empirical mean estimates over 10 network samples---standard errors are small and not shown to aid legibility. For graphs with high connectivity $\condprob{A_{ij}=1}{\phi_i}\to \prob{A_{ij}=1}$, while for those with low connectivity $\condprob{A_{ij}=1}{\phi_i}\to 2\prob{A_{ij}=1}$.}
    \label{fig:degree_gcc_nongcc}
\end{figure}

\subsection{\label{sec:apdx_lemmas12_bridging}Bridging probability}

In this section we develop technical lemmas used to derive the GLD in sparse ensemble average networks.

\paragraph*{Notation} With a minor change in notation from the main text, let $\lambda_{ij}(G)\in\integernonneg$ refer to the length of a shortest path, or geodesic, from node $i$ to node $j$ in a network $G$. Let $\eta_i(G)$ denote the set of immediate neighbors of $i$ in $G$, and $\eta_i(G;l)$ denote the set of order-$l$ neighbors of $i$: nodes for which a shortest path from $i$ to them is of length $l$. (That is, $\eta_i(G)=\eta_i(G;1)$.) A node $k$ is said to be a bridging node of (finite) order $l$ for node pair $(i,j)$ in network $G$ ($i\ne j\ne k$) if a shortest path from $i$ to $j$ in $G$ is of length $l+1$ and $k$ lies on such a shortest path. We now study the asymptotic scaling of the probability of a node to exist on a shortest path of a given node pair, and of the length of a shortest path.

\begin{lemma}[Vanishing bridging probability]\label{lemma:vanishingbridge}
    Let $i,j,k$ be distinct nodes in a sparse ensemble average network $G$, and $\lambda_{ij}(G)$ be the length of a shortest path between $i,j$ in $G$. Then asymptotically, (a) the probability that $k$ is a bridging node of order $l$ is either $0$ or scales as $\bigtheta{n^{-2}}$, and (b) the probability that the shortest path from $i$ to $j$ is of length $l$ is either $0$ or scales as $\bigtheta{n^{-1}}$.
\end{lemma}
\begin{proof}
    A sparse ensemble average network (SEAN) has no dangling nodes (Corollary \ref{lemma:dangbottle}), i.e. all nodes are either percolating or non-percolating. We wish to consider the probability of $k$ being on a shortest path from $i$ to $j$. Without loss of generality, consider generating the graph's (conditionally) independent edges (Eq. \eqref{eq:ciem}) in a particular order, as follows. 
    
    First, an edge from $i$ to $j$ is considered, which due to sparsity exists with probability $0$ or $\bigtheta{n^{-1}}$, regardless of whether $i$ is non-percolating (Eq. \eqref{eq:sparsity_constraint}) or percolating (Corollary \ref{cor:sparsepercolating}). Next, consider the edge from $i$ to $k$, which again---due to sparsity---can exist with probability $0$ or $\bigtheta{n^{-1}}$ (Eq. \eqref{eq:sparsity_constraint}, Corollary \ref{cor:sparsepercolating}). Next, consider the edge from $k$ to $j$, for which a similar argument follows. Node $k$ can be a bridging node of order $1$ for $(i,j)$ if $i$ does not connect to $j$ directly, with probability $1-\bigtheta{n^{-1}}\sim 1$ from above, and $i$ connects to $k$ and $k$ connects to $j$, each of which have a probability of $0$ or $\bigtheta{n^{-1}}$. Due to the (conditionally) independent edge generation process considered here, we obtain the probability that $k$ is a bridging node of order $1$ for $(i,j)$ with probability either $0$ or $\bigtheta{n^{-2}}$.
    
    Similarly for another node $u$, consider generating the edge from $i$ to $u$ and from $u$ to $j$, which yields the probability for $u$ to be a bridging node of order $1$ for $(i,j)$ as either $0$ or $\bigtheta{n^{-2}}$. This process can be repeated for all such nodes in $S$ such that $(i,j)$ are potentially connected by $S$. Since $G$ has no bottlenecks (Eq. \eqref{eq:nonultrabottlenecked}): $|S|=0$ or $\bigtheta{n}$. Since a shortest path from $i$ to $j$ of length $2$ can exist (independently) via any of these $\bigtheta{n}$ nodes in $S$, the probability that the shortest path from $i$ to $j$ is of length $2$ is either $0$ or $\bigtheta{n^{-1}}$.

    Next, consider the (conditionally independent) edge from any neighbor of $i$ in $G$, i.e. from $u\in\eta_i(G)$, to node $k$ which exists with a probability of $0$ or $\bigtheta{n^{-1}}$ (Eq. \ref{eq:sparsity_constraint}, Corollary \ref{cor:sparsepercolating}). Due to sparsity (Eq. \ref{eq:sparsity_constraint}, Corollary \ref{cor:sparsepercolating}), $|\eta_i(G)|=0$ or $\bigtheta{1}$, i.e. the probability that there is an edge from (at least) one of the neighbors of $i$ to $k$ is either $0$ or $\bigtheta{n^{-1}}$. Node $k$ can be a bridging node of order $2$ for $(i,j)$ if $i$ does not have a shortest path to $j$ of length $2$---with probability $1-\bigtheta{n^{-1}}\sim 1$ (from above)---and (at least) one of the neighbors of $i$ has an edge to $k$---with probability $0$ or $\bigtheta{n^{-1}}$---and $k$ has an edge to $j$---with probability $0$ or $\bigtheta{n^{-1}}$ (from above). Due to the (conditionally) independent edge generation process considered here, we obtain the probability that $k$ is a bridging node of order $2$ for $(i,j)$ with probability either $0$ or $\bigtheta{n^{-2}}$. 

    This process can be repeated for all nodes $u$ that are not in $\eta_i(G)\cup\{j\}$ but are such that they have a positive connection probability to $j$ and (at least some) nodes in $\eta_i(G)$ have a positive connection probability to them. Since $G$ is not bottlenecked, the set of nodes potentially connecting a neighbor of $i$ with $j$, would be either of size $0$ or $\bigtheta{n}$. Since a shortest path from $i$ to $j$ of length $3$ can exist (independently) via any of these $\bigtheta{n}$ nodes, the probability that the shortest path from $i$ to $j$ is of length $2$ is either $0$ or $\bigtheta{n^{-1}}$.

    Due to sparsity (Eq. \eqref{eq:sparsity_constraint}, Corollary \ref{cor:sparsepercolating}) similar arguments follow for bridges of higher order since the size of neighborhoods of any finite order $l$ is asymptotically bounded: $|\eta_i(G;l)|=0$ or $\bigtheta{1}$.
\end{proof}

\begin{corollary}[Vanishing probability of shared bridging nodes]\label{cor:vanishingshare}
    Let $i\ne j$ and $u\ne v$ be nodes in a sparse ensemble average network $G$ such that it is not the case that both $i=u$ and $j=v$. If $i\ne u$ and $j\ne v$, then let $k$ be a node in $G$ distinct from $u$ and $v$---but possibly identical to $i$ or $j$---that lies on a shortest path from $i$ to $j$. If either $i=u$ or $j=v$ then let $k$ be a node in $G$ distinct from $i,j,u,v$ that lies on a shortest path from $i$ to $j$. Then asymptotically, the probability that $k$ is a bridging node for $(u,v)$---of any finite order $l$---is either $0$ or scales as $\bigtheta{n^{-2}}$.
\end{corollary}
\begin{proof}
    This follows from applying Lemma \ref{lemma:vanishingbridge} to any node $k$ acting as a bridging node of any finite order $l$ for $(u,v)$. (Note: If the graph is undirected we also exclude the condition that both $i=v$ and $j=u$.)
\end{proof}
Next, we consider a lemma that allows us to derive the set of recursive equations for the geodesic length distribution (GLD) in SEANs.

\begin{lemma}[First-order bridging probability]\label{lemma:1}
    Let $i,j,k$ be distinct nodes in a sparse ensemble average network $G$, $\phi_i(G)$ be the event that $i$ is on the giant component of $G$ (and $\prob{\phi_i(G)}=\bigomega{1}$), and $\lambda_{ij}(G)$ be the length of a shortest path between $i,j$ in $G$. Then asymptotically,
    \begin{equation}\label{eq:lemma1_apdx}
    \begin{split}
        &\condprob{\lambda_{ik}(G)=l-1,\lambda_{kj}(G)=1}{\phi_i(G),\lambda_{ij}(G)\ge l} \approx \\&\condprob{\lambda_{ik}(G)=l-1}{\phi_i(G)}\prob{A_{kj}=1}.
    \end{split}
    \end{equation}
\end{lemma}
\begin{proof}
    Write the LHS of Eq. \eqref{eq:lemma1_apdx} as:
    \begin{equation*}
        \begin{split}
            &\condprob{\lambda_{ik}(G)=l-1,\lambda_{kj}(G)=1}{\phi_i(G),\lambda_{ij}(G)\ge l} =\\& \condprob{\lambda_{kj}(G)=1}{\phi_i(G),\lambda_{ij}(G)\ge l,\lambda_{ik}(G)=l-1}\\&\times\condprob{\lambda_{ik}(G)=l-1}{\phi_i(G),\lambda_{ij}(G)\ge l}.
        \end{split}
    \end{equation*}
    Consider the second factor. If $\lambda_{ij}(G)=\infty$ then together with $\phi_i(G)$ it implies that $\neg\phi_j(G)$. Consequently, there's no way for $j$ to influence the shortest paths from $i$ to $k$ on the giant component allowing us to write the second factor as $\condprob{\lambda_{ik}(G)=l-1}{\phi_i(G)}$. If $l\le\lambda_{ij}(G)<\infty$ then there is a shortest path from $i$ to $j$ of finite length no smaller than $l$, which may share some nodes with a shortest path from $i$ to $k$ (on the giant component), and therefore influence $\lambda_{ik}(G)$ being $l-1$. However, from Lemma \ref{lemma:vanishingbridge} and Corollary \ref{cor:vanishingshare} for SEANs, we know that any given node on a shortest path from $i$ to $j$ will have a vanishing contribution of $0$ or $\bigtheta{n^{-2}}$ to the probability of shortest path from $i$ to $k$ being a certain length ($l-1$) which is either $0$ or of order $\bigtheta{n^{-1}}$. That is, in this case too the influence of $\lambda_{ij}(G)$ can be ignored, yielding 
    \begin{align*}
        &\condprob{\lambda_{ik}(G)=l-1}{\phi_i(G),\lambda_{ij}(G)\ge l}\approx\\&\condprob{\lambda_{ik}(G)=l-1}{\phi_i(G)}.
    \end{align*}
    Consider the first factor. Since $i$ is a percolating node the removal of a single edge from $i$ to $j$ would have a vanishing impact of order $\littleo{1}$ on its percolation probability (Table \ref{tab:perccontribs}) and therefore asymptotically $\phi_i(G)\equiv\phi_i(G^{\setminus ij})$. Since the length of a shortest path from $i$ to $j$ is strictly greater than the length of a shortest path from $i$ to $k$, the edge from $k$ to $j$ can neither contribute to the shortest path from $i$ to $k$ nor can it lower the shortest path length for $i$ to $j$, that is $\lambda_{ij}(G)\ge l\cap\lambda_{ik}(G)=l-1\equiv \lambda_{ij}(G^{\setminus ij})\ge l\cap\lambda_{ik}(G^{\setminus ij})=l-1$. This asymptotically yields the first factor as:
    \begin{align*}
    &\condprob{\lambda_{kj}(G)=1}{\phi_i(G^{\setminus ij}),\lambda_{ij}(G^{\setminus ij})\ge l,\lambda_{ik}(G^{\setminus ij})=l-1}\\&=\prob{A_{kj}=1}.
    \end{align*}
    We therefore obtain the RHS of Eq. \eqref{eq:lemma1_apdx}.    
\end{proof}
We emphasize that the connection probability from $k$ to $j$ appears on the RHS of Eq. \eqref{eq:lemma1_apdx} unconditioned on whether either node is on the giant component. One way to understand this informally is to first consider an expression much simpler to analyze than the LHS of Eq. \eqref{eq:lemma1_apdx}: 
\begin{align*}
&\condprob{\lambda_{ik}(G)=l-1,\lambda_{kj}(G)=1}{\phi_i(G)}=\\&\condprob{\lambda_{kj}(G)=1}{\lambda_{ik}(G)=l-1,\phi_i(G)}\\&\times\condprob{\lambda_{ik}(G)=l-1}{\phi_i(G)}.
\end{align*}
Consider the second factor on the RHS: knowing that $k$ is at a distance $l-1$ from $i$, and $\phi_i(G)$, and $l<\infty$, implies $\phi_k(G)$. That is, here we obtain an asymptotic expression which \emph{does} condition on $\phi_k(G)$: 
\begin{align*}
&\condprob{\lambda_{ik}(G)=l-1,\lambda_{kj}(G)=1}{\phi_i(G)}\approx\\&\condprob{\lambda_{ik}(G)=l-1}{\phi_i(G)}\condprob{A_{kj}=1}{\phi_k(G)}.
\end{align*}
Now, consider the expression on the LHS of Eq. \eqref{eq:lemma1_apdx}:
\begin{equation*}
    \begin{split}
        &\condprob{\lambda_{ik}(G)=l-1,\lambda_{kj}(G)=1}{\lambda_{ij}(G)\ge l,\phi_i(G)}=\\&\condprob{\lambda_{kj}(G)=1}{\lambda_{ik}(G)=l-1,\lambda_{ij}(G)\ge l,\phi_i(G)}\\&\times\condprob{\lambda_{ik}(G)=l-1}{\lambda_{ij}(G)\ge l,\phi_i(G)}.
    \end{split}
\end{equation*}
Asymptotically, in a sparse network with no bottlenecks the shortest path between $i$ and $j$ would have a vanishing correlation to the shortest path between $i$ and $k$ (Corollary \ref{cor:vanishingshare}), that is, for the second factor on the RHS we obtain $\condprob{\lambda_{ik}(G)=l-1}{\lambda_{ij}(G)\ge l,\phi_i}\approx\condprob{\lambda_{ik}(G)=l-1}{\phi_i(G)}$. The first factor here is similar to the first factor we obtained above, except with the additional conditioning on $\lambda_{ij}(G)\ge l$. The fact that $j$ is further than distance $l-1$ from $i$, alongside $\lambda_{ik}(G)=l-1$ and $\phi_i(G)$ tells us something important: namely that \emph{even in the absence} of $j$, $k$ will be on the giant component. Since $\condprob{A_{kj}=1}{\phi_k(G)}$ differs from $\prob{A_{kj}=1}$ only insofar as $k$ can be on the giant component \emph{because of} $j$---from Eq. \eqref{eq:prob_connect_exact} $\prob{\phi_j(G)}\to 0$ would give $\condprob{A_{kj}=1}{\phi_k(G)}\to \prob{A_{kj}=1}$---but here we know for a fact that $k$ does not use $j$ to be on the giant component, the appropriate form is the unconditional $\prob{A_{kj}=1}$ instead of $\condprob{A_{kj}=1}{\phi_k(G)}$, yielding the RHS of Eq. \eqref{eq:lemma1_apdx}. This intuition is formalized in a supplementary proof for Lemma \ref{lemma:1} provided in Ref. \onlinecite{loomba2024geodesics2}---which helps produce a result on higher-order bridging probabilities in Lemma 2 of Ref. \onlinecite{loomba2024geodesics2}.

\subsection{\label{sec:percprob}Percolation probability}

\paragraph*{Notation} Let $G$ be a network, or graph, with $n$ nodes and $G^{\setminus i}$ indicate the subgraph with the node $i$ unconsidered/removed. With a minor change in notation from the main text, let $\phi_i(G)$ refer to the event that node $i$ is on the giant component of $G$, and $\prob{\phi_i(G)}$ be the ``percolation probability'' of node $i$ in graph $G$ and define $\prob{\phi_i(G^{\setminus i})}\triangleq 0$. Let
\begin{equation}
    \label{eq:nonzeroprob}
    S_i^{>0}\triangleq\{j\in[n]: \prob{A_{ij}=1}>0\}
\end{equation}
be the set of nodes that have a strictly positive connection probability from $i$. A node $i$ is said to be ``percolating'' in $G$ if it has a non-vanishing probability of being on the giant component: $\prob{\phi_i(G)}=\bigomega{1}$. Correspondingly, 
    \begin{equation}
        \label{eq:percnodes}
        S_\mathrm{p}(G)\triangleq\{j\in[n]: \prob{\phi_j(G)} = \bigomega{1}\}
    \end{equation}
refers to the set of percolating nodes in $G$. A node $i$ is said to be ``non-percolating'' in $G$ if it has zero probability of being on the giant component: $\prob{\phi_i(G)}=0$. Correspondingly, 
    \begin{equation}
        \label{eq:nonpercnodes}
        S_\mathrm{n}(G)\triangleq\{j\in[n]: \prob{\phi_j(G)} = 0\}
    \end{equation}
refers to the set of non-percolating nodes in $G$. A node $i$ is said to be ``dangling'' in $G$ if it has a strictly positive but vanishing probability of being on the giant component: $0<\prob{\phi_i(G)}=\littleo{1}$. Correspondingly, 
    \begin{equation}
        \label{eq:danglingnodes}
        S_\mathrm{d}(G)\triangleq\{j\in[n]: 0<\prob{\phi_j(G)} = \littleo{1}\}
    \end{equation}
refers to the set of dangling nodes in $G$. Evidently, any node in $G$ can only be \emph{one} of the three percolation types, and the cardinality of each of these node sets will depend on $n$ in a manner determined by how the graph sequence grows. In this section we refer to $G$ as a ``sparse network'' if it satisfies all assumptions of a sparse ensemble average network (SEAN), but we relax the assumption of ``no bottlenecks'' from Eq. \eqref{eq:nonultrabottlenecked}. We now analyze how nodes of different percolation types contribute to the percolation behavior of any given node $i$ in a sparse network $G$.

\begin{lemma}[Node contributions to percolation]\label{lemma:perccontribs}
    For node $i$ in a sparse network $G$, the contribution of other nodes to the percolation probability of $i$ is given by Table \ref{tab:perccontribs}.
    \begin{table}[ht]
    \centering
    \begin{tabular}{cc|c|c|c|c}
        &\multirow{3}{*}{$S$} & \multicolumn{3}{|c}{Cardinality of $S\cap S_i^{>0}$}\\\cline{3-6}
        && (1) & (2) & (3) & (4) \\
        && $\bigomega{n}$ & $\littleo{n}$ & $\bigtheta{1}$ & $0$\\\hline
       (a) & $S_\mathrm{p}(G^{\setminus i})$ & $\bigomega{1}$ & $\littleo{1}$ & $\littleo{1}$ & $0$ \\
        (b) & $S_\mathrm{d}(G^{\setminus i})$ & $\littleo{1}$ & $\littleo{1}$ & $\littleo{n^{-1}}$ & $0$\\
        (c) & $S_\mathrm{n}(G^{\setminus i})$ &  $0$ & $0$ & $0$ & $0$
    \end{tabular}
    \caption{Contributions of sets of (a) percolating, (b) dangling, and (c) non-percolating nodes in the subgraph $G^{\setminus i}$---that also have a non-zero probability of connection from node $i$---to the percolation probability of $i$ in a sparse network $G$, depending on how their cardinalities scale (columns (1)--(4)). For example, a non-vanishing number [$\bigomega{n}$; column (1)] of percolating nodes [$S_\mathrm{p}(G^{\setminus i})$; row (a)] contribute non-vanishingly [$\bigomega{1}$; (1a)], while a non-vanishing number of dangling nodes [$S_\mathrm{d}(G^{\setminus i})$; row (b)] can only contribute vanishingly [$\littleo{1}$; (1b)] to the percolation of node $i$. (Here, we differentiate vanishing and zero contributions, i.e. we use $\littleo{\cdot}$ to denote vanishing but strictly positive contributions.) Uses Eqs. \eqref{eq:sparsity_constraint}, \eqref{eq:nonzeroprob}, \eqref{eq:percnodes}, \eqref{eq:nonpercnodes}, \eqref{eq:danglingnodes}, \eqref{eq:percunionbound}, and \eqref{eq:percnodesdependence2}.}
    \label{tab:perccontribs}
\end{table}
\end{lemma}
\begin{proof}
    Without loss of generality, generate $G$ such that the edges of node $i$ are generated at the final step, which will be independent of anything else (Eq. \eqref{eq:ciem}). Then $i$ can be on the giant component of $G$ if and only if it (independently) connects to at least one node with a positive probability of being on the giant component of $G^{\setminus i}$:
     $$\prob{\phi_i(G)}=\prob{\bigcup_{j\in[n]} (A_{ij}=1) \cap \phi_j(G^{\setminus i})}.$$
     Partitioning the node set into the set of percolating ($S_\mathrm{p}(G^{\setminus i})$; Eq. \eqref{eq:percnodes}), dangling ($S_\mathrm{d}(G^{\setminus i})$; Eq. \eqref{eq:danglingnodes}) and non-percolating ($S_\mathrm{n}(G^{\setminus i})$; Eq. \eqref{eq:nonpercnodes}) nodes in $G^{\setminus i}$, the above equation yields:     
     \begin{widetext}
     \begin{equation}\label{eq:percunionbound}
         \begin{split}
             \prob{\phi_i(G)}&=\prob{\left[\bigcup_{j\in S_\mathrm{p}(G^{\setminus i})} (A_{ij}=1) \cap \phi_j(G^{\setminus i})\right] \cup \left[\bigcup_{j\in S_\mathrm{d}(G^{\setminus i})} (A_{ij}=1) \cap \phi_j(G^{\setminus i})\right] \cup \left[\bigcup_{j\in S_\mathrm{n}(G^{\setminus i})} (A_{ij}=1) \cap \phi_j(G^{\setminus i})\right]}\\
             &\le \sum_{j\in S_\mathrm{p}(G^{\setminus i})}\prob{(A_{ij}=1) \cap \phi_j(G^{\setminus i})} + \sum_{j\in S_\mathrm{d}(G^{\setminus i})}\prob{ (A_{ij}=1) \cap \phi_j(G^{\setminus i})} + \sum_{j\in S_\mathrm{n}(G^{\setminus i})}\prob{ (A_{ij}=1) \cap \phi_j(G^{\setminus i})}\\
             &= \sum_{j\in S_\mathrm{p}(G^{\setminus i})}\prob{A_{ij}=1}\prob{\phi_j(G^{\setminus i})} + \sum_{j\in S_\mathrm{d}(G^{\setminus i})}\prob{A_{ij}=1}\prob{\phi_j(G^{\setminus i})},
         \end{split}
     \end{equation}
     \end{widetext}
     where the inequality is due to a union bound, the independence $\inde{(A_{ij}=1)}{\phi_j(G^{\setminus i})}$ is due to Eq. \eqref{eq:ciem}, and non-percolating nodes contribute $0$ due to Eq. \eqref{eq:nonpercnodes} yielding row (c) of Table \ref{tab:perccontribs}. From the definition of a dangling node, inserting Eq. \eqref{eq:danglingnodes}---alongside Eqs. \eqref{eq:sparsity_constraint} and \eqref{eq:nonzeroprob}---in the final expression of Eq. \eqref{eq:percunionbound} suggests that a set of $\bigtheta{1}$ dangling nodes can contribute no more than $\littleo{n^{-1}}$ and a set of $\littleo{n}$ or $\bigomega{n}$ dangling nodes can contribute no more than $\littleo{1}$ to the percolation probability of node $i$, yielding row (b) of Table \ref{tab:perccontribs}. Similarly, from the definition of a percolating node, inserting Eq. \eqref{eq:percnodes}---alongside Eqs. \eqref{eq:sparsity_constraint} and \eqref{eq:nonzeroprob}---in Eq. \eqref{eq:percunionbound} suggests that a set of $\bigtheta{1}$ or $\littleo{n}$ percolating nodes can contribute no more than $\littleo{1}$ to the percolation probability of node $i$, yielding columns (2) and (3) of row (a) of Table \ref{tab:perccontribs}.

     Consider the ansatz that node $i$ is a percolating node. Then it must be the case that $|S_\mathrm{p}(G^{\setminus i})|=\bigomega{n}$, for if that were not true then the first term on the RHS of Eq. \eqref{eq:percunionbound} would scale as $\littleo{1}$ implying that the RHS of Eq. \eqref{eq:percunionbound} scales as $\littleo{1}$, that is, $i$ is not a percolating node in $G$ which contradicts the ansatz (Eq. \eqref{eq:percnodes}). It then follows that the percolation probability of a percolating node $i$ in $G$ asymptotically depends only on a $\bigomega{n}$-sized set of percolating nodes in $G^{\setminus i}$:
     \begin{subequations}
         \begin{align}
            \label{eq:percnodesdependence}
            \prob{\phi_i(G)}&\approx \prob{\bigcup_{j\in S_\mathrm{p}(G^{\setminus i})} (A_{ij}=1) \cap \phi_j(G^{\setminus i})},\\
            \label{eq:percnodessize}
            |S_\mathrm{p}(G^{\setminus i})|&=\bigomega{n}.
         \end{align}
     \end{subequations}
    As shown above, and noted in Table \ref{tab:perccontribs} [column (3)], a single node can contribute no more than $\littleo{1}$ to the percolation probability of another node. Consequently, the addition of $i$ back to $G^{\setminus i}$ may only change the percolation type of a non-percolating node to that of a dangling node. Table \ref{tab:perccontribs} [column (3)] further shows that a \emph{single} node can contribute non-vanishingly to the percolation probability of another node only when the former is percolating and the latter is either dangling or non-percolating---since in that case the contribution is $\littleo{1}$ and strictly positive, and the prior percolation probabilities are $\littleo{1}$. Consequently, for all percolation types \emph{except} potentially when $i$ is percolating and $j$ is dangling or non-percolating:
    \begin{equation}
        \label{eq:percsubfullgraph}
        \prob{\phi_j(G^{\setminus i})}\approx\prob{\phi_j(G)}.
    \end{equation}
    Since all nodes $j$ on the RHS of Eq. \eqref{eq:percnodesdependence} are percolating, Eq. \eqref{eq:percsubfullgraph} yields for the percolating node $i$:
    \begin{equation}
        \label{eq:perprobpercolating}
         \prob{\phi_i(G)}\approx \prob{\bigcup_{j\in S_\mathrm{p}(G^{\setminus i})} (A_{ij}=1) \cap \phi_j(G)}.
    \end{equation}
     Again, due to the result in Table \ref{tab:perccontribs} [column (3)], a single node can contribute no more than $\littleo{1}$ to the percolation probability of another node, and since the percolation probability of a percolating node is of the order $\bigomega{1}$ (Eq. \eqref{eq:percnodes}) this implies an asymptotically vanishing dependence between the percolation events for percolating nodes. This, alongside Eq. \eqref{eq:perprobpercolating}, permits us to rewrite Eq. \eqref{eq:percnodesdependence}:
     \begin{equation}
         \label{eq:percnodesdependence2}
         \begin{split}
            &\prob{\phi_i(G)}\approx \prob{\neg\bigcap_{j\in S_\mathrm{p}(G)} \neg\left[(A_{ij}= 1) \cap \phi_j(G)\right]}\\&\approx 1 - \prod_{j\in S_\mathrm{p}(G)}\left[1-\prob{A_{ij}=1}\prob{\phi_j(G)}\right]\\
            &=1-\exp\left(\sum_{j\in S_\mathrm{p}(G)}\log\left(1-\prob{A_{ij}=1}\prob{\phi_j(G)}\right)\right)\\&\approx 1- \exp\left(-\sum_{j\in S_\mathrm{p}(G)}\prob{A_{ij}=1}\prob{\phi_j(G)}\right),
         \end{split}
     \end{equation}
     where in the second approximation we use the independence edge assumption (Eq. \eqref{eq:ciem}) and in the last step we apply a first-order approximation due to sparsity (Eq. \eqref{eq:sparsity_constraint}). Eqs. \eqref{eq:percnodessize} and \eqref{eq:percnodes} indicate that the RHS of Eq. \eqref{eq:percnodesdependence2} is of the order $\bigomega{1}$, which proves the ansatz and yields column (1) of row (a) of Table \ref{tab:perccontribs}.

     Finally, we remark that if node $i$ is a dangling node in $G$ such that $0<|S_\mathrm{p}(G^{\setminus i})|=\littleo{n}$ and $|S_\mathrm{d}(G^{\setminus i})|=\order{1}$, then Table \ref{tab:perccontribs} [row (a) columns (2, 3) and row (b) columns (3, 4)] suggests that the contribution of dangling nodes in $G$ to the percolation probability of $i$ is vanishing in comparison to that of percolating nodes in $G$. This allows us to write Eq. \eqref{eq:percnodesdependence} for such dangling nodes---that we call ``first-order dangling''---verbatim, and an identical argument yields an equation analogous to Eq. \eqref{eq:percnodesdependence2}.
\end{proof}

Importantly, we note from Table \ref{tab:perccontribs} that dangling nodes may exist in $G$ only if there are percolating nodes in $G$, that provide ``support'' to dangling nodes. A node $i$ is said to be ``supporting'' in $G$ if it is percolating in $G$ and has a strictly positive connection probability to (a) dangling node(s) in $G$. We also note from Table \ref{tab:perccontribs} that while some dangling nodes may connect to the giant component mostly through direct connections [row (a) columns (2, 3)], others may connect to the giant component mostly through indirect connections via other dangling nodes [row (b) columns (1, 2)]. A node $i$ is said to be ``first-order dangling'' in $G$ if it is dangling in $G$ and has a strictly positive connection probability to (a) percolating node(s) in $G$ and a strictly positive connection probability to no more than $\order{1}$ dangling node(s) in $G$. Similarly, a node $i$ is said to be higher-order dangling in $G$ if it is dangling in $G$ and has a strictly positive connection probability to at least $\littleomega{1}$ dangling nodes in $G$. Crucially, Lemma \ref{lemma:perccontribs} provides us with Eq. \eqref{eq:percnodesdependence2} to estimate the asymptotic percolation probability of a node if we know the percolation behavior of other nodes in the graph, which yields the following corollary.

\begin{corollary}[Percolation probability]\label{lemma:perc}
    For a percolating, first-order dangling, or non-percolating node $i$ in a sparse network $G$, asymptotically:
    \begin{equation}\label{eq:perc}
        \prob{\phi_i(G)}\approx1-\exp\left(-\sum_{j\in[n]}\prob{A_{ij}=1}\prob{\phi_j(G)}\right).
    \end{equation}
\end{corollary}
\begin{proof}
    The proof follows from a direct application of Lemma \ref{lemma:perccontribs}. Since nodes that are not percolating in $G$ contribute vanishingly to the percolation probabilities of percolating or first-order dangling nodes (Eq. \eqref{eq:percnodesdependence}), extending the domain of node $j$ on the RHS of the resulting Eq. \eqref{eq:percnodesdependence2} to the set of all nodes $[n]$ in $G$ has an asymptotically vanishing effect on the percolation probability of such nodes, yielding Eq. \eqref{eq:perc}. For non-percolating nodes, Eq. \eqref{eq:perc} is trivially satisfied. 
\end{proof}
Corollary \ref{lemma:perc} provides us with a \emph{self-consistent} equation (Eq. \eqref{eq:perc}) that can be solved for \emph{all} nodes in $G$---that are not higher-order dangling. However, the peculiar percolation behavior of dangling nodes that depends on an $\littleo{n}$-sized set of nodes makes analyzing their shortest paths more challenging. Thankfully, as we describe next, there is a connection between the existence of dangling nodes and the network containing bottlenecks that simplifies our analysis.

\paragraph*{Networks with bottlenecks}The assumption of sparsity (Eq. \eqref{eq:sparsity_constraint}) will allow us to asymptotically discount the contribution of any given edge to the properties of any given node. However, it may not necessarily allow us to asymptotically discount the contribution of any given node to the properties of shortest paths between any given node pair, requiring us to define an additional asymptotic condition that there are no ``bottlenecks'' between any node pair (Eq. \eqref{eq:nonultrabottlenecked}). Let $i,j$ be a pair of nodes in network $G$ whose expected adjacency matrix is given by $\avg{\mat{A}}$. Then the node pair $(i,j)$ is potentially connected by $S_\mathrm{via}^{ij}\triangleq\{k\in[n]: \avg{A_{ik}}>0, \avg{A_{kj}}>0\}$. If $0<|S_\mathrm{via}^{ij}|=\littleo{n}$ then $(i,j)$ is said to be ``bottlenecked'' by $S_\mathrm{via}^{ij}$. A node $i$ is said to be ``bottlenecking'' in $G$ if $\exists j\in [n],\exists k\in[n]\setminus\{j\},\exists S\subset [n]\setminus\{j,k\}$ such that $i\in S$ and $(j,k)$ is bottlenecked by $S$. The network $G$ and its corresponding network model are said to be bottlenecked if and only if there exists a bottlenecking node in $G$.

\begin{lemma}[Supporting nodes are bottlenecking]\label{lemma:suppbottle} A supporting node in a sparse network $G$ is a bottlenecking node for a node pair such that one of them is (first-order) dangling and the other one is percolating in $G$.
\end{lemma}
\begin{proof}
    Let $i$ be a supporting node in $G$, which implies that it is also a percolating node in $G$, and $\exists S^i_{\mathrm{dan}}\subset[n], |S^i_{\mathrm{dan}}|>0$, a (maximal) set of (first-order) dangling nodes in $G$ that it must be contributing $\littleo{1}$ to the percolation probability of (Table \ref{tab:perccontribs} row (a) columns (2, 3)). Every first-order dangling node $j\in S^i_{\mathrm{dan}}$ must connect to a set $S^j_{\mathrm{sup}}$ of size no larger than $\littleo{n}$ (and no smaller than $1$) of supporting (and percolating) nodes including $i$ (Table \ref{tab:perccontribs} row (a) columns (2, 3)), and to a set $S^j_{\mathrm{dan}}$ of size $\order{n}$ of dangling nodes (Table \ref{tab:perccontribs} row (b)). Let $S_\mathrm{p}(G)$ be the set of percolating nodes in $G$, $S_i^{>0}$ be the set of all nodes which $i$ has a positive connection probability to, and $S_{i\mathrm{p}}^{>0}\triangleq S_i^{>0}\cap S_\mathrm{p}(G)$. Since $i$ is percolating 
    \begin{equation}
        \label{eq:ipercolatingsize}
        |S_{i\mathrm{p}}^{>0}|=\bigomega{n}
    \end{equation}
    (Table \ref{tab:perccontribs} row (a) column (1)).

    Evidently, $S_j^{>0} = S^j_\mathrm{sup}\cup S^j_\mathrm{dan}$. If $|S^j_\mathrm{dan}|=\littleo{n}$ then $|S_j^{>0}|=\littleo{n}$. For any node $k\in S_{i\mathrm{p}}^{>0}$, let $S_{jik}^{>0}\triangleq\{l\in S_j^{>0}: k\in S_l^{>0}\}$ be the set of nodes in $S_j^{>0}$ that can connect to $k$. Since $|S_{jik}^{>0}|\le |S_j^{>0}|=\littleo{n}$ it follows that $|S_{jik}^{>0}|=\littleo{n}$, and because $i\in S_{jik}^{>0}$ we have $|S_{jik}^{>0}|>0$. It follows that $(j,k)$ is bottlenecked by $S_{jik}^{>0}$, and therefore $i$ is a bottlenecking node for a dangling ($j$) and percolating node ($k$).

    If $|S^j_\mathrm{dan}|=\bigomega{n}$, then it must be the case that $\exists k\in S_{i\mathrm{p}}^{>0}$ such that, if $S_{jik}^{>0\mathrm{d}}\triangleq\{l\in S^j_\mathrm{dan}: k\in S_l^{>0}\}$ is the set of nodes in $S^j_\mathrm{dan}$ that can connect to $k$, then $|S_{jik}^{>0\mathrm{d}}|=\littleo{n}$. For if that were not the case then $\forall k\in S_{i\mathrm{p}}^{>0}: |S_{jik}^{>0\mathrm{d}}|=\bigomega{n}$ and using Eq. \eqref{eq:ipercolatingsize} we then have $\bigomega{n^2}$ non-zero connection probabilities between $S^j_\mathrm{dan}$ and $S_{i\mathrm{p}}^{>0}$. However, since every dangling node in $S^j_\mathrm{dan}$ can have non-zero connection probabilities to no more than $\littleo{n}$ percolating nodes (as otherwise it would be a percolating node; Table \ref{tab:perccontribs} row (a)), it follows that we must have $\littleo{n^2}$ non-zero connection probabilities between $S^j_\mathrm{dan}$ and $S_{i\mathrm{p}}^{>0}$---that is, we have a contradiction. Consequently, we pick one such node $k\in S_{i\mathrm{p}}^{>0}$ such that $|S_{jik}^{>0\mathrm{d}}|=\littleo{n}$. Then defining $S_{jik}^{>0}$ as above the rest of the argument follows, i.e. $i$ is a bottlenecking node for a dangling ($j$) and percolating node ($k$).    
\end{proof}

\begin{corollary}\label{lemma:dangbottle} A sparse ensemble average network (SEAN) has no dangling nodes.
\end{corollary}
\begin{proof}
    Say there exist dangling nodes in a sparse ensemble average network (SEAN), which implies the existence of supporting nodes that would be bottlenecking (Lemma \ref{lemma:suppbottle}, which further implies that the network is bottlenecked, but a SEAN is assumed to have no bottlenecks (Eq. \eqref{eq:nonultrabottlenecked}). Therefore, we have a contradiction, and the network must not have any dangling nodes.
\end{proof}
Corollary \ref{lemma:dangbottle} provides us the key technical detail that a sparse ensemble average network (SEAN) has no dangling nodes, which allows us to ignore dangling nodes when considering the geodesics in SEANs. In particular, when deriving the initial condition of the GLD (Lemma \ref{lemma:deg_gcc} in Appendix \ref{sec:apdx_connect_gcc}) and bridging probabilities for the recursive step of the GLD (Lemma \ref{lemma:vanishingbridge} in Appendix \ref{sec:apdx_lemmas12_bridging}), as we did in the preceeding sections, we need to only consider nodes that are either percolating or non-percolating.

\paragraph*{Bottlenecks in practice} Since we assume that networks do not have bottlenecks, we briefly remark on distinguishing a sparse network (model) from a sparse network (model) without bottlenecks, i.e. SEAN. In theory, the distinction is based on the asymptotic scaling of the network's structure with the network's size. Therefore, if the actual statistical network model is known, then it's a matter of verifying under the model assumptions that there are no bottlenecks between any node pair. In particular, if the model satisfies the assumptions of an SGRN model, then we do not have to perform any verification as we know that the model cannot have any bottlenecks (Lemma \ref{lemma:generalultrabottleneck}). In practice, if we are given a particular---and potentially partially observed---network, then one would have to assume a ``best-fitting'' model that the network could have been reasonably sampled from. If the best-fitting network model belongs to the class of SGRN or SEAN models then the results immediately follow. If, on the other hand, the best-fitting model \emph{does not} belong to the class of SGRN/SEAN models---recall that SGRNs are a subclass of SEANs---then the network can potentially consist of bottlenecks and the results may have to be extended. In principle, one can always infer a model from the broad class of SGRN/SEAN models that best fits the network---as SBM inference methods do \cite{holland1983sbm, peixoto2014nestedsbm, newman2018inference}---and apply our results. However, the accuracy of the GLD so obtained would depend on how ``close'' the best-fitting SGRN/SEAN model is to the best-fitting network model.

\section{\label{sec:apdx_general_sparsity}Degree and sparsity}

The out- and in-degrees of node $i$, in a network encoded by the adjacency matrix $\mat{A}$, respectively encode the number of edges emanating from and incident on $i$:
\begin{subequations}
\label{eq:def_degree}
\begin{align}
    \degreeout_i &\triangleq \sum_{j\ne i}A_{ij}, \\
    \degreein_i &\triangleq \sum_{j\ne i}A_{ji}.
\end{align}
\end{subequations}
\paragraph*{Degree distribution and sparsity}
Since edges are added independently, from Eq. \eqref{eq:ciem} the out- and in-degrees of node $i$ in a sparse ensemble average network (SEAN) follow a Poisson binomial distribution, with the expectation from Eq. \eqref{eq:def_degree}:
\begin{subequations}
\label{eq:degree_ensemble_node}
\begin{align}
    \label{eq:degree_ensemble_node_out}
    \avg{\degreeout_i}&=\avg{\sum_{j\ne i}A_{ij}}=\sum_{j\ne i}\avg{A_{ij}},\\
    \label{eq:degree_ensemble_node_in}
    \avg{\degreein_i}&=\avg{\sum_{j\ne i}A_{ji}}=\sum_{j\ne i}\avg{A_{ji}},
\end{align}
\end{subequations}
where the second equality in Eqs. \eqref{eq:degree_ensemble_node_out}, \eqref{eq:degree_ensemble_node_in} arise by linearity of expectation. We further define the mean network degree as the mean network \emph{out}-degree:
\begin{equation}
    \label{eq:degree_ensemble_network}
    \avgdegree\triangleq\expect{\degreeout}=\frac{\sum_i\sum_{j\ne i}\avg{A_{ij}}}{n},
\end{equation}
which we remark can be equivalently defined as the mean network \emph{in}-degree, since both are equal. We use the notation $\expect{\cdot}$ when averaging over nodes. For undirected networks we define the degree of node $i$ as:
\begin{equation}
    \label{eq:def_degree_undirected}
    \degree_i \triangleq \degreeout_i = \degreein_i,
\end{equation}
whose expectation is provided by Eq. \eqref{eq:degree_ensemble_node}. In this work, we assume that the network is sparse (Eq. \eqref{eq:sparsity_constraint}) which implies asymptotically bounded expected degree of every node---as seen from Eq. \eqref{eq:degree_ensemble_node}. By Le Cam's theorem \cite{lecam1960poissonbinomial} the  in-/out-degree distribution of a given node will asymptotically (as $n\to\infty$) approach the Poisson distribution whose mean $\avg{d_i^+},\avg{d_i^-}$ is given by Eqs. \eqref{eq:degree_ensemble_node_out},  \eqref{eq:degree_ensemble_node_in}. Evidently, the degree distribution for the whole network will be a mixture of Poisson distributions, whose expectation $\avg{d}$ is given by Eq. \eqref{eq:degree_ensemble_network}.

\paragraph*{Expected degree functions in SGRNs} Consider the sparse general random networks (SGRNs) of Sec. \ref{sec:general_graphs} that are a special case of SEANs.  For a node located at $x\in V$, let $\degreeout(x),\degreein(x)$ be its out- and in-degree. Recall from Eq. \eqref{eq:degree_ensemble_node_out} that for a SEAN the expected out-degree for node $i$ is given by $\avg{\degreeout_i}=\sum_{j\ne i}\avg{A_{ij}}$. For the SGRN, the sum of expectation over $n-1$ nodes asymptotically translates to $n$ times the expectation over node space $V$. This yields expressions analogous to Eq. \eqref{eq:degree_ensemble_node} for the expected out- and in-degree at $x$, and expected network degree:
\begin{subequations}
    \label{eq:general_degrees}
    \begin{align}
        \label{eq:general_degrees_out}
        \avg{\degreeout(x)}&\triangleq\meandegreeout(x) = n\int_V\nu(x,y)d\mu(y),\\
        \label{eq:general_degrees_in}
        \avg{\degreein(x)}&\triangleq\meandegreein(x) = n\int_V\nu(y,x)d\mu(y),\\
        \label{eq:general_degrees_mean}
        \avgdegree&\triangleq \expectwrt{\mu}{\meandegreeout(x)}=\int_V\meandegreeout(x)d\mu(x),
    \end{align}
\end{subequations}
where we use the convention of defining the mean network degree as the mean network \emph{out}-degree, and the notation $\expectwrt{\mu}{\cdot}$ when averaging over the node space $V$. For undirected networks $\nu(x,y)=\nu(y,x)$ almost everywhere, in which case we define the degree of node at $x$ similarly to Eq. \eqref{eq:def_degree_undirected} as $\degree(x)\triangleq\degreeout(x)= \degreein(x)$, yielding:
\begin{equation}
    \label{eq:general_degrees_deg}
        \avg{\degree(x)}\triangleq\meandegree(x)=\meandegreeout(x)= \meandegreein(x).
\end{equation}
As before, the sparsity assumption in Eq. \eqref{eq:sparsity_constraint_general} asymptotically implies bounded node degrees and a Poisson degree distribution:
\begin{equation}
    \label{eq:degree_distribution}
    \degree(x)\sim\poisson{\meandegree(x)},
\end{equation}
almost everywhere, that is, everywhere except for $x\in U$ where $U \subset V$ is of zero measure. Similar expressions follow for the out- and in-degrees, and the network degree distribution is a mixture of Poisson distributions with the expectation $\avgdegree$.

\paragraph*{Integral operator}Sparsity also implies that the connectivity kernel scaled by the network size $n$ is square integrable, i.e. $n\nu\in\mathcal{L}^2(V\times V, \mu\times\mu)$. This allows us to define the Hilbert-Schmidt integral operator $T:\mathcal{L}^2(V,\mu)\to\mathcal{L}^2(V,\mu)$ in Eq. \eqref{eq:integral_op} that acts on the space of $\mathcal{L}^2$ functions in $V$. Being a Hilbert-Schmidt operator, $T$ is compact: it is either of finite rank or the limit of finite-rank operators. By a similar argument we obtain $n\omegaaf_1\in\mathcal{L}^2(V\times V, \mu\times\mu)$. By Fubini's theorem, the finitely iterated integrals in Eq. \eqref{eq:spd_analytic_general_omega}---for the conditional probability mass function (PMF)---can be computed in any order.

\section{\label{sec:apdx_finite_size}Finite-size effects in the GLD}
Since we consider the asymptotic limit of the network size to discount correlations in the network, in this section we analyze finite-size effects for the analytic form of the GLD, given by the recursive Eq. \eqref{eq:spd_main}. The proof for Lemma \ref{lemma:vanishingbridge} provides a general asymptotic argument to intuit how the probability mass of a node to be at distance $l$ from a given node, in a sparse ensemble average network, scales with network size $n$. Here, we first invoke the approximate closed form of the GLD obtained in Eq. \eqref{eq:spd_analytic_general_eig_uncorrected} of Sec. \ref{sec:general_graphs} to show when this scaling holds, and when it does not, for finite-size networks in the supercritical and subcritical regimes. Then, we describe how finite-size effects manifest in the analytic form of the GLD.

\paragraph*{Asymptotic scaling of geodesic lengths} Lemma \ref{lemma:vanishingbridge} shows that for any \emph{finite} length $l$, the probability of two nodes, in a sparse ensemble average network (SEAN), having a shortest path of length $l$ is either $0$ or scales as $\bigtheta{n^{-1}}$ in the asymptotic limit. This scaling behavior is evident in the expression for the approximate closed form of the survival function of the GLD for a sparse general random network (SGRN), as given by Eq. \eqref{eq:spd_analytic_general_eig_uncorrected}, which is asymptotically tight for finite lengths. That is, for nodes $j,k$ located at $X_j\in V,X_k\in V$:
\begin{equation}
\label{eq:finite_size_effects}
    \begin{split}
        \prob{\lambda_{jk}>l} &= \exp\left(-\frac{1}{n}\sum_{i=1}^NS_l(\tau_i)\varphi_i(X_j)\varphi_i(X_k)\right)\\&\approx 1-\frac{1}{n}\sum_{i=1}^NS_l(\tau_i)\varphi_i(X_j)\varphi_i(X_k)\\
        \implies \prob{\lambda_{jk}=l}&\approx \frac{1}{n}\sum_{i=1}^N\tau_i^l\varphi_i(X_j)\varphi_i(X_k)=\bigtheta{n^{-1}}\textrm{ or }0,
    \end{split}
\end{equation}
where the first approximation uses a first-order expansion of the exponential, and in the second equality we use $\prob{\lambda_{jk}=l}=\prob{\lambda_{jk}>l-1}-\prob{\lambda_{jk}>l}$, and the definition of $S_l(a)\triangleq{a+a^2+\dots+ a^{l}}$ as the geometric sum of $a$ starting at $a$ up to $l$ terms.

\paragraph*{Finite-size effects in the supercritical regime}However, when we begin to consider longer geodesics which scale with the network size $n$, then we can expect deviations between analytics and empirics in the supercritical regime, due to finite-size effects. If length $l=c\log n$ for some $c>0$, then from Eq. \eqref{eq:finite_size_effects} we obtain
\begin{equation}
\label{eq:scaling_logn}
    \prob{\lambda_{jk}=l}=\bigtheta{n^{c\log \tau_1-1}}\textrm{ or }0,
\end{equation}
where $\tau_1$ is the largest eigenvalue of the integral operator (Eq. \eqref{eq:integral_op}). If $\tau_1>1$---which from Theorem 2 in Ref. \onlinecite{loomba2024geodesics2} implies the supercritical regime and vice-versa---then the probability of two nodes having a shortest path of length $l$ does not scale as $\bigtheta{n^{-1}}$, and for $c\ge(\log\tau_1)^{-1}$ can scale as $\bigtheta{n^\epsilon}$ where $\epsilon\ge 0$. Consequently, at such geodesic lengths, we can expect non-vanishing correlations between different geodesics connecting the source and target nodes. To appreciate these finite-size effects, consider the simplest network model of an ER graph, with mean degree $\avgdegree>1$. From Eq. \eqref{eq:spd_er}, we know that the approximate closed form of the survival function of distribution of geodesic lengths $\lambda$ is a discrete version of the Gompertz distribution, i.e. $\prob{\lambda>l} = \exp\left(-a(e^{bl}-1)\right)$, where $a\triangleq \frac{\avgdegree}{n(\avgdegree-1)}$ and $b\triangleq \log \avgdegree$. This distribution has a mode around $l=\frac{\log\frac{n(\avgdegree-1)}{\avgdegree}}{\log \avgdegree}=\bigtheta{\frac{\log n}{\log \avgdegree}}$. Therefore, the finite-size effects should become apparent around the mode of the GLD.

\begin{figure}
    \centering
    \includegraphics[width=\columnwidth]{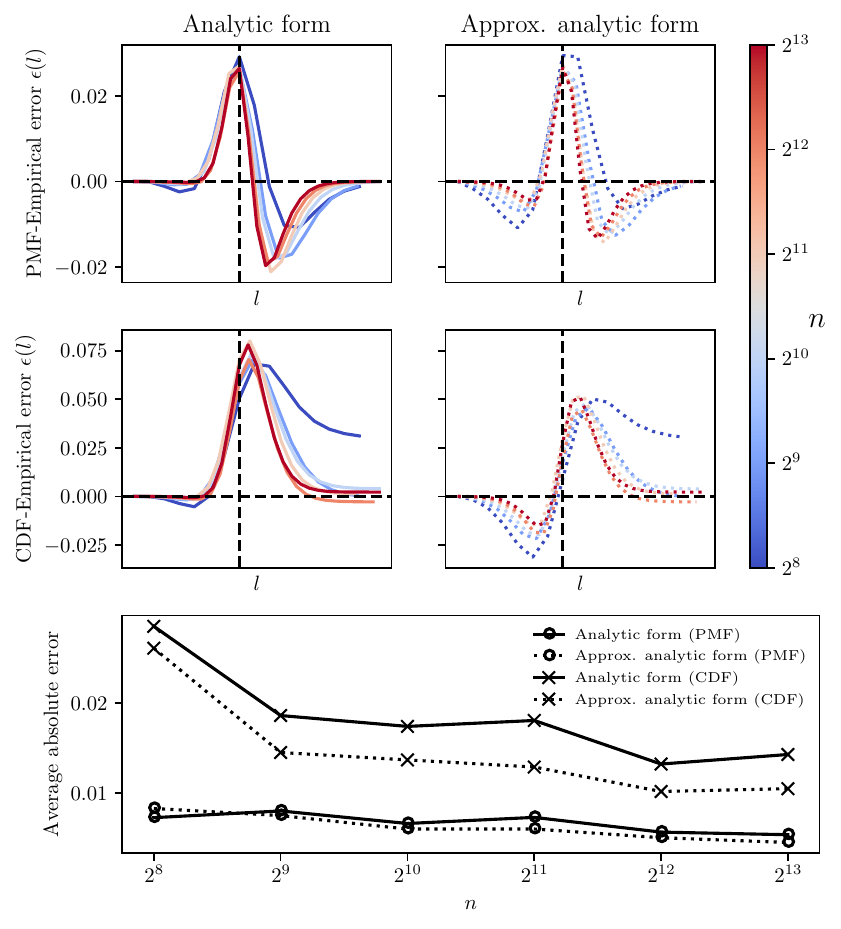}
    \caption{Average error in analytic and approximate analytic forms of the GLD vanishes asymptotically. For an ER graph with fixed mean degree $\avgdegree=2$ and varying network size $n\in\{2^8, 2^9, 2^{10}, 2^{11}, 2^{12}, 2^{13}\}$, the empirical error is computed by subtracting the (top row) empirical PMF and (middle row) CDF over 10 samples from the respective (left column) analytic and (right column) approximate analytic forms of the GLD. We remark that error bars are omitted for clarity, and the support of the distribution is scaled by its analytic mode---marked by the vertical dotted line---to aid comparison across network sizes. Bottom row shows the absolute empirical error averaged over the distribution support. Errors peak around the mode of the distribution but vanish at shorter and longer geodesic lengths, and the average errors decrease for larger networks for both analytic forms of both the PMF and CDF.}
    \label{fig:spd_er_d2_errs}
\end{figure}

We plot the error in the analytic (Eqs. \eqref{eq:spd_main}, \eqref{eq:prob_connect_exact}, and \eqref{eq:gcc_consistency}) and approximate analytic forms (Eqs. \eqref{eq:spd_main} and \eqref{eq:prob_connect_apx}) of the GLD for an ER graph with $\avgdegree=2$ in Fig. \ref{fig:spd_er_d2_errs}, and note that the error in the PMF stays negligible for short geodesics, but eventually rises to its largest at the analytic mode of the GLD i.e. where $l=\bigtheta{\log n}$, regardless of network size---see top row in Fig. \ref{fig:spd_er_d2_errs}. However, non-zero errors become more concentrated around the mode as network size is increased, and consequently the absolute error when averaged over the support of the distribution decreases as network size increases---see bottom row in Fig. \ref{fig:spd_er_d2_errs}. Notably, we observe that the error in the cumulative distribution function (CDF) as $l\to\infty$, which is indicative of the size of the giant component, tends to vanish with network size, for both the analytic and approximate analytic forms of the GLD---see middle row in Fig. \ref{fig:spd_er_d2_errs}, and Fig. \ref{fig:gcc_consistency_spld}. This implies that while the analytic form marginally overestimates the probability mass around the mode, it compensates for it by commensurately underestimating probability mass elsewhere. In particular, while the analytic form shifts mass away from longer geodesic lengths, the approximate analytic form appears to pull it from either sides of the mode---see top row in Fig. \ref{fig:spd_er_d2_errs}.

\begin{figure}[t]
    \centering
    \includegraphics[width=\columnwidth]{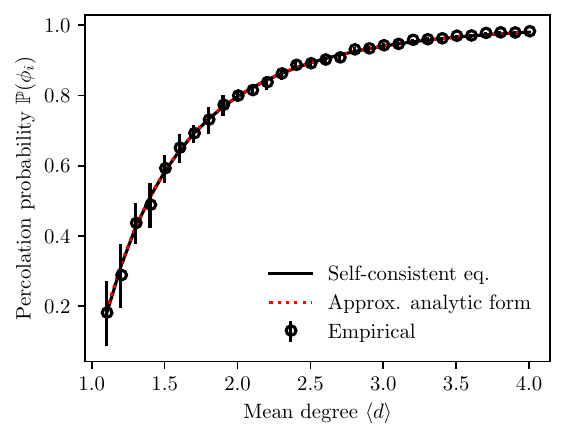}
    \caption{Percolation probabilities derived from the self-consistent equation (solid line) strongly overlap with estimates derived from the limiting value of the approximate analytic form of the GLD (dotted line). Network size is fixed at $n=1024$, while mean degree of the ER graph varies from $\avgdegree\in[1.1, 4]$. Solid and dotted lines indicate analytic solutions derived from Eq. \eqref{eq:gcc_consistency} and Eqs. \eqref{eq:spd_main}, \eqref{eq:perc_prob_limit}, \eqref{eq:perc_prob_pinf} respectively, while symbols ($\circ$) and bars indicate empirical estimates: mean and standard error over 10 network samples. There is good agreement between all three for varying levels of connectivity and percolation probabilities.}
    \label{fig:gcc_consistency_spld}
\end{figure}

\paragraph*{Finite-size effects in the subcritical regime} We now consider the subcritical regime, wherein the shortest path between source and target nodes cannot pass via nodes on a giant component. From Theorem 2 in Ref. \onlinecite{loomba2024geodesics2}, $\tau_1\le 1$ in the subcritical regime, which from Eq. \eqref{eq:scaling_logn} implies that the probability of two nodes having a shortest path of length $l=\bigtheta{\log n}$ is either $0$ or scales as $\bigtheta{n^{-1-\epsilon}}$, where $\epsilon\ge 0$. Thus even when considering geodesic lengths that scale with the network size, the asymptotic correlations between shortest paths should remain vanishingly small in the subcritical regime, and we expect to observe vanishing finite-size effects. This is evident in Fig. \ref{fig:spd_er_smallcomponent} for the GLD in subcritical ER graphs, and in Fig. \ref{fig:bipartite_cdf_smallcomponent} for the GLD in a subcritical bipartite stochastic block model (SBM).

\begin{figure}
    \centering
    \includegraphics[width=\columnwidth]{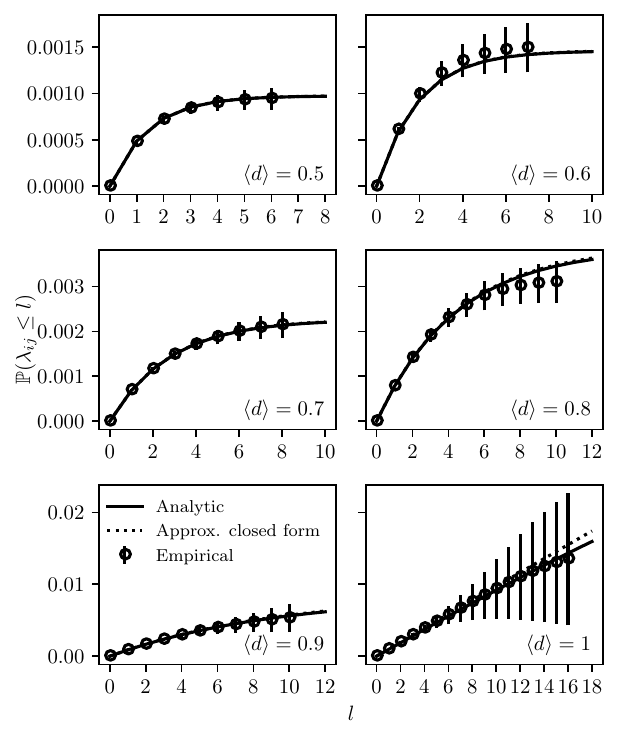}
    \caption{Empirical, analytic and (approximate) closed-form CDF of geodesic lengths on the small components in the subcritical regime agree with each other for an ER graph with varying connectivity. Network size is fixed at $n=1024$, while mean degree varies as $\avgdegree\in\{0.5, 0.6, 0.7, 0.8, 0.9, 1\}$. Solid and dotted lines indicate analytic (Eqs. \eqref{eq:spd_main}, \eqref{eq:prob_connect_apx}) and (approximate) closed-form solutions (Eqs. \eqref{eq:prob_connect_apx}, \eqref{eq:sf_avg} that together yield Eq. \eqref{eq:spd_er}), respectively. Symbols ($\circ$) and bars indicate empirical estimates: mean and standard error over 10 network samples. Both the analytic and (approximate) closed-form GLD are in good agreement for all subcritical connectivities at all geodesic lengths.}
    \label{fig:spd_er_smallcomponent}
\end{figure}

\begin{figure}
    \centering
    \includegraphics[width=\columnwidth]{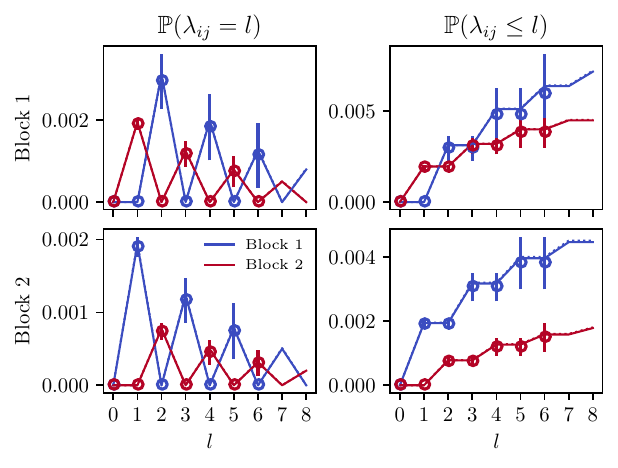}
    \caption{Empirical, analytic, and (approximate) closed-form CDF of geodesic lengths on the small components in the subcritical regime agree with each other for a bipartite stochastic block model (SBM). The block matrix $\mat{B}=\big(\begin{smallmatrix} 0 & 2\\ 2 & 0 \end{smallmatrix}\big)$, distribution vector $\boldsymbol\pi=(0.2, 0.8)$, and network size $n=1024$. Rows correspond to the block membership of source node. Left column depicts the PMF, and the right column depicts the CDF. Solid lines represent analytic form using Eqs. \eqref{eq:spd_main_sbm}, while dotted lines represent the (approximate) closed form using Eq. \eqref{eq:spd_sbm} that strongly overlaps with the analytic form. Symbols ($\circ$) and bars indicate empirical estimates: mean and standard error over 10 network samples.}
    \label{fig:bipartite_cdf_smallcomponent}
\end{figure}

\section{\label{sec:apdx_general}Sparse general random network}


\begin{lemma}\label{lemma:generalultrabottleneck} A sparse general random network (SGRN) is not bottlenecked.
\end{lemma}

\begin{proof}
    Suppose that a sparse general random network is bottlenecked, i.e. there exists a bottlenecking node $i$. It then follows that there exists a node pair $(j,k)$ and their corresponding set of potentially-connecting nodes $S^{jk}_\mathrm{via}\triangleq\{l\in[n]:\avg{A_{jl}}>0,\avg{A_{lk}}>0\}$ such that $0<|S^{jk}_\mathrm{via}|=\littleo{n}$ and $i \in S^{jk}_\mathrm{via}$. That is, the number of nodes that have a positive connection probability from $j$ and to $k$ is strictly positive and scales as $\littleo{n}$.

    Let $\nu^i_\mathrm{in}(x)\triangleq\nu(X_i,x)$ and $\nu^i_\mathrm{out}(x)\triangleq\nu(x,X_i)$ be the respective probabilities of a node to connect from and to node $i$. By continuity of the connectivity kernel, $\nu^i_\mathrm{in}:V\to[0,1)$ and $\nu^i_\mathrm{out}:V\to[0,1)$ are continuous functions, i.e. for every node location $y\in V$ and every neighborhood of the corresponding connectivity functions denoted by $N^{\nu_\mathrm{in}}_{X_iy}\subseteq[0,1)$---that includes an open set containing $\nu(X_i,y)$---and $N^{\nu_\mathrm{out}}_{yX_i}\subseteq[0,1)$---that includes an open set containing $\nu(y,X_i)$---there exist corresponding (unique) neighborhoods in the node space denoted by $N_{X_iy}^{V_\mathrm{in}}\subseteq V$ and $N_{yX_i}^{V_\mathrm{out}}\subseteq V$---that each include some open sets containing $y$---such that $\forall z\in N_{X_iy}^{V_\mathrm{in}}: \nu(X_i,z)\in N^{\nu_\mathrm{in}}_{X_iy}$ and $\forall z\in N_{yX_i}^{V_\mathrm{out}}: \nu(z,X_i)\in N^{\nu_\mathrm{out}}_{yX_i}$.
    
    Applying this to the locations of nodes $i$, $j$ and $k$ with strictly positive neighborhoods $N^{\nu_\mathrm{in}}_{X_jX_i}\subset (0,1)$ and $N^{\nu_\mathrm{out}}_{X_iX_k}\subset(0,1)$---which are valid neighborhoods since the probabilities of $j$ connecting to $i$ and $i$ connecting to $k$ are strictly positive---we get neighborhoods $N^{V_\mathrm{in}}_{X_jX_i}\subseteq V$ and $N^{V_\mathrm{out}}_{X_iX_k}\subseteq V$ that are respectively mapped by the kernel to them. Since these are both neighborhoods of the location $X_i\in V$, it follows that $N^{V_\mathrm{in}}_{X_jX_i} \cap N^{V_\mathrm{out}}_{X_iX_k}$ is also a neighborhood of $X_i$, i.e. $X_i\in N^{V_\mathrm{in}}_{X_jX_i} \cap N^{V_\mathrm{out}}_{X_iX_k}$, and at least some nodes in $S^{jk}_\mathrm{via}$ must be sampled from $N^{V_\mathrm{in}}_{X_jX_i} \cap N^{V_\mathrm{out}}_{X_iX_k}$: Since $i$ has been sampled under $\mu$ (Eq. \eqref{eq:gnodelocation}), this implies that $N^{V_\mathrm{in}}_{X_jX_i} \cap N^{V_\mathrm{out}}_{X_iX_k}$ has a positive measure, and therefore asymptotically the size of $S^{jk}_\mathrm{via}$ cannot scale slower than $\bigtheta{n}$. That is, the number of nodes that have a positive connection probability from $j$ and to $k$ is strictly positive and scales as $\bigtheta{n}$, and we have reached a contradiction. It then follows that the network is not bottlenecked.
\end{proof}

\paragraph*{Equivalent SEAN for an SGRN} Since an SGRN does not contain any bottlenecks (Lemma \ref{lemma:generalultrabottleneck}), we can define an equivalent sparse ensemble average network (SEAN) that can be used as a finite-size representation of the SGRN with $n$ nodes. Using Eq. \eqref{eq:bernoulli_model_general}, we can define the expected adjacency matrix as:
\begin{equation}
    \label{eq:general_to_ensembleavg}
    \avg{A_{ij}}\triangleq\nu(X_i,X_j).
\end{equation}
We emphasize that node locations are themselves random. However asymptotically, by Borel's law of large numbers, the set $\mathcal{X}$ contains a number of nodes from location $x\in V$ in proportion to $\mu(x)$. This yields asymptotic equivalence between the SGRN model, represented by the integral operator $T$ in Eq. \eqref{eq:integral_op}, and its corresponding SEAN model encoded by $\avg{\mat{A}}$ in Eq. \eqref{eq:general_to_ensembleavg}, which allows for asymptotic results on $\avg{\mat{A}}$ that assume no bottlenecks to be extended to those for $T$.

\paragraph*{Approximately equivalent SBM for an SGRN} Since the formalism of SBMs allows us to easily deal with the GLD computationally, by using matrix algebra toolkits, it is useful to show that for any given sparse general random network (SGRN) model, there exist equivalent SBMs up to a desirable level of approximation.
\begin{theorem}[$\epsilon$-equivalent SBMs of SGRN models]\label{lemma:general_sbm_equiv}
Consider a sparse general random network (SGRN) with $n$ nodes in node space $V$ with node density $\mu$ and connectivity kernel $\nu$. Then it has an $\epsilon$-equivalent representation as a stochastic block model (SBM). That is, for any given $\epsilon>0$ there exist:
\begin{enumerate}[label=(\alph*)]
    \item a mapping $f_\epsilon:V\to\{1,2,\dots, k\}$ where
    \begin{equation*}
        k\le
        \begin{cases}
         2^{\lfloor\epsilon^{-1}\rfloor+1}& \textrm{if}\enspace\nu\enspace\textrm{is symmetric,}\\
         4^{\lfloor\epsilon^{-1}\rfloor+1}& \textrm{otherwise,}\\
        \end{cases}
    \end{equation*}
    \item a length-$k$ distribution vector $\boldsymbol\pi_\epsilon$, and
    \item a $k\times k$ block matrix $\mat{B}_\epsilon$  such that for any pair of node locations $(x,y)\in V\times V$, we have: $$n\nu(x,y)-[\mat{B}_\epsilon]_{f(x)f(y)}=\order{\epsilon},$$ almost everywhere.
\end{enumerate}
\end{theorem}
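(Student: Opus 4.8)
**

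The plan is to construct the $\epsilon$-equivalent SBM by partitioning the node space $V$ in two stages: first along the ``connectivity dimensions'' of the kernel, and then refining this partition so that within each block the kernel is nearly constant. Concretely, since the sparsity assumption guarantees (see Appendix \ref{sec:apdx_general_sparsity}) that $n\nu \in \mathcal{L}^2(V\times V,\mu\times\mu)$ and hence defines a compact integral operator $T$, I would exploit the eigenfunction expansion $\nu(x,y) = \frac{1}{n}\sum_{i=1}^N \tau_i \varphi_i(x)\varphi_i(y)$ from Eq.~\ref{eq:kernel_rep} in the symmetric case (and the Jordan/generalized-eigenvector expansion from Eq.~\ref{eq:power_ensemble_avg_asym} in the asymmetric case). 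Because $T$ is compact, the tail $\sum_{i>M}\tau_i\varphi_i(x)\varphi_i(y)$ can be made uniformly small (in $\mathcal{L}^2$, and with a little more care almost everywhere) by taking $M = M(\epsilon)$ large enough; in fact truncating at $M \approx \lfloor \epsilon^{-1}\rfloor$ eigenvalues suffices to control the residual to $\order{\epsilon}$. This reduces the problem to approximating a finite-rank kernel $n\nu_M(x,y) = \sum_{i=1}^M \tau_i\varphi_i(x)\varphi_i(y)$, which depends on $x$ only through the finite-dimensional ``feature vector'' $\Phi_M(x) = (\varphi_1(x),\dots,\varphi_M(x))$.

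Next I would quantize this feature map. Partition the (essential) range of each coordinate $\varphi_i$ into a bounded number of bins — $2$ bins per eigenfunction is the natural choice that produces the stated count $2^{\lfloor\epsilon^{-1}\rfloor+1}$, since with $M \le \lfloor\epsilon^{-1}\rfloor+1$ eigenfunctions this gives at most $2^{M} \le 2^{\lfloor\epsilon^{-1}\rfloor+1}$ cells. Define $f_\epsilon(x)$ to be the index of the cell containing $\Phi_M(x)$; this is the required mapping into $\{1,\dots,k\}$. The distribution vector is $[\boldsymbol\pi_\epsilon]_b = \mu(f_\epsilon^{-1}(b))$, and the block matrix is obtained by averaging the kernel over block pairs, e.g. $[B_\epsilon]_{bc} = \frac{n}{[\boldsymbol\pi_\epsilon]_b [\boldsymbol\pi_\epsilon]_c}\int_{f_\epsilon^{-1}(b)}\int_{f_\epsilon^{-1}(c)} \nu(x,y)\,d\mu(y)d\mu(x)$ (with any convention for empty blocks). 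For the asymmetric case the same scheme runs with the generalized eigenvectors of $\avg{A}$, but each of the $\le 2\lfloor\epsilon^{-1}\rfloor+2$ relevant basis directions contributes both a left and a right coordinate, doubling the exponent base to $4$ and yielding $4^{\lfloor\epsilon^{-1}\rfloor+1}$. Finally I would bound $|n\nu(x,y) - [B_\epsilon]_{f(x)f(y)}|$ by the triangle inequality: the truncation error $|n\nu - n\nu_M| = \order{\epsilon}$, plus the quantization error $|n\nu_M(x,y) - n\nu_M(x',y')|$ for $x',y'$ in the same cells as $x,y$ — which is controlled because $\nu_M$ is Lipschitz in $\Phi_M$ with a constant governed by the (bounded) leading eigenvalues, and cells have small diameter — plus the averaging error, which is dominated by the quantization error. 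Choosing the bin widths proportional to $\epsilon$ divided by $\sum_{i\le M}|\tau_i|$ closes the estimate to $\order{\epsilon}$ almost everywhere.

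The main obstacle I anticipate is making the approximation hold \emph{almost everywhere} rather than merely in $\mathcal{L}^2$ or $\mathcal{L}^\infty$-on-average. Compactness of $T$ gives $\mathcal{L}^2$ convergence of the eigenfunction partial sums for free, but pointwise (a.e.) control of the tail and of the Lipschitz modulus of $\nu_M$ requires either boundedness of the eigenfunctions $\varphi_i$ (which holds for the bounded-kernel models of interest — SBM, RDPG, RGG, graphon — where $\nu$ maps into $[0,1]$ and Mercer-type uniform convergence is available for positive semi-definite kernels, cf. the discussion in Sec.~\ref{sec:rdpg:}), or an explicit truncation argument using the sparsity bound $\nu = \order{n^{-1}}$ almost everywhere to discard the null set $U$ from Appendix \ref{sec:apdx_general_sparsity}. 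I would handle this by first intersecting everything with $V\setminus U$, then invoking uniform absolute convergence of the eigenexpansion where it is available and otherwise falling back on a direct $\mathcal{L}^\infty$-approximation of the bounded function $n\nu$ by simple functions (which is elementary for bounded measurable functions) and then identifying the level sets of that simple function with block structure — at the cost of a slightly less transparent but still $\order{\epsilon}$-sized block count. Bookkeeping the precise constants so that the block count lands exactly on the stated $2^{\lfloor\epsilon^{-1}\rfloor+1}$ / $4^{\lfloor\epsilon^{-1}\rfloor+1}$ bounds, rather than some larger expression, is the other fiddly point, and is essentially a matter of allocating the $\order{\epsilon}$ error budget carefully between truncation and quantization.
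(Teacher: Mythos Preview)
Your spectral approach has a genuine gap that prevents it from delivering the stated block count. The argument hinges on two claims that are in tension: (i) truncating the eigenfunction expansion at $M\approx\lfloor\epsilon^{-1}\rfloor$ terms makes the tail $\order{\epsilon}$, and (ii) quantizing each of the $M$ retained eigenfunctions into just $2$ bins makes the quantization error $\order{\epsilon}$. Claim (i) is unjustified---compactness of $T$ only gives $\tau_i\to 0$, with no rate, so there is no universal link between the truncation rank and $\epsilon$. Claim (ii) fails outright: with $2$ bins per coordinate the cell diameter in feature space is $\order{1}$, so $|n\nu_M(x,y)-n\nu_M(x',y')|$ is controlled only by $\sum_{i\le M}|\tau_i|$, not by $\epsilon$. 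You notice this yourself when you later say the bin widths should be ``proportional to $\epsilon$ divided by $\sum_{i\le M}|\tau_i|$'', but that forces many more than $2$ bins per direction and blows the block count up to something like $(\epsilon^{-1})^M$, far above $2^{\lfloor\epsilon^{-1}\rfloor+1}$.

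The paper's proof avoids spectral machinery entirely and works directly with the scalar range of the bounded function $n\nu$. Sparsity gives $n\nu\le L$ almost everywhere; one partitions the interval $[0,L)$ into $p=\lfloor\epsilon^{-1}\rfloor+1$ subintervals of width $\epsilon L$, takes preimages in $V\times V$, and then runs an iterative partition refinement of $V$ against the $p$ source-side projections (and, in the asymmetric case, also the $p$ target-side projections). Each refinement step at most doubles the partition size, giving $k\le 2^p$ (symmetric) or $k\le 2^{2p}=4^p$ (asymmetric), exactly the stated bounds. The block matrix entry is then just the left endpoint of the interval containing $n\nu(x,y)$, so the error is at most $\epsilon L=\order{\epsilon}$ by construction. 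This is essentially the simple-function/level-set argument you mention as a fallback---it is in fact the whole proof, and the precise count $2^{\lfloor\epsilon^{-1}\rfloor+1}$ drops out of the refinement bookkeeping rather than from any spectral estimate.
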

\begin{proof}
The sparsity assumption from Eq. \eqref{eq:sparsity_constraint_general} implies that $\exists m>0, \exists L>0$ such that $\forall n\ge m$ we have $\nu(x,y)\le Ln^{-1}$ almost everywhere on $V\times V$, i.e.
\begin{equation*}
    n\nu(x,y)\le L,
\end{equation*}
where we refer to $n\nu$ as the ``scaled'' connectivity kernel. For the rest of this proof, we ignore zero measure subsets of $V$, i.e. use $V$ to indicate $V\setminus U$ where $\mu(U)=0$.

Let $\epsilon>0$. Since $\nu$ is a non-negative measurable function on $V\times V$, its output (image) can be divided into $p\triangleq\lfloor\epsilon^{-1}\rfloor+1$ intervals of length $\epsilon L$ each, where $\lfloor\cdot\rfloor$ is the floor function. That is, define the half-open intervals:
\begin{equation}\label{eq:sbm_equiv_int}
    I_\epsilon^u\triangleq\left[(u-1)\epsilon L, u\epsilon L\right) \quad \textrm{for } u\in\{1,2,\dots, p\}.
\end{equation}
Then consider the inverse of the scaled connectivity kernel $\nu^{-1}:\realnonneg\to V\times V$. Let $\nu^{-1}(I_\epsilon^u)=\overrightarrow{V}^u_\epsilon\times \overleftarrow{V}^u_\epsilon$ be the measurable set (of positive measure) in $V\times V$ whose connectivity kernel takes values in $I_\epsilon^u$. Since the intervals $I_\epsilon^u$ are disjoint and cover the image of $n\nu$, it must be true that $\{\overrightarrow{V}^u_\epsilon\times \overleftarrow{V}^u_\epsilon\}_{u=1}^p$ forms a partition of $V\times V$. However, the collection $\{\overrightarrow{V}^u_\epsilon\}_{u=1}^p$ will be a \emph{cover} of $V$, and not necessarily a \emph{partition}---i.e. $V=\bigcup_{u=1}^p\overrightarrow{V}^u_\epsilon$ but it need not be the case that for any $u\ne v$ we have $\overrightarrow{V}^u_\epsilon\cap\overrightarrow{V}^v_\epsilon=\emptyset$.

Then consider the following iterative partition refinement process on $V$. Begin with the trivial partition of the node space $\overrightarrow{P}^0_\epsilon\triangleq\{V\}$. Then for $u$ in the ordered set $\{1,2,\dots, p\}$:
\begin{enumerate}
    \item for every node subspace $S_i\in\overrightarrow{P}^{u-1}_\epsilon$,
    \begin{enumerate}
        \item remove $S_i$,
        \item add $S_i\setminus\overrightarrow{V}^u_\epsilon$,
        \item add $S_i\cap\overrightarrow{V}^u_\epsilon$,
    \end{enumerate}
    \item and obtain the new partition refinement $\overrightarrow{P}^u_\epsilon$.
\end{enumerate}
Then the final partition $\overrightarrow{P}^p_\epsilon$---after ignoring empty sets---constitutes a partitioning of the node space into $k$ node subspaces. Evidently, since the refinement is done $p$ times, in the worst case the final size of the partition can be no more than $2^p$, i.e. $k\le 2^{\lfloor\epsilon^{-1}\rfloor+1}$. (We remark that if the kernel $\nu$ is symmetric, then the partition $\overleftarrow{P}^p_\epsilon$ obtained by considering the target node space collection $\{\overleftarrow{V}^u_\epsilon\}_{u=1}^p$ will be identical to $\overrightarrow{P}^p_\epsilon$. If not, we can consider both collections $\{\overrightarrow{V}^u_\epsilon\}_{u=1}^p$ and $\{\overleftarrow{V}^u_\epsilon\}_{u=1}^p$ at the same time in the partition refinement process, in which case the worst case value of $k$ is $2^{2p}$, i.e. $k\le 4^{\lfloor\epsilon^{-1}\rfloor+1}$.) The final collection of node subspaces can be written as an ordered set $P_\epsilon\triangleq\{V_\epsilon^1,V_\epsilon^2,\dots, V_\epsilon^k\}$. Let $\mathbb{I}_X:V\to\{0,1\}$ be the indicator function identifying if $x\in X$, for some $X\subseteq V$. Then define the index-map $f_\epsilon(x)\triangleq \sum_{i=1}^ki\mathbb{I}_{V_\epsilon^i}(x)$. This shows part (a) of the theorem.

Next, define the probability distribution vector $\boldsymbol{\pi}_\epsilon\triangleq\{\mu(V_\epsilon^i)\}_{i=1}^k$, where $\mu(\cdot)$ indicates the measure. Since we have removed any subspaces of zero measure, and since $P_\epsilon$ is a partitioning of $V$, $\boldsymbol{\pi}_\epsilon$ is a valid distribution vector whose entries are positive-valued and sum up to $1$. This shows part (b) of the theorem.

Finally, let $h:\mathcal{P}(V)\times\mathcal{P}(V)\to\{0,1\}$ be the indicator function $h(X,Y)$ identifying if $X\subseteq Y$, for some $X,Y\subseteq V$, where we use $\mathcal{P}(\cdot)$ to denote the power set. Then define a $k\times k$ block matrix $\mat{B}_\epsilon$, such that $$\idx{\mat{B}_\epsilon}{ij}\triangleq\sum_{u=1}^p(u-1)\epsilon L\, h(V_\epsilon^i,\overrightarrow{V}^u_\epsilon)\,h(V_\epsilon^j,\overleftarrow{V}^u_\epsilon).$$Since $\{\overrightarrow{V}^u_\epsilon\times \overleftarrow{V}^u_\epsilon\}_{u=1}^p$ forms a partition of $V\times V$, and by the construction of $P_\epsilon$ we know that $h(V_\epsilon^i,\overrightarrow{V}^u_\epsilon)h(V_\epsilon^j,\overleftarrow{V}^u_\epsilon)=1$ for exactly one such $u$ and is $0$ otherwise, $\idx{\mat{B}_\epsilon}{ij}$ takes the value $(u-1)\epsilon L$. From the definition of the intervals in Eq. \eqref{eq:sbm_equiv_int}, we know that for $x\in V_\epsilon^i$ and $y\in V_\epsilon^j$: $n\nu(x,y)\in I_\epsilon^u$. Consequently, we get for $x,y$:
\begin{equation*}
\begin{split}
     &0\le n\nu(x,y)-[\mat{B}_\epsilon]_{f(x)f(y)}<\epsilon L\\
     \implies& n\nu(x,y)-[\mat{B}_\epsilon]_{f(x)f(y)} = \order{\epsilon} \quad \textrm{almost everywhere}
\end{split}
\end{equation*}
This shows part (c) of the theorem.
\end{proof}
The above theorem provides technical support to the generation of equivalent SBMs by appropriate ``discretization'' of values that the connectivity kernel can take, and consequently of the node space $V$. If the node space is Euclidean, say $V\subseteq\real^k$, and the connectivity kernel is ``sufficiently regular'', then we could consider a ``sufficiently fine'' discretization of $V$ such that Theorem \ref{lemma:general_sbm_equiv}, and therefore a good SBM approximation, holds. In particular, if the connectivity kernel $\nu$ is Darboux (or equivalently Riemann) integrable on $V\times V$, then for any $\epsilon>0$ there would exist a sufficiently fine partition $P_\epsilon$ of $V\times V$ such that the difference in the supremum and infimum of $\nu$ evaluated on each ``cell'' of $P_\epsilon$ is less than $\epsilon$, and the final partition of $V$ can be obtained via the iterative refinement process described in the proof for Theorem \ref{lemma:general_sbm_equiv}. We now outline two methods for continuous models in Euclidean space in one and higher dimensions.

\paragraph*{MLE-SBM approximation in one dimension}
For continuous models on node space $\real$, such as graphons described in Sec. \ref{sec:graphons}, we can define an equivalent $m$-block SBM with equi-proportioned blocks. Let $F:\real\to[0,1]$ be the cumulative distribution function (CDF) for this node space, i.e. $F(x)\triangleq\int_{-\infty}^x\mu(x)dx$. Then using the inverse of $F$, define a size $k$ partition of $\real$ given by:
\begin{equation*}
    I^i_m\triangleq
    \begin{cases}
    \left[F^{-1}\left(\frac{i-1}{m}\right),F^{-1}\left(\frac{i}{m}\right)\right) & \textrm{if } i\in\{1,2,\dots, m-1\},\\
    \left[F^{-1}\left(\frac{i-1}{m}\right),F^{-1}\left(\frac{i}{m}\right)\right]&\textrm{if } i=m.
    \end{cases}
\end{equation*}
Evidently this is an equi-sized partition: $$\mu\left(I_m^i\right)=F\left(F^{-1}\left(\frac{i}{m}\right)\right)-F\left(F^{-1}\left(\frac{i-1}{m}\right)\right)=m^{-1},$$
which generates the index-map:
\begin{equation*}
    f_m(x)\triangleq\sum_{i=1}^mi\mathbb{I}_{I^i_m}(x),
\end{equation*}
where $\mathbb{I}_{X}$ is the indicator function, and the distribution vector:
\begin{equation*}
    \boldsymbol{\pi}_m=\frac{1}{m}\ones{m},
\end{equation*}
where $\ones{m}$ is the all-ones vector of size $m$. We can also define the block matrix $\mat{B}_m$, by invoking a likelihood-maximisation argument. Consider networks sampled from the original general random graph model, and nodes labelled according to $f_m$, then the maximum likelihood estimate (MLE) of the block matrix---see Eq. \eqref{eq:sbm_mle} and Appendix \ref{sec:apdx_gcsbm}---would asymptotically approach the average affinities between blocks defined by the above partition. Consequently, define the block matrix $\mat{B}_m$ as: 
\begin{align*}
    [\mat{B}_m]_{ij}&\triangleq n\frac{\int_{I^i_m}\int_{I^j_m}\nu(x,y)d\mu(y)d\mu(x)}{\int_{I^i_m}\int_{I^j_m}d\mu(y)d\mu(x)}\\&=nm^2\int_{I^i_m}\int_{I^j_m}\nu(x,y)d\mu(y)d\mu(x).
\end{align*}
Then we refer to $(f_m, \boldsymbol{\pi}_m, \mat{B}_m)$ as the ``$m$-block MLE-SBM approximation'' for the model in $\real$. As $m$ grows larger, the approximation is expected to become better, but it remains consistent with the MLE for any value of $m$. In Fig. \ref{fig:spl_grgg_sbm_rank1}, we show for a one-dimensional Gaussian RGG its $32$-block MLE-SBM approximation, and its consequent ``rank-1 approximation'' (that retains only the leading eigenpair of its integral operator in Eq. \eqref{eq:integral_op}).
\begin{figure}
    \centering
    \subfloat[Gaussian RGG ($R=0.1$)]{\label{fig:spl_grgg_grgg} \includegraphics[width=\columnwidth]{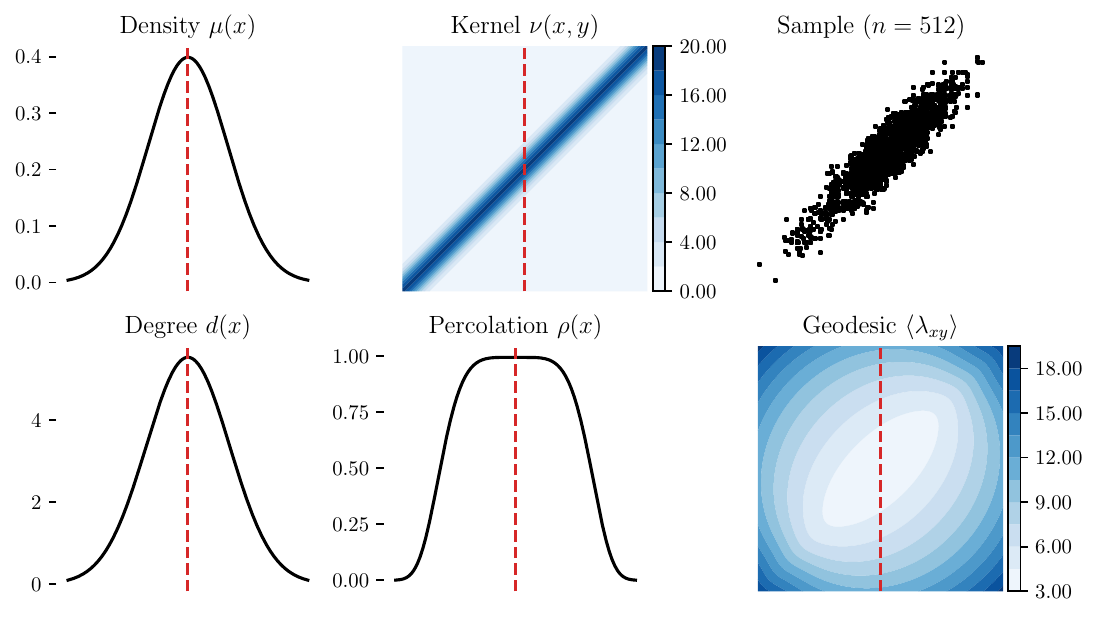}}\\
    \subfloat[Gaussian RGG to SBM ($k=32$)]{\label{fig:spl_grgg_sbm} \includegraphics[width=\columnwidth]{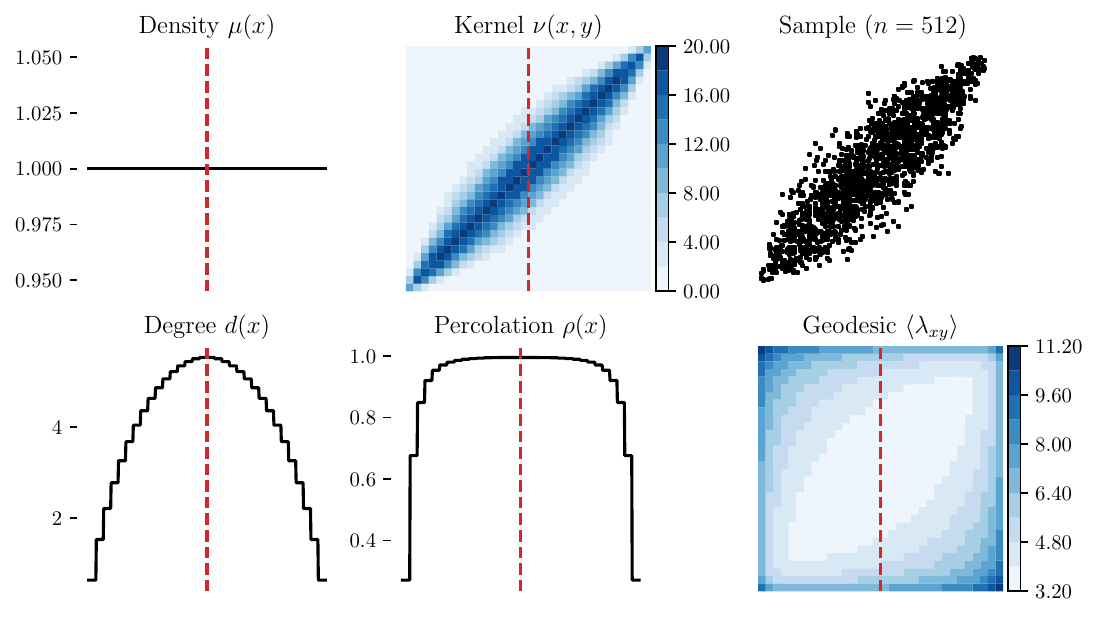}}\\
    \subfloat[SBM to rank-1 graphon]{\label{fig:spl_grgg_rank1} \includegraphics[width=\columnwidth]{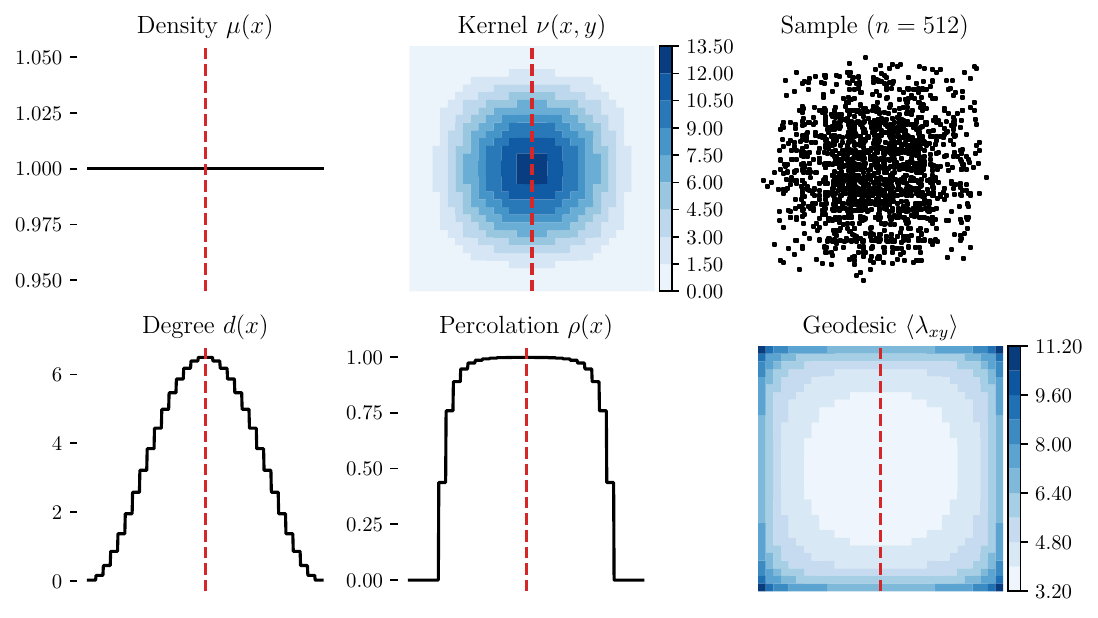}}
    \caption{Node and node pair functions for a (a) $1$-dimensional Gaussian RGG, and its (b) discretization into a $32$-block SBM, which can be further converted into (c) an (approximate) rank-1 graphon by retaining only the largest eigenpair of the integral operator (Eq. \eqref{eq:integral_op}). The functions for Gaussian RGG are viewed on the node space $[-3,3]$. For description of each node and node pair function, refer to Fig. \ref{fig:spl_models}.}
    \label{fig:spl_grgg_sbm_rank1}
\end{figure}

\paragraph*{SBM approximation in higher dimensions}
For continuous models in a higher-dimensional node space $\real^k$, it may be computationally infeasible to integrate the node density $\mu$. However, if we consider a sufficiently fine uniform partition, then assuming smoothness of the node density and connectivity kernel we can consider their point estimates instead. Let $\mat{Z}_{k\to m}$ be a $k\times m$ matrix indicating a column-wise collection of $m$ points that defines a uniform partition of a bounded subspace of $\real^k$, such that negligible probability mass lies outside of it. Define the distribution vector $\boldsymbol{\pi}_{k\to m}$ such that $[\boldsymbol{\pi}_{k\to m}]_i\triangleq\frac{\mu([\mat{Z}_{k\to m}]_{:i})}{\sum_i\mu([\mat{Z}_{k\to m}]_{:i})}$, and kernel matrix $\mat{B}_{k\to m}$ such that $[\mat{B}_{k\to m}]_{ij}\triangleq n\nu\left([\mat{Z}_{k\to m}]_{:i}, [\mat{Z}_{k\to m}]_{:j}\right)$. Then we refer to $(\mat{Z}_{k\to m}, \boldsymbol{\pi}_{k\to m}, \mat{B}_{k\to m})$ as the ``$m$-block SBM approximation'' for the model in $\real^k$. Evidently, as $m$ grows larger, the approximation becomes better. In this article, we make use of the $m$-block SBM approximation only for computing average global statistics---like estimating the mean percolation vector for random dot product graphs (RDPGs), and fitting parameters of the percolation curve for Gaussian random geometric graphs (RGGs)---and thus the errors are expected to be even lower due to averaging.

\paragraph*{Illustrative examples}For an RDPG, consider solving Eq. \eqref{eq:gcc_consistency_rdpg_rho} for the mean percolation vector $\boldsymbol\rho$. Let $(\mat{Z},\boldsymbol\pi,\mat{B})$ be its $m$-block SBM approximation. Then the integral in Eq. \eqref{eq:gcc_consistency_rdpg_rho} can be approximated by:
\begin{equation}
    \label{eq:rho_rdpg}
    \boldsymbol{\rho}\approx\boldsymbol{\phi}-\mat{Z}\,\diag{\boldsymbol\pi}\exp\left(-n\beta \mat{Z}^T\boldsymbol{\rho}\right),
\end{equation}
which is a vector equation that can be numerically solved via function iteration. We remark that for estimating percolation probabilities for the illustrative example of a $3$-dimensional Dirichlet-RDPG, we uniformly partition the 2-simplex---an equilateral triangle---into 4096 equi-triangular partitions.

For a Gaussian RGG, we use the $m$-block SBM approximation to obtain the percolation probability at the points of the uniform partition. Let $(\mat{Z},\boldsymbol\pi,\mat{B})$ be the Gaussian RGG's $m$-block SBM approximation, such that negligible probability mass lies outside of the partition defined by $\mat{Z}$. Define $\boldsymbol{\rho}$ to be a length $m$ vector indicating the percolation probabilities at $m$ locations in $\real^k$. Then from Eq. \eqref{eq:gcc_consistency_general}, we obtain:
\begin{equation*}
    \boldsymbol{\rho} \approx \ones{m}-\exp\left(-n\mat{B}\,\diag{\boldsymbol\pi}\boldsymbol{\rho}\right),
\end{equation*}
where $\ones{m}$ is the all-ones vector of length $m$. This is a vector self-consistent equation in $\boldsymbol{\rho}$, whose non-trivial solution can be obtained via function iteration. Evidently, this equation's form resembles Eq. \eqref{eq:gcc_consistency_sbm} for percolation probabilities in an SBM. Since $\boldsymbol{\rho}$ defines percolation probabilities on a fine partition, assuming smoothness of the percolation probability function in $\real^k$, i.e. of $\rho(\boldsymbol{x})$, we can interpolate for an arbitrary point $\boldsymbol{x}$, or infer the parametric form of a percolation surface in $\real^k$. We remark that, for estimating percolation probabilities for the illustrative example of a $2$-dimensional Gaussian RGG, we restrict to the subspace $[-3,3]\times[-3,3]$, which includes about $99\%$ of the probability mass \cite{wang2015confidence}, divided into 4096 square partitions.

\section{\label{sec:apdx_graph_models}Properties of specific SGRNs}
We elaborate below on the key results on the expected degree function, percolation probability function, and the geodesic length distribution (GLD) of various sparse general random network models considered in Sec. \ref{sec:general_graphs}.

\subsection{\label{sec:apdx_sbm}Stochastic block model (SBM)}
Consider an SBM with $k$ blocks or communities, such that a node belongs to one of the communities $x\in\{1,2,\dots, k\}$ according to a categorical distribution:
\begin{equation}
    \label{eq:sbm_mu}
    \mu(x)=\pi_x,
\end{equation}
where the distribution vector $\boldsymbol\pi$ encodes the probability of being from a community. The connectivity kernel is encoded by the $k\times k$ ``block matrix'' $\mat{B}$:
\begin{equation}
    \label{eq:sbm_nu}
    \nu(x,y)=\frac{B_{xy}}{n}.
\end{equation}
\paragraph*{Expected degree}Functions for expected degree $\meandegree$, percolation probability $\rho$, and the conditional PMF $\omegaaf_l$ and survival function $\psiaf_l$ of the GLD, too possess a block structure. Using Eqs. \eqref{eq:general_degrees_deg}, \eqref{eq:sbm_mu} and \eqref{eq:sbm_nu}, the expected degree of node in block $x$ is given by:
\begin{equation*}
    \begin{split}
        \meandegree(x)&=n\int_{\{1,2,\dots, k\}}\nu(x,y)\,d\mu(y)=\sum_{y=1}^kB_{xy}\pi_y=[\mat{B}\boldsymbol\pi]_x.
    \end{split}
\end{equation*}
Similarly the average degree of the network is given from Eq. \ref{eq:general_degrees_mean} by:
\begin{equation}
    \label{eq:sbm_degree_mean}
    \begin{split}
        \avgdegree&=\int_{\{1,2,\dots, k\}}\meandegree(x)\,d\mu(x)=\sum_{x=1}^k[\mat{B}\boldsymbol{\pi}]_x\pi_x= \boldsymbol{\pi}^T\mat{B}\boldsymbol\pi.
    \end{split}
\end{equation}

\paragraph*{Approximate closed form of the GLD} Next, consider the conditional PMF of the approximate closed form of the GLD from Eq. \eqref{eq:spd_general_omega}; using Eqs. \eqref{eq:sbm_mu} and \eqref{eq:sbm_nu}:
\begin{equation}
    \label{eq:sbm_omega}
    \begin{split}
        \omegaacf_2(x,y)&=n\int_{\{1,2,\dots, k\}}\nu(x,z)\nu(z,y)\,d\mu(z)\\&=\frac{1}{n}\sum_{z=1}^kB_{xz}B_{zy}\pi_z=\frac{[\mat{B\Pi B}]_{xy}}{n}\\
        \implies \omegaacf_l(x,y)&=\frac{\idx{\mat{B}(\mat{\Pi B})^{l-1}}{xy}}{n},
    \end{split}
\end{equation}
for $l>0$, where $\mat{\Pi}\triangleq \diag{\boldsymbol{\pi}}$ and the final implication is obtained via induction. Consequently from Eq. \eqref{eq:spd_general_psi} we can write the survival function of the GLD as a ``block matrix'' containing the survival function of the GLD between block pairs:
\begin{equation}
    \label{eq:sbm_psi}
    \Psiacf_l\triangleq\exp\left(-\frac{\mat{B}}{n}\sum_{i=1}^{l}\left(\mat{\Pi B}\right)^{i-1}\right).
\end{equation}

\paragraph*{Illustrative example: equi-sized planted-partition SBM}

For a non-trivial, yet simple, block structure, consider the $k$-block planted-partition SBM with equi-sized blocks, i.e. $\mat{B}=\delta \eyes{k}+\beta\ones{k}\ones{k}^T$ and $\boldsymbol{\pi}=\ones{k}/k$, where $\ones{k}$ is the all-ones vector of length $k$, and $\eyes{k}$ is the identity matrix of size $k\times k$. Let the mean degree be $\avgdegree$, which from Eq. \eqref{eq:sbm_degree_mean} leads to the constraint:
$$\beta=\avgdegree-\frac{\delta}{k}.$$
Here, $\delta$ models for the level of homophily (heterophily) in the network. If $\delta>0$ ($\delta<0$) then nodes will connect more (less) strongly with nodes of their community than they would to nodes outside of their community. $\mat{B}$ must be a non-negative matrix, which constrains $\delta$ to  $[-\avgdegree k/(k-1),\avgdegree k]$. We now consider the closed-form conditional PMF of the GLD, which from Eq. \eqref{eq:sbm_omega} will involve powers of the matrix $\mat{\Pi B}$. For any $k\times k$ matrix $\mat{M}=p\eyes{k}+q\ones{k}\ones{k}^T$, its matrix powers can be written using a binomial expansion since the matrices $\eyes{k}$ and $\ones{k}\ones{k}^T$ commute:
\begin{equation}
    \label{eq:pp_matrix_powers}
    \begin{split}
        \mat{M}^n&=\left(p\eyes{k}+q\ones{k}\ones{k}^T\right)^n\\&=(p\eyes{k})^n+\sum_{i=1}^n\binom{n}{i}(p\eyes{k})^{n-i}(q\ones{k}\ones{k}^T)^i\\
        &=p^n\eyes{k}+\sum_{i=1}^n\binom{n}{i}p^{n-i}\frac{(qk)^i}{k}\ones{k}\ones{k}^T\\&=p^n\eyes{k}+\frac{(p+qk)^n-p^n}{k}\ones{k}\ones{k}^T\\
        \implies\sum_{j=1}^l\mat{M}^j&=\left(\sum_{j=1}^lp^j\right)\eyes{k}\\&\quad+\left[\sum_{j=1}^l(p+qk)^j-\sum_{j=1}^lp^j\right]\frac{\ones{k}\ones{k}^T}{k}\\
        \implies S_l(\mat{M})&=S_l(p)\eyes{k}+\frac{S_l(p+qk)-S_l(p)}{k}\ones{k}\ones{k}^T,
    \end{split}
\end{equation}
where $S_l(a)\triangleq{a+a^2+\dots+ a^l}=a(a^l-1)(a-1)^{-1}$ represents the geometric sum of $a$ up to $l$ terms. We can write $\mat{\Pi B}=\frac{\delta}{k}\eyes{k}+\frac{1}{k}\left(\avgdegree-\frac{\delta}{k}\right)\ones{k}\ones{k}^T$, then using Eqs. \eqref{eq:sbm_psi} and \eqref{eq:pp_matrix_powers} the survival function block matrix is given by:
\begin{equation*}
    \Psiacf_l = \exp\left(-\frac{1}{n}\left\{S_l(\delta/k)k\eyes{k}+[S_l(\avgdegree)-S_l(\delta/k)]\ones{k}\ones{k}^T\right\}\right).
\end{equation*}

\subsection{\label{sec:apdx_rdpg}Random dot product graph (RDPG)}
Consider a $k$-dimensional non-negative bounded subspace $X\subset\realnonnegk{k}$ in which the nodes are ``embedded'' by distribution $\mu(\boldsymbol{x})$. We focus on the simplest  dot product connectivity kernel here:
\begin{equation}
    \label{eq:rdpg_nu}
    \nu(\boldsymbol{x}, \boldsymbol{y}) = \frac{\beta}{n}\boldsymbol{x}^T\boldsymbol{y},
\end{equation}
where $\beta>0$ controls the mean degree.

\paragraph*{Expected degree} Using Eqs. \eqref{eq:general_degrees_deg} and \eqref{eq:rdpg_nu}, the expected degree at location $\boldsymbol{x}$ is given by:
\begin{equation}
        \label{eq:degree_rdpg}
        \meandegree(\boldsymbol{x})=n\int_{X}\nu(\boldsymbol{x},\boldsymbol{y})\,d\mu(\boldsymbol{y})=\beta\boldsymbol{x}^T\int_{X}\boldsymbol{y}\,d\mu(\boldsymbol{y})= \beta\boldsymbol{x}^T\boldsymbol{\phi},
\end{equation}
where $\boldsymbol{\phi}\triangleq\int_{X}\boldsymbol{x}\,d\mu(\boldsymbol{x})$ is the vector representing mean location of nodes in $X$. Hence, the degree of a node is given by taking the dot product of its location with the mean location of all nodes, scaled by $\beta$. Similarly, using Eq. \eqref{eq:general_degrees_mean}, the average network degree is given by:
\begin{equation}
    \label{eq:rdpg_degree_deg}
        \avgdegree=\int_{X}d(\boldsymbol{x})\,d\mu(\boldsymbol{x})=\beta\boldsymbol{\phi}^T\int_{X}\boldsymbol{x}\,d\mu(\boldsymbol{x})=\beta\boldsymbol{\phi}^T\boldsymbol{\phi}=\beta\left\lVert\boldsymbol{\phi}\right\rVert^2.
\end{equation}

\paragraph*{Approximate closed form of the GLD}Consider the approximate closed form of the conditional PMF of the GLD from Eq. \eqref{eq:spd_general_omega}, then using Eq. \eqref{eq:rdpg_nu}:
\begin{equation}
    \label{eq:rdpg_omega}
    \begin{split}
        \omegaacf_2(\boldsymbol{x},\boldsymbol{y})&=n\int_{X}\nu(\boldsymbol{x},\boldsymbol{z})\nu(\boldsymbol{z},\boldsymbol{y})\,d\mu(\boldsymbol{z})\\&=\beta\boldsymbol{x}^T\left[\beta\int_{X}\boldsymbol{z}\boldsymbol{z}^T\,d\mu(z)\right]\boldsymbol{y}= \beta\boldsymbol{x}^T\mat{\Phi}\boldsymbol{y}\\
        \implies \omegaacf_l(\boldsymbol{x},\boldsymbol{y})&=\beta\boldsymbol{x}^T\mat{\Phi}^{l-1}\boldsymbol{y},
    \end{split}
\end{equation}
for $l>0$, where $\mat{\Phi}\triangleq \beta\int_{X} \boldsymbol{x}\boldsymbol{x}^T\,d\mu(\boldsymbol{x})$ refer to the $k\times k$ matrix of second moments---which essentially encodes the covariance in space $X$ as per the measure $\mu$, and is necessarily positive semi-definite. The last implication can be shown via induction.

\paragraph*{Illustrative example: standard simplex RDPG}
For a concrete example consider $X$ to be restricted to the $(k-1)$-standard simplex in $\real^k$, i.e. $X=\left\{\boldsymbol{x}\in\real^k:\sum_{i=1}^kx_i=1, \enspace \forall i \in[k]:\ x_i\ge 0\right\}$ and $\mu$ corresponding to the Dirichlet distribution on that simplex given by the concentration vector $\boldsymbol{\alpha}\in[0,\infty)^k$. Hence a node at $\boldsymbol{x}\in[0,1]^k$, constrained to $\boldsymbol{x}^T\ones{k}=1$ where $\ones{k}$ is the vector of all-ones of size $k$, has a location in $X$ that follows the Dirichlet distribution:
\begin{equation*}
    \mu(\boldsymbol{x}) = \frac{\prod_{i=1}^k\Gamma(\alpha_i)x_i^{\alpha_i-1}}{\Gamma\left(\sum_{i=1}^k\alpha_i\right)},
\end{equation*}
where $\Gamma(\cdot)$ is the gamma function. This formalism interprets the node's location as the probability of belonging to one of $k$ communities located at the corners of the simplex---a continuous analogue of the SBM. Let $\bar\alpha\triangleq\boldsymbol\alpha^T\ones{k}$, then the mean of this distribution is the usual mean of a Dirichlet distribution \cite{kotz2004dirichlet} i.e. $\boldsymbol{\phi}=\frac{\boldsymbol\alpha}{\bar\alpha}$. Consider the scaled matrix of second moments:
\begin{equation}
    \label{eq:rdpg_delta}
    \begin{split}
        \mat{\Phi}&=\beta\int_X\boldsymbol{x}\boldsymbol{x}^Td\mu(\boldsymbol{x})\\&=n\beta\int_X(\boldsymbol{x}-\boldsymbol{\phi}+\boldsymbol{\phi})(\boldsymbol{x}-\boldsymbol{\phi}+\boldsymbol{\phi})^T\,d\mu(\boldsymbol{x})\\
        &=\beta\bigg[\int_X(\boldsymbol{x}-\boldsymbol{\phi})(\boldsymbol{x}-\boldsymbol{\phi})^T\,d\mu(\boldsymbol{x})\\&\quad+\cancel{\int_X\boldsymbol{\phi}(\boldsymbol{x}-\boldsymbol{\phi})^T\,d\mu(\boldsymbol{x})}+\cancel{\int_X(\boldsymbol{x}-\boldsymbol{\phi})\boldsymbol{\phi}^T\,d\mu(\boldsymbol{x})}+\boldsymbol{\phi}\boldsymbol{\phi}^T\bigg]\\
        &= \beta(\mat{\Sigma} +\boldsymbol{\phi}\boldsymbol{\phi}^T),
    \end{split}
\end{equation}
where $\mat{\Sigma}\triangleq\int_X(\boldsymbol{x}-\boldsymbol{\phi})(\boldsymbol{x}-\boldsymbol{\phi})^T\,d\mu(\boldsymbol{x})$ is the covariance matrix, which for the Dirichlet distribution can be written as $\mat{\Sigma}=\frac{1}{\bar\alpha^2(1+\bar\alpha)}\left[\bar\alpha\ \diag{\boldsymbol{\alpha}}-\boldsymbol{\alpha}\boldsymbol{\alpha}^T\right]$ \cite{kotz2004dirichlet}. Thus using Eq. \eqref{eq:rdpg_delta} and the expression $\boldsymbol{\phi}=\frac{\boldsymbol\alpha}{\bar\alpha}$ leads to $\mat{\Phi}=\frac{\beta}{\bar\alpha(1+\bar\alpha)}[\diag{\boldsymbol{\alpha}}+\boldsymbol{\alpha}\boldsymbol{\alpha}^T]$. Let $\avgdegree$ be the mean degree of the network, then from Eq. \eqref{eq:rdpg_degree_deg} we have $\beta=\frac{\avgdegree\bar\alpha^2}{\left\lVert\boldsymbol{\alpha}\right\rVert^2}$. Altogether, the expression for the closed-form conditional PMF of the GLD for this RDPG from Eq. \eqref{eq:rdpg_omega} is given by: 
\begin{equation*}
     \omegaacf_l(\boldsymbol{x},\boldsymbol{y})=\frac{\avgdegree\bar\alpha^2}{n\left\lVert\boldsymbol{\alpha}\right\rVert^2}\boldsymbol{x}^T\left\{\frac{\avgdegree\bar\alpha}{\left\lVert\boldsymbol{\alpha}\right\rVert^2(1+\bar\alpha)}\left[\diag{\boldsymbol{\alpha}}+\boldsymbol{\alpha}\boldsymbol{\alpha}^T\right]\right\}^{l-1}\boldsymbol{y}.
\end{equation*}
\paragraph*{Illustrative example: non-linear RDPG}
In this article we assume that the RDPG's connectivity kernel $\nu(\boldsymbol{x},\boldsymbol{y})$ is linear in the dot product $\boldsymbol{x}^T\boldsymbol{y}$. As discussed in Sec. \ref{sec:rdpg:}, any positive semi-definite kernel can be written as a dot product in some feature space. In particular, the symmetric positive semi-definite kernel $\nu_{\ge 0}(x,y)$ for node location pairs $(x, y) \in V\times V$ defines a ``feature map'' $h$ from the node space to a Hilbert space $H$, i.e. $h:V\to H$, such that $\nu_{\ge 0}(x,y)=\langle h(x),h(y)\rangle_H$, where $\langle\cdot,\cdot\rangle_H$ indicates the inner product. In other words, $\nu$ is the reproducing kernel for $H$ over the node space: $H$ is a reproducing kernel Hilbert space (RKHS) where node-wise function evaluation is given by a continuous linear functional. For some kernels---like the polynomial kernel---the RKHS is a Euclidean space equipped with the usual vector dot product, while for others---like the radial basis or ``Gaussian'' function kernel---we can derive an explicit map $g:V\to \real^k$ such that $\nu_{\ge 0}(x,y)=\langle h(x),h(y)\rangle_H\approx g(x)^Tg(y)$ \cite{rahimi2007random}. By way of example, consider the ``node2vec'' graph embedding model of Ref. \onlinecite{grover2016node2vec}, where the connectivity kernel is proportional to $\exp(\boldsymbol{x}^T\boldsymbol{y})$. This can be seen as a ``non-linear'' RDPG, i.e. an RDPG where the connectivity kernel is a non-linear function of $\boldsymbol{x}^T\boldsymbol{y}$. We make use of an explicit randomized feature map $g:\real^d\to\real^k$ such that the connectivity kernel in $\real^d$ is well-approximated by the corresponding dot product in $\real^k$ \cite{rahimi2007random}, i.e. $\nu(\boldsymbol{x},\boldsymbol{y})\approx g(\boldsymbol{x})^Tg(\boldsymbol{y})$. This allows us to apply results from the GLD framework of ``linear'' RDPGs to this ``exponential'' RDPG.

\begin{lemma}[Linearization of exponential RDPG] Let $X\subset \real^d$ be a node space with node density $\mu$ and connectivity kernel $\nu(\boldsymbol{x},\boldsymbol{y})=\frac{\beta}{n}\exp(\boldsymbol{x}^T\boldsymbol{y})$, $\beta>0$. Then there exists a map $g:\real^d\to\real^k$ which approximately linearizes this RDPG i.e. $\nu(\boldsymbol{x},\boldsymbol{y})\approx\frac{\beta}{n} g(\boldsymbol{x})^Tg(\boldsymbol{y})$. There also exist corresponding length-$k$ mean vector $\boldsymbol{\phi}_g$, $k\times k$ matrix of second moments $\mat{\Phi}_g$, and length-$k$ mean percolation vector $\boldsymbol{\rho}_g$, which can be used to evaluate the percolation probability function and the conditional PMF and survival function of the GLD for this model in $\real^k$.
\end{lemma}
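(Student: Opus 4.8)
The plan is to exhibit $g$ explicitly via a random feature expansion of the kernel, following Rahimi--Recht \cite{rahimi2007random}. First I would reduce the exponential kernel to a shift-invariant one by completing the square: since $X$ is bounded, write $\exp(\boldsymbol{x}^T\boldsymbol{y}) = \exp\!\left(\tfrac12\|\boldsymbol{x}\|^2\right)\exp\!\left(-\tfrac12\|\boldsymbol{x}-\boldsymbol{y}\|^2\right)\exp\!\left(\tfrac12\|\boldsymbol{y}\|^2\right)$, so that $\nu(\boldsymbol{x},\boldsymbol{y}) = \beta\, s(\boldsymbol{x})\,s(\boldsymbol{y})\,\kappa(\boldsymbol{x}-\boldsymbol{y})$ with $s(\boldsymbol{x})\triangleq\exp(\tfrac12\|\boldsymbol{x}\|^2)$ bounded on $X$ and $\kappa$ the unit-bandwidth Gaussian RBF kernel. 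By Bochner's theorem $\kappa$ is the characteristic function of the standard Gaussian on $\real^d$, so drawing $\boldsymbol{w}_1,\dots,\boldsymbol{w}_k$ i.i.d.\ $\mathcal{N}(0,I_d)$ and $b_1,\dots,b_k$ i.i.d.\ uniform on $[0,2\pi]$ and setting $z(\boldsymbol{x})\triangleq\sqrt{2/k}\,\bigl(\cos(\boldsymbol{w}_1^T\boldsymbol{x}+b_1),\dots,\cos(\boldsymbol{w}_k^T\boldsymbol{x}+b_k)\bigr)$ gives $\mathbb{E}[z(\boldsymbol{x})^Tz(\boldsymbol{y})]=\kappa(\boldsymbol{x}-\boldsymbol{y})$ with uniform convergence on the compact set $X$. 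Defining $g(\boldsymbol{x})\triangleq s(\boldsymbol{x})\,z(\boldsymbol{x})$ then yields $\beta\,g(\boldsymbol{x})^Tg(\boldsymbol{y}) = \beta\,s(\boldsymbol{x})s(\boldsymbol{y})\,z(\boldsymbol{x})^Tz(\boldsymbol{y}) \approx \nu(\boldsymbol{x},\boldsymbol{y})$, with the error controlled uniformly over $X\times X$ by the Rahimi--Recht bound together with $\sup_{X}s<\infty$.

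Second, I would verify that the linearized model $(\real^k,\,\mu\circ g^{-1},\,\beta\,\boldsymbol{x}^T\boldsymbol{y})$ genuinely falls under the linear-RDPG formalism of Sec.~\ref{sec:rdpg:}: $g(X)$ is bounded (since $X$ is bounded and $g$ continuous), $\beta=\order{n^{-1}}$ is inherited unchanged, and the approximant stays non-negative because it converges to the strictly positive RBF-based kernel, so for $n$ large it is a valid connectivity kernel. The non-negativity of the \emph{coordinates} imposed in the definition of the RDPG is not actually used downstream — only non-negativity of $\nu$ and positive-semidefiniteness of the second-moment matrix — but if one insists on it, one can instead take the deterministic degree-$M$ truncation $g(\boldsymbol{x}) = \bigoplus_{m=0}^{M}\tfrac{1}{\sqrt{m!}}\boldsymbol{x}^{\otimes m}$, whose monomial coordinates are non-negative on $\realnonneg[d]$ and for which $g(\boldsymbol{x})^Tg(\boldsymbol{y}) = \sum_{m=0}^M(\boldsymbol{x}^T\boldsymbol{y})^m/m!$ with uniform error $\le\sum_{m>M}B^{2m}/m!\to 0$, where $B\triangleq\sup_{X}\|\boldsymbol{x}\|$.

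Third, I would simply define the three objects as the image-space analogues of Eqs.~\ref{eq:def_rdpg_meanvec}, \ref{eq:def_rdpg_mommat} and of the mean percolation vector used in Eq.~\ref{eq:gcc_consistency_rdpg}: $\boldsymbol{\phi}_g\triangleq\int_X g(\boldsymbol{x})\,d\mu(\boldsymbol{x})$, $\Phi_g\triangleq n\beta\int_X g(\boldsymbol{x})g(\boldsymbol{x})^T\,d\mu(\boldsymbol{x})$, and $\boldsymbol{\rho}_g\triangleq\int_X g(\boldsymbol{x})\rho(\boldsymbol{x})\,d\mu(\boldsymbol{x})$, with $\rho$ the percolation probability. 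All three integrals are finite because $g$ is bounded on $X$ and $\mu$ is a probability measure, and $\Phi_g$ is positive semidefinite as an integral of rank-one outer products. Then the expected degree at $\boldsymbol{x}$ follows from Eq.~\ref{eq:degree_rdpg} as $\meandegree(\boldsymbol{x})\approx n\beta\,\boldsymbol{\phi}_g^T g(\boldsymbol{x})$, the percolation function from the self-consistency equations \ref{eq:gcc_consistency_rdpg} with $(\boldsymbol{x},\boldsymbol{\phi},\boldsymbol{\rho})$ replaced by $(g(\boldsymbol{x}),\boldsymbol{\phi}_g,\boldsymbol{\rho}_g)$, and the approximate closed-form SPLD from Eqs.~\ref{eq:spd_rdpg}--\ref{eq:spd_rdpg_eig} with $\Phi$ replaced by $\Phi_g$.

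The main obstacle I anticipate is not constructing $g$ but making ``can be used to evaluate'' rigorous: one needs a stability statement that a uniformly $\order{\epsilon}$ perturbation of $\nu$ propagates to an $\order{\epsilon}$ perturbation of $\rho(\cdot)$ — via the self-consistency operator, which is a contraction-type map in the relevant supercritical regime — and hence of the recursive and closed-form SPLD, so that the $\real^k$ computations converge to those of the true exponential RDPG as $k\to\infty$ (respectively $M\to\infty$). A secondary nuisance is bookkeeping the dimension $k$ — exponential in $M$ for the deterministic map, and governed only by the Rahimi--Recht sample-complexity bound for the randomized map — but since the statement asserts merely the \emph{existence} of a suitable $k$ for a prescribed target accuracy, this does not obstruct the argument.
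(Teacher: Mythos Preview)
Your proposal is correct and follows essentially the same route as the paper: factor $\exp(\boldsymbol{x}^T\boldsymbol{y})=s(\boldsymbol{x})s(\boldsymbol{y})\exp(-\tfrac12\|\boldsymbol{x}-\boldsymbol{y}\|^2)$, apply Rahimi--Recht random Fourier features to the shift-invariant Gaussian part, and absorb $s$ into the feature map to obtain $g$; the paper then defines $\boldsymbol{\phi}_g,\Phi_g,\boldsymbol{\rho}_g$ via an $m$-block SBM discretization rather than as exact integrals, but this is a computational device rather than a different argument. Your deterministic Taylor-truncation alternative and the stability discussion go beyond what the paper proves, which simply invokes the Rahimi--Recht approximation guarantee and does not attempt to control propagation of the kernel error to $\rho$ or the SPLD.
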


\begin{proof}
First, note that the exponential kernel can be expressed as a Gaussian kernel in $\real^d$:
\begin{equation*}
    \begin{split}
        \nu(\boldsymbol{x},\boldsymbol{y})&=\frac{\beta}{n}\exp(\boldsymbol{x}^T\boldsymbol{y})=\frac{\beta}{n}\exp\left(\frac{\lVert \boldsymbol{x}\rVert^2+\lVert \boldsymbol{y}\rVert^2-\lVert \boldsymbol{x}-\boldsymbol{y}\rVert^2}{2}\right)\\&=\frac{\beta\,f(\boldsymbol{x}-\boldsymbol{y})}{n\,f(\boldsymbol{x})f(\boldsymbol{y})},
    \end{split}
\end{equation*}
where $\beta>0$, $f(\boldsymbol{z})\triangleq \exp\left(\frac{-\lVert \boldsymbol{z}\rVert^2}{2}\right)$. To obtain an explicit feature map for $\nu(\boldsymbol{x},\boldsymbol{y})$ requires an explicit map for $f(\boldsymbol{x}-\boldsymbol{y})$. For that, we use Algorithm 1 in Ref. \onlinecite{rahimi2007random} to compute ``random Fourier features'' for the Gaussian kernel $f(\boldsymbol{x}-\boldsymbol{y})$. More precisely:
\begin{enumerate}
    \item Draw $k$ i.i.d. samples $\{\boldsymbol{w}_i\}_{i=1}^k$ from the Fourier transform of the Gaussian function, which is given by the standard multivariate Gaussian distribution: $$\prob{\boldsymbol{w}}=(2\pi)^{-\frac{k}{2}}\exp\left(-\frac{\lVert\boldsymbol{w}\rVert^2}{2}\right).$$
    \item Draw $k$ i.i.d. samples $\{b_i\}_{i=1}^k$ from the uniform distribution on $[0,2\pi]$.
\end{enumerate}
Then for $\boldsymbol{x}\in\real^d$, the feature map $g(\boldsymbol{x})$ is given by:
\begin{equation}\label{eq:nonlin_rdpg_map}
    \begin{split}
    g(\boldsymbol{x})\triangleq \frac{1}{f(\boldsymbol{x})}\sqrt{\frac{2}{d}}\big(&\cos\left(\boldsymbol{w}_1^T\boldsymbol{x}+b_1\right),\cos\left(\boldsymbol{w}_2^T\boldsymbol{x}+b_2\right),\dots,\\ &\cos\left(\boldsymbol{w}_k^T\boldsymbol{x}+b_k\right)\big).
    \end{split}
\end{equation}
We can thus define a corresponding ``linear RDPG'' in $\real^k$, i.e. for nodes $\boldsymbol{x},\boldsymbol{y}\in\real^d$:
\begin{equation}
    \label{eq:nonlin_rdpg_nu}
    \nu(\boldsymbol{x},\boldsymbol{y})\approx\nu_g(\boldsymbol{x},\boldsymbol{y})\triangleq\frac{\beta}{n} g(\boldsymbol{x})^Tg(\boldsymbol{y}),
\end{equation}
where the approximation gets better for larger values of $k$ \cite{rahimi2007random}.

Recall from Sec. \ref{sec:rdpg:} that computing the degree or GLD in a linear RDPG only requires the mean vector and matrix of second moments, and not the full node distribution. This can be put to use here, by making an $m$-block SBM approximation of the given non-linear RDPG---see Appendix \ref{sec:apdx_general}---denoted here by $(\mat{Z},\boldsymbol{\pi}, \mat{B})$. Since the random Fourier feature map is smooth with high probability \cite{rahimi2007random} the SBM approximation can be used, and we can estimate the random feature mapping of the grid points $\mat{Z}$ to $\real^k$ given by the $k\times m$ matrix $\mat{Z}_g$ whose $i^\mathrm{th}$ column is given by the vector $g(\idx{\mat{Z}}{:i})$. Consequently, mean vector $\boldsymbol{\phi}_g$ and matrix of second moments $\mat{\Phi}_g$ in $\real^k$, and the mean percolation vector from Eq. \ref{eq:rho_rdpg}, are given by:
\begin{subequations}
\label{eq:nonlin_rdpg_stats}
\begin{align}
    \boldsymbol\phi_g&\approx \mat{Z}_g\boldsymbol{\pi}\\
    \mat{\Phi}_g &\approx \beta\,\mat{Z}_g\,\diag{\boldsymbol{\pi}}\mat{Z}_g^T\\
    \boldsymbol{\rho}_g&\approx\boldsymbol{\phi}_g-\mat{Z}_g\,\diag{\boldsymbol\pi}\exp\left(-\beta\, \mat{Z}_g^T\boldsymbol{\rho}_g\right)
\end{align}
\end{subequations}
Therefore, geodesic statistics for any pair of node locations $\boldsymbol{x}\in\real^d,\boldsymbol{y}\in\real^d$ can be computed by mapping them to $\real^k$ using Eq. \eqref{eq:nonlin_rdpg_map}, and then making use of the connectivity kernel in Eq. \eqref{eq:nonlin_rdpg_nu}, and of Eq. \eqref{eq:nonlin_rdpg_stats} to extract statistics from the corresponding linear RDPG.
\end{proof}
\begin{figure}
    \centering
    \includegraphics[width=\columnwidth]{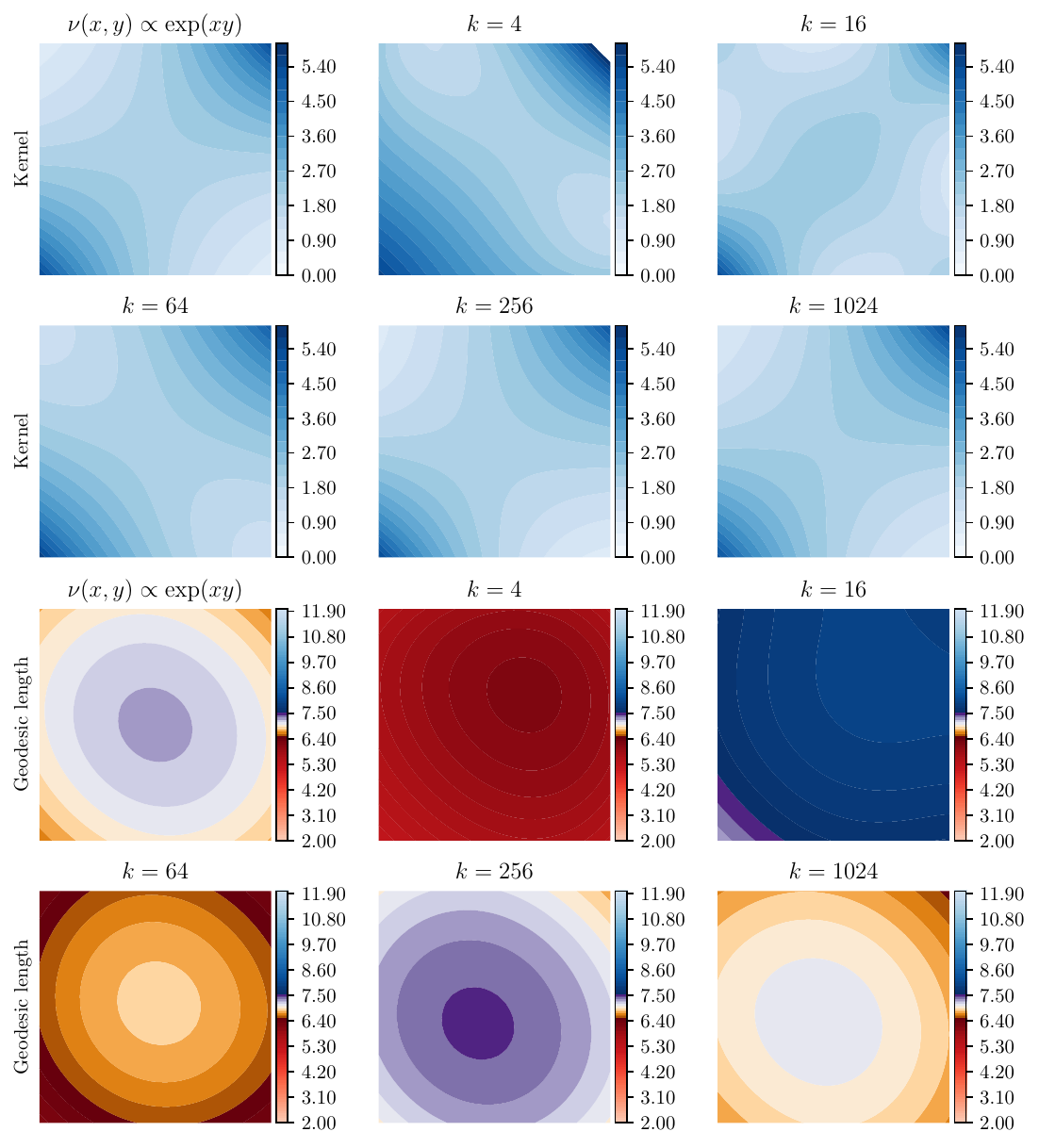}
    \caption{Linearization of an exponential RDPG using random Fourier features \cite{rahimi2007random} with varying number of features $k\in\{4,16,64,256,1024\}$. The exponential kernel $\nu(x,y)=\frac{\beta}{n}\exp(xy)$ is defined on $V=[-1,1]$, with $\beta=2$, $n=512$, and where nodes follow a scaled beta distribution with parameters $(0.2,0.8)$. Upper (blue) subplots correspond to the kernel estimate $\nu_g(x,y)$ on $V\times V$ from Eq. \eqref{eq:nonlin_rdpg_nu} using $k$ features, with the last subplot indicating the original kernel $\nu(x,y)$. Similarly, lower (purple) subplots indicate the expected geodesic length $\avg{\lambda_{xy}}$ computed from the approximate closed form of the geodesic length distribution for a linear RDPG given by Eq. \eqref{eq:spd_rdpg}---making use of Eqs. \eqref{eq:nonlin_rdpg_map}, \eqref{eq:nonlin_rdpg_stats}---with the last subplot using the original kernel $\nu(x,y)$ to obtain the expected geodesic length from the approximate closed form for a general random graph model as given by Eq. \eqref{eq:spd_analytic_general_eig_uncorrected}, obtained via iterative numerical integration on $V$. Despite stochasticity in the sampling of Fourier features, larger values of $k$ evidently improve the approximation, with the linear RDPG having $k=1024$ providing the closest estimates.}
    \label{fig:rdpg_rff}
\end{figure}
In Fig. \ref{fig:rdpg_rff}, we show for an exponential kernel defined on $[-1,1]$ both the kernel estimate $\nu_g(x,y)$ and corresponding expected geodesic length $\avg{\lambda^g_{xy}}$ between node locations $(x,y)\in[-1,1]\times[-1,1]$, for varying number of random Fourier features $k$. Evidently, as the number of features increases, the ``linear RDPG approximation'' significantly improves, asymptotically matching up to the true connectivity kernel, and corresponding expected geodesic lengths estimated via numerical integration. For other non-linear kernels, the same method follows by using their respective Fourier transforms.

\subsection{\label{sec:apdx_grgg}Gaussian random geometric graph (RGG)}
Consider the $k$-dimensional Euclidean space $\real^k$, wherein nodes are distributed according to a multivariate Gaussian distribution with mean vector $\boldsymbol{m}$ and (full-rank) covariance matrix $\mat{\Sigma}$:
\begin{equation*}
    \mu(\boldsymbol{x})=\frac{1}{\sqrt{(2\pi)^k|\mat{\Sigma}|}}\exp\left(-\frac{1}{2}(\boldsymbol{x}-\boldsymbol{m})^T\mat{\Sigma}^{-1}(\boldsymbol{x}-\boldsymbol{m})\right),
\end{equation*}
where $|\mat{\Sigma}|$ is the determinant of $\mat{\Sigma}$, and (undirected) edges are added according to a Gaussian connectivity kernel with the (positive-definite and symmetric) scale matrix $\mat{R}$, as given by Eq. \eqref{eq:grgg_nu}. Succinctly, we refer to this as the $(\boldsymbol{m}, \mat{\Sigma}, \mat{R})$ Gaussian RGG.

\paragraph*{Sufficiency of the standard Gaussian RGG}First, we show that this Gaussian RGG can be uniquely mapped to a ``standard Gaussian RGG'' given by $(\zeros{k}, \eyes{k}, \mat{S})$ where $\zeros{k}$ is the vector of zeros of size $k$ and $\eyes{k}$ is the $k\times k$ identity matrix, i.e. where the node distribution is standard multivariate Gaussian. Since $\mat{\Sigma}$ is a full-rank covariance matrix, it must be real, symmetric, and with positive eigenvalues. Let us express it via its eigendecomposition: $\mat{\Sigma}=\mat{U\Lambda U}^T$ where $\mat{U}$ is an orthonormal matrix with eigenvectors stacked column-wise, and $\mat{\Lambda}$ is the diagonal matrix with corresponding eigenvalues on the diagonal. Then, define the affine whitening transform $f:\real^k\to\real^k$ $$f(\boldsymbol{x})\triangleq\mat{\Lambda}^{-\frac{1}{2}}\mat{U}^T(\boldsymbol{x}-\boldsymbol{m}),$$
which maps to values distributed by $\mathcal{N}(\zeros{k}, \eyes{k})$. We also require a new connectivity kernel $\nu_w(\boldsymbol{x},\boldsymbol{y})$ for the whitened space, which from Eq. \eqref{eq:grgg_nu} will be given by substituting $\boldsymbol{x}$ with $f^{-1}(\boldsymbol{x})$ to obtain 
\begin{equation*}
    \begin{split}
         \nu_w(\boldsymbol{x},\boldsymbol{y})&=\frac{\beta}{n}\exp\bigg(-\frac{1}{2}[f^{-1}(\boldsymbol{x})-f^{-1}(\boldsymbol{y})]^T\mat{R}^{-1}\\&\quad
         \times[f^{-1}(\boldsymbol{x})-f^{-1}(\boldsymbol{y})]\bigg)\\
         &=\frac{\beta}{n}\exp\left(-\frac{1}{2}(\boldsymbol{x}-\boldsymbol{y})^T\mat{\Lambda}^\frac{1}{2}\mat{U}^T\mat{R}^{-1}\mat{U\Lambda}^\frac{1}{2}(\boldsymbol{x}-\boldsymbol{y})\right),
    \end{split}
\end{equation*}
which leads to a standard Gaussian RGG given by $(\boldsymbol{0},\eyes{k},\mat{\Lambda}^{-\frac{1}{2}}\mat{U}^T\mat{RU\Lambda}^{-\frac{1}{2}})$. We remark that if $\mat{R}$ commutes with $\mat{\Sigma}$, then they have the same eigenvectors---i.e. connectivity directions are oriented along the same axes as the node distribution---and we can write $\mat{R}=\mat{U\Gamma U}^T$, i.e. we obtain a diagonal scale matrix $\mat{\Lambda\Gamma}$. This implies that it is sufficient to study a standard Gaussian RGG to obtain geodesic statistics for general Gaussian RGGs, as we do in this article. While we maintain the connectivity kernel in Eq. \eqref{eq:grgg_nu}, we use the standard normal node distribution from Eq. \eqref{eq:grgg_mu}. It will be useful to state the following multivariate Gaussian integral:
\begin{equation}
    \label{eq:gauss_integral}
        \int_{\real^k}\exp\left(-\frac{\boldsymbol{x}^T\mat{C}\boldsymbol{x}}{2}+\boldsymbol{v}^T\boldsymbol{x}\right)d\boldsymbol{x}=\frac{(2\pi)^\frac{k}{2}}{\sqrt{|\mat{C}|}}\exp\left(\frac{\boldsymbol{v}^T\mat{C}^{-1}\boldsymbol{v}}{2}\right).
\end{equation}

\paragraph*{Expected degree} Using Eqs. \eqref{eq:general_degrees_deg}, \eqref{eq:grgg_mu}, \eqref{eq:grgg_nu} the expected degree at location $\boldsymbol{x}$ is given by the following Gaussian curve:
\begin{equation}
    \label{eq:grgg_degree}
    \begin{split}
        \meandegree(\boldsymbol{x})&=n\int_{\real^k}\nu(\boldsymbol{x},\boldsymbol{y})\,d\mu(\boldsymbol{y})
        \\&=\frac{\beta}{(2\pi)^{\frac{k}{2}}}\int_{\real^k}\exp\bigg(-\frac{1}{2}\big[(\boldsymbol{x}-\boldsymbol{y})^T\mat{R}^{-1}(\boldsymbol{x}-\boldsymbol{y})\\&\quad+\boldsymbol{y}^T\boldsymbol{y}\big]\bigg)d\boldsymbol{y}\\
        &=\frac{\beta}{(2\pi)^{\frac{k}{2}}}\exp\left(-\frac{\boldsymbol{x}^T\mat{R}^{-1}\boldsymbol{x}}{2}\right)\\&\quad\times\int_{\real^k}\exp\left(-\frac{\boldsymbol{y}^T(\eyes{k}+\mat{R}^{-1})\boldsymbol{y}}{2}+\boldsymbol{x}^T\mat{R}^{-1}\boldsymbol{y}\right)d\boldsymbol{y}\\
        &=\frac{\beta}{\sqrt{|\eyes{k}+\mat{R}^{-1}|}}\\&\quad\times\exp\left(-\frac{\boldsymbol{x}^T[\mat{R}^{-1}-\mat{R}^{-1}(\eyes{k}+\mat{R}^{-1})^{-1}\mat{R}^{-1}]\boldsymbol{x}}{2}\right)
        \\&=\frac{\beta}{\sqrt{|\eyes{k}+\mat{R}^{-1}|}}\exp\left(-\frac{\boldsymbol{x}^T\mat{R}^{-1}(\eyes{k}+\mat{R}^{-1})^{-1}\boldsymbol{x}}{2}\right)\\
        &=\frac{\beta}{\sqrt{|\eyes{k}+\mat{R}^{-1}|}}\exp\left(-\frac{\boldsymbol{x}^T(\eyes{k}+\mat{R})^{-1}\boldsymbol{x}}{2}\right),
    \end{split}
\end{equation}
where in the fourth equality we use Eq. \eqref{eq:gauss_integral}, and in the fifth equality we use a special case of the Woodbury matrix identity (WMI): $(\eyes{k}+\mat{M})^{-1}=\eyes{k}-(\eyes{k}+\mat{M})^{-1}\mat{M}$. Similarly, using Eqs. \eqref{eq:general_degrees_mean}, \eqref{eq:grgg_degree}, \eqref{eq:grgg_mu} the average degree of the whole network is given by
\begin{equation}
    \label{eq:grgg_degree_deg}
        \begin{split}
            \avgdegree&=\int_{\real^k}d(\boldsymbol{x})\,d\mu(\boldsymbol{x})=\frac{\beta}{(2\pi)^{\frac{k}{2}}\sqrt{|\eyes{k}+\mat{R}^{-1}|}}\\&\quad\times\int_{\real^k}\exp\left(-\frac{\boldsymbol{x}^T[\eyes{k}+(\eyes{k}+\mat{R})^{-1}]\boldsymbol{x}}{2}\right)d\boldsymbol{x}
        \\&=\frac{\beta}{\sqrt{|\eyes{k}+\mat{R}^{-1}||\eyes{k}+(\eyes{k}+\mat{R})^{-1}|}}.
        \end{split}
\end{equation}
If we assume infinitely large connection scales, i.e. $\mat{R}^{-1}\to \mat{0}$, this results in $\avgdegree=\beta$, which can be seen as the usual ER graph where all spatial structure is lost, since nodes connect to any other node with the same probability $\frac{\avgdegree}{n}$.

\paragraph*{Approximate closed form of the GLD} Next, consider the conditional PMF of the GLD from Eq. \eqref{eq:spd_general_omega}. We apply an ansatz: by symmetry of this function, it will take a form wherein terms of $\left\lVert\boldsymbol{x}-\boldsymbol{y}\right\rVert^2$ are weighted by some functions of $l$ as:
\begin{equation}
    \label{eq:grgg_omega_ansatz}
    \begin{split}
    \omegaacf_l(\boldsymbol{x},\boldsymbol{y})=&c_l\exp\bigg(-\frac{1}{2}\big(\boldsymbol{x}^T\mat{R}^{-1}\mat{U}_l\boldsymbol{x}+\boldsymbol{y}^T\mat{R}^{-1}\mat{U}_l\boldsymbol{y}\\&-2\boldsymbol{x}^T\mat{R}^{-1}\mat{W}_l\boldsymbol{y}\big)\bigg),
    \end{split}
\end{equation}where $\mat{U}_l,\mat{W}_l$ are symmetric matrices. Then, using Eqs. \eqref{eq:spd_general_omega}, \eqref{eq:grgg_mu}, \eqref{eq:grgg_nu}, we get 
\begin{equation*}
    \begin{split}
        \omegaacf_{l+1}(\boldsymbol{x},\boldsymbol{y})=\ & n\int_{\real^k}\omegaacf_l(\boldsymbol{x},\boldsymbol{z})\nu(\boldsymbol{z},\boldsymbol{y})\,d\mu(\boldsymbol{z})\\
        =\ &\frac{c_l\beta}{(2\pi)^{\frac{k}{2}}}\int_{\real^k}\exp\bigg(-\frac{1}{2}\times\big(\boldsymbol{x}^T\mat{R}^{-1}\mat{U}_l\boldsymbol{x}\\&+\boldsymbol{z}^T\mat{R}^{-1}\mat{U}_l\boldsymbol{z}-2\boldsymbol{x}^T\mat{R}^{-1}\mat{W}_l\boldsymbol{z}+\boldsymbol{z}^T\mat{R}^{-1}\boldsymbol{z}\\&+\boldsymbol{y}^T\mat{R}^{-1}\boldsymbol{y}-2\boldsymbol{y}^T\mat{R}^{-1}\boldsymbol{z}+\boldsymbol{z}^T\boldsymbol{z}\big)\bigg)d\boldsymbol{z}\\
        =\ &\frac{c_l\beta}{(2\pi)^{\frac{k}{2}}}\exp\left(-\frac{1}{2}\left(\boldsymbol{x}^T\mat{R}^{-1}\mat{U}_l\boldsymbol{x}+\boldsymbol{y}^T\mat{R}^{-1}\boldsymbol{y}\right)\right)\\
        &\times\int_{\real^k}\exp\bigg(-\frac{1}{2}\left[\boldsymbol{z}^T(\eyes{k}+\mat{R}^{-1}\mat{U}_l+\mat{R}^{-1})\boldsymbol{z}\right]\\&-\left(\boldsymbol{x}^T\mat{R}^{-1}\mat{W}_l+\boldsymbol{y}^T\mat{R}^{-1}\right)\boldsymbol{z}\bigg)d\boldsymbol{z}\\
        =\ &\frac{c_l\beta}{\sqrt{|\mat{V}_l|}}\exp\bigg(-\frac{1}{2}\Big[\boldsymbol{x}^T\mat{R}^{-1}\mat{U}_l\boldsymbol{x}+\boldsymbol{y}^T\mat{R}^{-1}\boldsymbol{y}\\&-\left(\boldsymbol{x}^T\mat{R}^{-1}\mat{W}_l+\boldsymbol{y}^T\mat{R}^{-1}\right)\\&\times\mat{V}_l^{-1}\left(\mat{W}_l\mat{R}^{-1}\boldsymbol{x}+\mat{R}^{-1}\boldsymbol{y}\right)\Big]\bigg)\\
        &\textrm{(where }\mat{V}_l\triangleq \eyes{k}+\mat{R}^{-1}+\mat{R}^{-1}\mat{U}_l\textrm{)}\\
        =\ &\frac{c_l\beta}{\sqrt{|\mat{V}_l|}}\exp\bigg(-\frac{1}{2}\\&\times\Big[\boldsymbol{x}^T\mat{R}^{-1}\left(\mat{U}_l-\mat{W}_l\mat{V}_l^{-1}\mat{W}_l\mat{R}^{-1}\right)\boldsymbol{x}\\&+\boldsymbol{y}^T\mat{R}^{-1}\left(\eyes{k}-\mat{V}_l^{-1}\mat{R}^{-1}\right)\boldsymbol{y}\\&-2\boldsymbol{x}^T\mat{R}^{-1}\mat{W}_l\mat{V}_l^{-1}\mat{R}^{-1}\boldsymbol{y}\Big]\bigg)\\
        \triangleq\ &c_{l+1}\exp\bigg(-\frac{1}{2}\big(\boldsymbol{x}^T\mat{R}^{-1}\mat{U}_{l+1}\boldsymbol{x}\\&+\boldsymbol{y}^T\mat{R}^{-1}\mat{U}_{l+1}\boldsymbol{y}-2\boldsymbol{x}^T\mat{R}^{-1}\mat{W}_{l+1}\boldsymbol{y}\big)\bigg),
    \end{split}
\end{equation*}
where we have applied Eq. \eqref{eq:gauss_integral} to obtain the fourth equality. Given the ansatz in Eq. \eqref{eq:grgg_omega_ansatz} to be satisfied for $l+1$, and using $\mat{R}\mat{V}_l=\eyes{k}+\mat{R}+\mat{U}_l$, we can write the following set of recursive equations:
\begin{subequations}
\label{eq:grgg_omega_recursion}
\begin{align}
    \mat{U}_{l+1} &= \eyes{k}-\ (\eyes{k}+\mat{R}+\mat{U}_l)^{-1},\\
    \label{eq:grgg_omega_recursion_v}
    \mat{W}_{l+1} &= \mat{W}_l(\eyes{k}+\mat{R}+\mat{U}_l)^{-1},\\
    \label{eq:grgg_omega_recursion_c}
    c_{l+1} &= c_l \beta|\eyes{k}+\mat{R}^{-1}+\mat{R}^{-1}\mat{U}_l|^{-\frac{1}{2}}.
\end{align}
\end{subequations}
For the approximate closed-form GLD: $\omegaaf_1(\boldsymbol{x},\boldsymbol{y})=\nu(\boldsymbol{x},\boldsymbol{y})$. Combining with the ansatz in Eq. \eqref{eq:grgg_omega_ansatz} gives the base cases $\mat{U}_l=\mat{W}_l=\eyes{k}$ and $c_1=\beta$, which can then be used to solve Eq. \eqref{eq:grgg_omega_recursion}, and thus provide the expression for conditional PMF from Eq. \eqref{eq:grgg_omega_ansatz}.

\paragraph*{Highly spatial Gaussian RGG} Consider a highly ``spatial'' Gaussian RGG, wherein the connection scales are much smaller than the variance, i.e. $\mat{R}\to 0$. This simplifies solving Eq. \eqref{eq:grgg_omega_recursion}, since as $\mat{R} \to 0$, from Eq. \eqref{eq:grgg_omega_recursion} we get:
\begin{equation*}
    \begin{split}
        \mat{U}_{l+1}&\to \eyes{k}-(\eyes{k}+\mat{U}_l)^{-1}=(\eyes{k}+\mat{U}_l)^{-1}\mat{U}_l,\\
        \mat{W}_{l+1}&\to \mat{W}_l(\eyes{k}+\mat{U}_l)^{-1},\\
        c_{l+1}&\to c_l\beta\sqrt{\frac{|\mat{R}|}{|\eyes{k}+\mat{U}_l|}}.
    \end{split}
\end{equation*}
where in the first limit we make use of the WMI. Putting in the base cases, we obtain closed forms:
\begin{equation*}
    \begin{split}
        \mat{U}_{l}&\to \frac{1}{l}\eyes{k},\\
        \mat{W}_{l}&\to \frac{1}{l}\eyes{k},\\
        c_{l}&\to \beta\left(\beta\sqrt{|\mat{R}|}\right)^{l-1}l^{-\frac{k}{2}},
    \end{split}
\end{equation*}
Note from Eq. \eqref{eq:grgg_degree_deg} that the mean degree approaches $$\avgdegree\to \beta\sqrt{|\mat{R}|}2^{-\frac{k}{2}}.$$ Putting in Eq. \eqref{eq:grgg_omega_ansatz} yields the approximate closed form of the conditional PMF of the GLD in Eq. \eqref{eq:spd_grgg_homophily}.

\subsection{\label{sec:apdx_graphons}Multiplicative graphon}

\paragraph*{Expected degree} From Eqs. \eqref{eq:general_degrees_deg} and \eqref{eq:general_degrees_mean}, and using the definition in Eq. \eqref{eq:mult_graphon_stats_zeta}, the expected degree at $x$ and expected degree are respectively given by
\begin{subequations}
    \label{eq:degree_r1g_overall}
    \begin{align}
    \label{eq:degree_r1g}
    \meandegree(x) &= nf(x)\int_0^1f(y)\,dy=\sqrt{n}\zeta f(x),\\
     \label{eq:degree_net_r1g}
     \avgdegree&=\sqrt{n}\zeta\int_0^1f(x)\,dx=\zeta^2.
    \end{align}
\end{subequations}
From Eq. \eqref{eq:degree_r1g}, it is evident that $f(x)\propto \meandegree(x)$ encodes the expected degree at $x$, rendering multiplicative graphons as equivalent to canonical degree-configuration models that lack any modularity structure \cite{klimm2021modularity}. However, they can still capture degree-related properties of real-world graphs, like ``scale-free'' networks showcasing power law degree distributions. Using Eq. \eqref{eq:degree_r1g_overall}, we can rewrite $f(x)$, $\zeta$ and $\eta$ from Eq. \eqref{eq:mult_graphon_stats} in terms of degree statistics as:
\begin{subequations}
\label{eq:mult_graphon_stats_deg}
\begin{align}
     \label{eq:mult_graphon_foo_deg}
     f(x)&=\frac{\meandegree(x)}{\sqrt{n\avgdegree}},\\
    \label{eq:mult_graphon_stats_zeta_deg}
    \zeta&=\sqrt{\avgdegree},\\
    \label{eq:mult_graphon_stats_eta_deg_2}
    \eta&=\frac{\expectwrt{\mu}{\meandegree(x)^2}}{\avgdegree}.
\end{align}
\end{subequations}
Recall from Eq. \eqref{eq:degree_distribution} that the degree at $x$ is asymptotically Poisson distributed with rate $\meandegree(x)$. Then the second moment of the degree distribution is given by the law of total expectation:
\begin{equation}
    \label{eq:degree_distribution_second_moment}
    \begin{split}&\avg{\degree^2}=\expectwrt{\mu}{\avg{\degree(x)^2}}\approx\expectwrt{\mu}{\meandegree(x)^2+\meandegree(x)}
    \\\implies&\expectwrt{\mu}{\meandegree(x)^2}=\avg{\degree^2}-\avgdegree,
    \end{split}
\end{equation}
where we apply the definition of mean degree from Eq. \eqref{eq:general_degrees_mean}. Using Eqs. \eqref{eq:degree_distribution_second_moment}, \eqref{eq:mult_graphon_stats_eta_deg_2} yields $\eta$ in terms of the first and second moments of the degree distribution:
\begin{equation*}
    \eta=\frac{\avg{\degree^2}}{\avgdegree}-1,
\end{equation*}
which is the RHS in Eq. \eqref{eq:mult_graphon_stats_eta}.

\paragraph*{Scale-free graphon} Consider the real-interval $[h,1]$ where $0<h\ll 1$ on which nodes are distributed uniformly. (Conventionally, nodes are distributed uniformly on $[0,1]$ in a graphon, but we curtail the interval to prevent unbounded node degrees in this ``scale-free'' graphon.)
\begin{equation*}
    \mu(x)=
    \begin{cases}
    \frac{1}{1-h} &\mbox{if }h\le x \le 1,\\
    0 &\mbox{otherwise.}
    \end{cases}
\end{equation*}
Next, (undirected) edges are added according to the connectivity kernel:
\begin{equation*}
    \nu(x,y)=\frac{\beta}{n}\left(\frac{xy}{h^2}\right)^{-\alpha},
\end{equation*}
where $0<\alpha\le 1$ and $\beta>0$. In terms of the multiplicative graphons definition from Sec. \ref{sec:graphons}:$$\nu(x,y)=W^\times_n(x,y)\triangleq f(x)f(y),$$ where: 
\begin{equation}
    \label{eq:sfg_foo}
    f(x)=\sqrt{\frac{\beta}{n}}\left(\frac{x}{h}\right)^{-\alpha}.   
\end{equation}
Then we know that the expected degrees and GLD are given by making use of $\zeta\triangleq\frac{\sqrt{n}}{1-h}\int_h^1f(x)\,dx$:
\begin{equation}
    \label{eq:sfg_zeta}
    \begin{split}
        \zeta&=\frac{\sqrt{n}}{1-h}\int_h^1f(x)\,dx=\frac{\sqrt{\beta}h^\alpha}{1-h}\int_h^1x^{-\alpha}\,dx\\&=\begin{cases}
            \frac{\sqrt{\beta}h}{1-h}\left.\log x\right\vert_h^1 &\mbox{if }\alpha=1,\\
            \frac{\sqrt{\beta}h^\alpha}{1-h}\left.\frac{x^{1-\alpha}}{1-\alpha}\right\vert_h^1 &\mbox{otherwise}
        \end{cases}\\
        &= \begin{cases}
            \sqrt{\beta}\frac{h\log h^{-1}}{1-h} &\mbox{if }\alpha=1,\\
            \sqrt{\beta}\frac{h^\alpha-h}{(1-h)(1-\alpha)} &\mbox{otherwise,}
        \end{cases}
    \end{split}
\end{equation}
and $\eta\triangleq \frac{n}{1-h}\int_h^1f(x)^2\,dx$:
\begin{equation}
    \label{eq:sfg_eta_apdx}
    \begin{split}
        \eta&=\frac{n}{1-h}\int_h^1f(x)^2\,dx=\frac{\beta h^{2\alpha}}{1-h}\int_h^1x^{-2\alpha}\,dx\\&=\begin{cases}
            \frac{\beta h}{1-h}\left.\log x\right\vert_h^1 &\mbox{if }\alpha=1/2,\\
            \frac{\beta h^{2\alpha}}{1-h}\left.\frac{x^{1-2\alpha}}{1-2\alpha}\right\vert_h^1 &\mbox{otherwise}
        \end{cases}\\
        &= \begin{cases}
            \beta\frac{h\log h^{-1}}{1-h} &\mbox{if }\alpha=1/2,\\
            \beta\frac{h^{2\alpha}-h}{(1-h)(1-2\alpha)} &\mbox{otherwise.}
        \end{cases}
    \end{split}
\end{equation}

\paragraph*{Scale-free graphon: expected degree distribution}Using Eqs. \eqref{eq:degree_r1g}, \eqref{eq:sfg_foo}, \eqref{eq:sfg_zeta} the expected degree at location $x$ is given by:
\begin{equation}
    \label{eq:sfg_degree}
    \begin{split}
        \meandegree(x)&=\sqrt{n}\zeta f(x)= \begin{cases}
            x^{-1}\beta\frac{h^2\log h^{-1}}{1-h} &\mbox{if }\alpha=1,\\
            x^{-\alpha}\beta\frac{h^{2\alpha}-h^{1+\alpha}}{(1-h)(1-\alpha)} &\mbox{otherwise}
        \end{cases}\\
        &\triangleq c(\alpha)x^{-\alpha},
    \end{split}
\end{equation}
where $c(\alpha)$ is a constant independent of $x$. To see how expected degrees are distributed by a power law, consider Eq. \eqref{eq:sfg_degree} as a monotonically decreasing transformation of random variable $x$ to $\meandegree(x)$.
\begin{equation}
    \label{eq:sfg_degree_powerlaw}
    \begin{split}
        \prob{\meandegree(x)\le k}&=\prob{x\ge \meandegree^{-1}(k)}=1-\prob{x\le \meandegree^{-1}(k)}\\
        &=\begin{cases}
            1 &\mbox{if }\meandegree^{-1}(k)<h,\\
            \frac{1-\meandegree^{-1}(k)}{1-h}&\mbox{if }\meandegree^{-1}(k)\in[h,1],\\
            0 &\mbox{if }\meandegree^{-1}(k)>1
        \end{cases}\\
        &=\begin{cases}
            1 &\mbox{if }k>c(\alpha)h^{-\alpha},\\
            \frac{1-\left(\frac{k}{c(\alpha)}\right)^{-\frac{1}{\alpha}}}{1-h}&\mbox{if }k\in[c(\alpha),c(\alpha)h^{-\alpha}],\\
            0 &\mbox{if }k<c(\alpha)
        \end{cases}\\
        \implies\prob{\meandegree(x)=k}&=\begin{cases}
            \frac{\theta-1}{1-h}c(\alpha)^{\theta-1}k^{-\theta}&k\in[c(\alpha),c(\alpha)h^{-\alpha}],\\
            0 &\mbox{otherwise},
        \end{cases}
    \end{split}
\end{equation}
i.e. $\meandegree\sim \meandegree^{-\theta}$ where $\theta\triangleq\left(1+\frac{1}{\alpha}\right)$ is the power law exponent which can take values $\theta\in[2,\infty)$. This power law distribution is over a bounded interval $[c(\alpha),c(\alpha)h^{-\alpha}]$ unlike more conventional ones which are unbounded from above. We emphasize that this is the distribution of the \emph{expectation} of degrees in the network and not, strictly speaking, the empirical degree distribution. The network generation process of adding conditionally independent edges will asymptotically impose a Poisson degree distribution at a given location $x$, whose expectation is given by $\meandegree(x)$, which follows the power law as shown in Eq. \eqref{eq:sfg_degree_powerlaw}. But it can be shown that the empirical degree distribution will also follow this power law behavior.

\paragraph*{Scale free graphon: empirical degree distribution}Let $\degree(x)$ be the degree of node at $x$, then $\degree(x)\sim\poisson{\meandegree(x)}$. If $\degree$ represent the degree of any node in the network, then its distribution is given by:
\begin{equation}
    \label{eq:sfg_degree_true}
    \begin{split}
        \prob{\degree=k}&=\int_h^1\prob{\degree(x)=k}\,d\mu(x)\\&=\frac{1}{1-h}\int_h^1\frac{\left[c(\alpha)x^{-\alpha}\right]^k\exp\left(-c(\alpha)x^{-\alpha}\right)}{k!}\,dx\\
        &=\frac{c(\alpha)^{\frac{1}{\alpha}}}{\alpha(1-h)k!}\bigg[\Gamma\left(k-\frac{1}{\alpha},c(\alpha)\right)\\&\quad-\Gamma\left(k-\frac{1}{\alpha},c(\alpha)h^{-\alpha}\right)\bigg]\quad\qquad(\mbox{if }k>\alpha^{-1})\\
        &=\frac{\theta-1}{1-h}c(\alpha)^{\theta-1}\\&\quad\times\frac{\Gamma\left(k+1-\theta,c(\alpha)\right)-\Gamma\left(k+1-\theta,c(\alpha)h^{-\alpha}\right)}{\Gamma(k+1)}.
    \end{split}
\end{equation}
Now, assuming that $c(\alpha)\to 0$ and $c(\alpha)h^{-\alpha}\to \infty$ i.e. the interval bounding the expected degree $\meandegree(x)$ is very wide:
\begin{equation}
    \label{eq:sfg_degree_true_apx}
    \begin{split}
        \prob{\degree=k}&\approx\frac{\theta-1}{1-h}c(\alpha)^{\theta-1}\frac{\Gamma\left(k+1-\theta\right)}{\Gamma(k+1)}\\&\approx\frac{\theta-1}{1-h}c(\alpha)^{\theta-1}\frac{(k-\theta)^{k+\frac{1}{2}-\theta}}{k^{k+\frac{1}{2}}}e^\theta\\
        &=\frac{\theta-1}{1-h}c(\alpha)^{\theta-1}\left(1-\frac{\theta}{k}\right)^{k+\frac{1}{2}-\theta}e^\theta k^{-\theta}\\&\approx\frac{\theta-1}{1-h}c(\alpha)^{\theta-1}e^{\frac{\theta}{k}\left(\theta-\frac{1}{2}\right)} k^{-\theta}\\
        &\approx\frac{\theta-1}{1-h}c(\alpha)^{\theta-1}k^{-\theta}.
    \end{split}
\end{equation}
i.e. $\degree\sim \degree^{-\theta}$ where in the second approximation we apply Stirling's formula for the Gamma function i.e. $\Gamma(z+1)\approx\sqrt{2\pi z}\left(\frac{z}{e}\right)^z$; in the last two approximations we make a stronger use of $k\gg\theta$. These assumptions make it apparent that the power law holds for the empirical degree distribution in this ``scale-free'' graphon for large degrees when a broad degree distribution is possible, and the form obtained in Eq. \eqref{eq:sfg_degree_true_apx} matches that of the expected degree distribution in Eq. \eqref{eq:sfg_degree_powerlaw}, as depicted in Fig. \ref{fig:sfg_degree_distribution}. Also, using Eqs. \eqref{eq:degree_net_r1g}, \eqref{eq:sfg_zeta} the average degree of the whole network is given by
\begin{equation}
    \label{eq:sfg_degree_deg}
        \avgdegree = \zeta^2
        =\begin{cases}
            \beta\left(\frac{h\log h^{-1}}{1-h}\right)^2 &\mbox{if }\alpha=1,\\
            \beta\left[\frac{h^{\alpha}-h}{(1-h)(1-\alpha)}\right]^2 &\mbox{otherwise}.
        \end{cases}
\end{equation}
\begin{figure}
    \centering
    \includegraphics[width=\columnwidth]{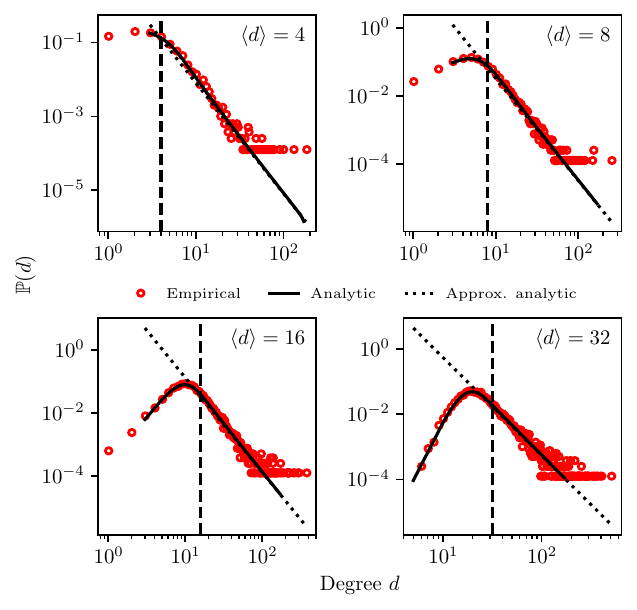}
    \caption{Degree distribution for ``scale-free'' graphons with power law exponent $\theta=3$ (i.e. $\alpha=\frac{1}{2}$), corresponding to BA graphs, for $n=8192$ nodes, free-parameter $\beta=1$, with varying mean degree $\avgdegree$. Markers indicate empirical distribution from a single network sample, while the solid line indicates the analytic estimate from Eq. \eqref{eq:sfg_degree_true}, and the dotted line is the approximate analytic estimate from Eq. \eqref{eq:sfg_degree_true_apx}. In each subplot, the power law is apparent for degrees above a certain cut-off---evidently above $\avgdegree$ indicated by the dashed line---but these are not \emph{pure} power law behaviors.} 
    \label{fig:sfg_degree_distribution}
\end{figure}

\paragraph*{Special cases: BA and highly scale-free graphons}
Here, we consider some special cases of the ``scale-free'' graphon, as described in Sec. \ref{sec:graphons}. It will be useful to define the solution to the following equation:
\begin{equation}\label{eq:sfg_lambertw_eq}
    x\log x=a.
\end{equation}
We know that for real numbers $y, a$ the solution to
\begin{equation}\label{eq:sfg_lambertw_maineq}
    ye^y=a,
\end{equation}
is given by:
\begin{equation*}
    y = \begin{cases}
    W_0(a)&\textrm{if } a\ge 0,\\
    W_{-1}(a)&\textrm{if } -\frac{1}{e}\le a<0,
    \end{cases}
\end{equation*}
where $W_0$ and $W_{-1}$ are the principal and non-principal branches of the Lambert $W$ function, and have the following asymptotic expansions \cite{corless1996lambertw}:
\begin{subequations}
\label{eq:sfg_lambertw_asym}
\begin{align}
    &W_0(x)\approx \log x - \log\log x &\textrm{ (for large }x>0\textrm{)}, \\
    &W_{-1}(-x)\approx \log (-x) - \log(-\log (-x)) &\textrm{(for small }x<0\textrm{)}.
\end{align}
\end{subequations}
Consequently, the solution for Eq. \eqref{eq:sfg_lambertw_eq} is given by substituting $y$ with $\log x$ in Eq. \eqref{eq:sfg_lambertw_maineq} to obtain:
\begin{equation}
    \label{eq:sfg_lambertw}
    \begin{split}
    x &= \begin{cases}
    e^{W_0(a)}&\textrm{if } a\ge 0,\\
    e^{W_{-1}(a)}&\textrm{if } -\frac{1}{e}\le a<0
    \end{cases}
    \\&=\begin{cases}
    \frac{a}{W_0(a)}&\textrm{if } a\ge 0,\\
    \frac{a}{W_{-1}(a)}&\textrm{if } -\frac{1}{e}\le a<0
    \end{cases}
    \\&\approx\begin{cases}
    \frac{a}{\log a - \log\log a}&\textrm{for large positive }a,\\
    \frac{-a}{\log(-\log (-a))-\log(- a)}&\textrm{for small negative }a,
    \end{cases}
    \end{split}
\end{equation}
where in the first equality we use the identity satisfied by the Lambert $W$ function in Eq. \eqref{eq:sfg_lambertw_maineq}, and in the second one we use the asymptotic expression from Eq. \eqref{eq:sfg_lambertw_asym}. Given that formulations in terms of the network's mean degree are easier to interpret, we consider the special cases of scale-free graphons and use Eqs. \eqref{eq:sfg_degree_deg}, \eqref{eq:sfg_eta_apdx} along with the constraint $h\ll 1$, to express $h$ and $\eta$ in terms of $\avgdegree,\beta$ as follows:
\begin{enumerate}
    \item  For ER graphs, we have $\theta\to\infty\implies\alpha\to 0$:
    \begin{equation}
        \label{eq:sfg_dphi_er}
        \begin{split}
            &\avgdegree \to \beta,\enspace\eta\to \beta,\\
            \implies&\eta = v. 
        \end{split}
    \end{equation}
    \item For BA graphs the power law exponent is $\theta=3$ \cite{barabasi1999bagraph}, and therefore $\alpha=1/2$
    \begin{equation}
        \label{eq:sfg_dphi_ba}
        \begin{split}
            &\avgdegree =\beta\left(2\frac{\sqrt{h}}{1+\sqrt{h}}\right)^2\approx 4\beta h,\enspace
            \eta\approx \beta h\log h^{-1},\\
            \implies& h \approx \frac{\avgdegree}{4\beta},\\
            \implies&\eta \approx \frac{\avgdegree}{4}\log\frac{4\beta}{\avgdegree}. 
        \end{split}
    \end{equation}
    \item For ``highly scale-free'' networks where power law exponent $\theta\to 2\implies \alpha\to 1$:
    \begin{equation}
        \label{eq:sfg_dphi_hsf}
        \begin{split}
            &\avgdegree\approx \beta(h\log h^{-1})^2,\enspace\eta= \beta h,\\
            \implies& h\log h = -\sqrt{\frac{\avgdegree}{\beta}}
            \\\implies& h\approx \frac{\sqrt{\frac{\avgdegree}{\beta}}}{\log\log\sqrt{\frac{\beta}{\avgdegree}}+\log\sqrt{\frac{\beta}{\avgdegree}}}\approx \frac{\sqrt{\frac{\avgdegree}{\beta}}}{\log\sqrt{\frac{\beta}{\avgdegree}}},\\
            \implies&\eta \approx \frac{\sqrt{\beta \avgdegree}}{\log\sqrt{\frac{\beta}{\avgdegree}}},
        \end{split}
    \end{equation}
   where we use the result in Eq. \eqref{eq:sfg_lambertw}, and asymptotically $\log\log n\ll \log n$.
\end{enumerate}

\section{\label{sec:apdx_gcsbm}MLE of labeled networks as SBMs}

Suppose we have a given network represented by the adjacency matrix $\mat{A}$ with known node labels given by the assignment matrix $\mat{Z}$. The most trivial label assignment would be to assume a single block to which every node belongs, i.e. $k=1$, which assumes an underlying ER graph model. For many empirical networks, labels can be retrieved from domain knowledge---like nodes in a social network will have associated socio-demographic characteristics. Assuming that the network is generated by an SBM with parameters $\boldsymbol{\pi}, \mat{B}$, we can use the model definition to obtain a maximum likelihood estimate (MLE) of the parameters. For ease of notation let $\kappa:\{1,2,\dots, n\}\to\{1,2,\dots, k\}$ define the label assignment for a given network $\mat{A}$. Then the likelihood function is given by:
\begin{equation}
    \label{eq:sbm_likelihood}
    \begin{split}
    L(\boldsymbol{\pi},\mat{B}) = &\prod_{i\ne j}\condprob{A_{ij}}{Z_{i\kappa(i)}=1,Z_{j\kappa(j)}=1,B_{\kappa(i)\kappa(j)}}\\&\times\prod_i\condprob{Z_{i\kappa(i)}=1}{\boldsymbol{\pi}} \\
    =&\prod_{i\ne j}\left(\frac{B_{\kappa(i)\kappa(j)}}{n}\right)^{A_{ij}}\left(1-\frac{B_{\kappa(i)\kappa(j)}}{n}\right)^{1-A_{ij}}\prod_i\pi_{\kappa(i)}.
    \end{split}
\end{equation}
To maximize the log-likelihood, subject to the constraint $\sum_x\pi_x=1$, apply the method of Lagrange multipliers. Define the Lagrangian $\mathcal{L}(\boldsymbol{\pi},\mat{B},\zeta)\triangleq\log L(\boldsymbol{\pi},\mat{B})+\zeta\left(1-\sum_x\pi_x\right)$, where $\zeta$ is the Lagrange multiplier, which from Eq. \eqref{eq:sbm_likelihood} is given by:
\begin{equation}
    \label{eq:sbm_mle_likelihood}
    \begin{split}
    \mathcal{L}(\boldsymbol{\pi},\mat{B},\zeta) = &\sum_{x,y}\Bigg[ \idx{\mat{Z}^T\mat{AZ}}{xy}\log\left(\frac{B_{xy}}{n}\right) \\&+\idx{\mat{Z}^T\left(\ones{n}\ones{n}^T-\mat{A}-\eyes{n}\right)\mat{Z}}{xy}\log\left(1-\frac{B_{xy}}{n}\right)\Bigg]\\&+\sum_{x} \left(\idx{\mat{Z}^T\ones{n}}{x} \log\pi_x-\zeta\pi_x\right)+\zeta\\
    \end{split}
\end{equation}
For finding the stationary points, consider the first-order partial derivatives of Eq. \eqref{eq:sbm_mle_likelihood}:
\begin{subequations}
    \label{eq:sbm_mle_lagrange_1}
    \begin{align}
        \frac{\partial\mathcal{L}}{\partial B_{xy}} = & \frac{\idx{\mat{Z}^T\mat{AZ}}{xy}}{B_{xy}}-\frac{\idx{\mat{Z}^T\left(\ones{n}\ones{n}^T-\mat{A}-\eyes{n}\right)\mat{Z}}{xy}}{n-B_{xy}}\\
        \frac{\partial\mathcal{L}}{\partial \pi_x} =&\frac{\idx{\mat{Z}^T\ones{n}}{x}}{\pi_x}-\zeta\\
         \frac{\partial\mathcal{L}}{\partial \zeta}=&1-\sum_x\pi_x
    \end{align}
\end{subequations}
Setting the partial derivatives with respect to the parameters to zero, i.e. RHS of Eq. \eqref{eq:sbm_mle_lagrange_1} to $0$, along with looking at the Hessian condition, yields:
\begin{subequations}
    \label{eq:sbm_mle}
    \begin{align}
        \label{eq:sbm_mle_pi}
        \hat{\boldsymbol\pi} &= \frac{\mat{Z}^T\ones{n}}{n},\\
        \label{eq:sbm_mle_b}
        \hat{\mat{B}} &= n\mat{Z}^T\mat{AZ}\oslash\left\{\mat{Z}^T\left(\ones{n}\ones{n}^T-\eyes{n}\right)\mat{Z}\right\},
    \end{align}
\end{subequations}
where $\ones{n}$ is an all-ones vector of length $n$, $\eyes{n}$ is the identity matrix of size $n\times n$, and $\oslash$ is element-wise division. Evidently, $\hat{\boldsymbol{\pi}}$ corresponds to the normalized total node-count of every block, and $\hat{\mat{B}}$ to the normalized total edge-count of every block pair, which is useful especially when we may not have access to the full adjacency structure. This formalism allows us to define an SBM for any given $(\mat{A},\mat{Z})$ pair, regardless of how $\mat{Z}$ itself was inferred, with no computational cost except for summation of edge counts.

\section{\label{sec:apdx_asymmetric}Asymmetric connectivity kernel}

As discussed in Sec. \ref{sec:spd_supercritical}, there can be non-trivial differences between the undirected and directed setting as the latter allows for asymmetry in the connectivity kernel. Consequently, the concept of a ``giant component'' is more subtle for directed networks: it is not necessary for a (directed) path to exist from $i$ to $j$, even if one exists from $j$ to $i$. This yields two generalizations of the giant component: the set of nodes of size $\bigtheta{n}$ that are reachable from a node constitute its giant out-component, and those from which a node can be reached constitute its giant in-component \cite{broder2011graphweb, newman2001random}. For directed networks, we say that the node pair $(i,j)$ is supercritical if asymptotically there can exist a giant in-component of $j$ such that $i$ is on it, or equivalently if asymptotically there can exist a giant out-component of $i$ such that $j$ is on it. The definition of a subcritical node pair arises via negation. We emphasize that supercriticality is symmetric and transitive in an undirected setting, whereas it is only transitive in a directed setting. Consequently, it is possible for $(i,j)$ and $(k,j)$ to be supercritical, but $(i,k)$ and/or $(k,i)$ to be subcritical.


We consider the GLD on the in-/out-components of a directed network, first in the supercritical regime for node pair $(i,j)$ in a sparse ensemble average network (SEAN). By the same argument as in Sec. \ref{sec:spd_supercritical}, without loss of generality, we assume that the expected adjacency matrix $\avg{\mat{A}}$ is not permutation similar to a block diagonal matrix. (In the directed setting, this is a weaker condition than $\avg{\mat{A}}$ being irreducible, i.e. permutation similar to a block triangular matrix.) Let $\phi_{ij}^-$ be the event that node $i$ is on the putative giant in-component of $j$ (note the explicit dependence on both $i,j$), and $\phi_{ji}^+$ be the event that node $j$ is on the putative giant out-component of $i$. We remark that $\phi_{jj}^-\equiv\phi_{jj}^+$ is the event that $j$ is on a giant strongly connected component of the network, which is a connected component of size $\bigtheta{n}$ such that every node is reachable from every other node on it. Analogous to Eq. \eqref{eq:def_omega_psi}, we can define the conditional PMF and survival function matrices, but when conditioning on $\phi_{ij}^-$ instead of $\phi_i$. Assuming conditionally independent edges, sparsity, and no bottlenecks, the recursive setup of Sec. \ref{sec:spd_supercritical}, and the result in Lemma \ref{lemma:1}, can be applied verbatim to directed networks. However, the initial condition for the conditional PMF matrix $[\Omegaaf_1]_{ik}=\condprob{A_{ik}=1}{\phi_{ij}^{-}}$ will be different from Eq. \eqref{eq:prob_connect_exact}. First, we extend the definitions of percolating, non-percolating, and dangling nodes to the directed setting, by considering the probability of nodes being on the giant in-component of a given node $j$ in network $G$. We remark that Lemma \ref{lemma:perccontribs} and Corollary \ref{lemma:perc} analogously follow in this setting. Second, analogous to Lemma \ref{lemma:deg_gcc} in the undirected setting, we derive connection probabilities that condition on the source node being on the giant in-component of a given node $j$. For the following lemma, we use $G^{\setminus ij}$ to indicate a subgraph with the edge $A_{ij}$ unconsidered, and explicate the dependence of the percolation event $\phi_{ij}^-$ on a network $G$ by writing it as $\phi_{ij}^-(G)$.

\begin{lemma}[Connection probability to giant in-component(s)]\label{lemma:deg_gcc_directed}
    For nodes $i,j,k$ in a sparse ensemble average network $G$ that is directed, if $i$ is a percolating node with respect to $j$ ($\prob{\phi_{ij}^-(G)}=\bigomega{1}$) and $k$ is a percolating or non-percolating node with respect to $j$ ($\prob{\phi_{kj}^-(G)}=\bigomega{1}$ or $0$), then asymptotically:
    \begin{equation}\label{eq:prob_exact_connect_directed}
    \begin{split}
        \condprob{A_{ik}=1}{\phi_{ij}^-(G)}&\approx \prob{A_{ik}=1}\times\\&\left\lbrace 1+\left[\frac{1}{\prob{\phi_{ij}^-(G)}}-1\right]\prob{\phi_{kj}^-(G)}\right\rbrace.
    \end{split}
    \end{equation}
\end{lemma}
\begin{proof}
    The proof is very similar to that of Lemma \ref{lemma:deg_gcc}. Without loss of generality, generate the network $G$ such that the edge from node $i$ to $k$ is generated in the end, which will be independent of anything else (Eq. \eqref{eq:ciem}), and consider the probability 
    \begin{equation*}
        \begin{split}
        &\prob{A_{ik}=1,\phi_{ij}^-(G),\phi_{kj}^-(G)}\\&=\prob{A_{ik}=1,\phi_{ij}^-(G^{\setminus{ik}}),\phi_{kj}^-(G^{\setminus{ik}})} \\&\quad+ \prob{A_{ik}=1,\neg\phi_{ij}^-(G^{\setminus{ik}}),\phi_{kj}^-(G^{\setminus{ik}})}\\
            &=\prob{A_{ik}=1}\prob{\phi_{kj}^-(G^{\setminus{ik}})}.
        \end{split}
    \end{equation*}
    Similarly consider the probability 
    \begin{align*}
        &\prob{A_{ik}=1,\phi_{ij}^-(G),\neg\phi_{kj}^-(G)}\\&=\prob{A_{ik}=1,\phi_{ij}^-(G^{\setminus{ik}}),\neg\phi_{kj}^-(G^{\setminus{ik}})}\\&=\prob{A_{ik}=1}\prob{\phi_{ij}^-(G^{\setminus{ik}}),\neg\phi_{kj}^-(G^{\setminus{ik}})}
    \end{align*}
    Analogously to Lemma \ref{lemma:perccontribs}, the percolation events of two percolating nodes with respect to $j$ in $G$ are asymptotically independent, that is, $\prob{\phi_{ij}^-(G^{\setminus{ik}}),\neg\phi_{kj}^-(G^{\setminus{ik}})}\approx\prob{\phi_{ij}^-(G^{\setminus{ik}})}\prob{\neg\phi_{kj}^-(G^{\setminus{ik}})}$. Substituting in the previous equation and adding the last two equations yields:
    \begin{align*}
         \prob{A_{ik}=1,\phi_{ij}^-(G)}&\approx \prob{A_{ik}=1}\\&\times\bigg[\prob{\phi_{ij}^-(G^{\setminus{ik}})}+\prob{\phi_{kj}^-(G^{\setminus{ik}})}\\&-\prob{\phi_{ij}^-(G^{\setminus{ik}})}\prob{\phi_{kj}^-(G^{\setminus{ik}})}\bigg].
    \end{align*}
    Since $i$ and $k$ are, by assumption, non-dangling nodes with respect to $j$ in $G$, we know that removal of any single node has a vanishing effect on their percolation probabilities (analogously to Eq. \eqref{eq:percsubfullgraph}), therefore the removal of a single edge $A_{ik}$ has a vanishing effect on their percolation probabilities: $\prob{\phi_{ij}^-(G^{\setminus ik})}\approx\prob{\phi_{ij}^-(G)}$, $\prob{\phi_{kj}^-(G^{\setminus ik})}\approx\prob{\phi_{kj}^-(G)}$. This yields:
    \begin{align*}
    \prob{A_{ik}=1,\phi_{ij}^-(G)}&\approx\prob{A_{ik}=1}\\&\times\bigg[\prob{\phi_{ij}^-(G)}+\prob{\phi_{kj}^-(G)}\\&-\prob{\phi_{ij}^-(G)}\prob{\phi_{kj}^-(G)}\bigg],
    \end{align*}
    which divided through by $\prob{\phi_{ij}^-(G)}$ gives us the desired expression on the RHS of Eq. \eqref{eq:prob_exact_connect_directed}.
\end{proof}

\paragraph*{Directed percolation probability via the GLD} Finding the RHS of Eq. \eqref{eq:prob_exact_connect_directed} requires an estimate of the ``directed'' percolation probability $\prob{\phi_{ij}^-}$. Using arguments of Sec. \ref{sec:perc_prob}, i.e. by continuity of the GLD, the amount of probability mass at $\lambda_{ij}=\infty$ when $i$ is on the putative giant in-component of $j$ is given by the steady state of Eq. \eqref{eq:spd_main_psi} in the directed setting:
\begin{equation}\label{eq:perc_prob_limit_directed}
    \condprob{\lambda_{ij}=\infty}{\phi_{ij}^-} = \lim_{l\to\infty} \condprob{\lambda_{ij}>l}{\phi_{ij}^-}.
\end{equation}
Since $i$ is already on a putative giant in-component of $j$, the event $\lambda_{ij}=\infty\ |\ \phi_{ij}^-$ corresponds to $j$ not being on this putative giant out-component of $i$:
\begin{equation}\label{eq:perc_prob_pinf_directed_out}
    \prob{\phi_{ji}^+}=1-\condprob{\lambda_{ij}=\infty}{\phi_{ij}^-}.
\end{equation}
We note that the probability of $i$ being on the putative giant in-component of $j$ is the same as the probability of $i$ being on the putative giant out-component of $j$ when the direction of all edges in the network is reversed. That is, with the conditioning on the average adjacency matrix made explicit:
\begin{equation}\label{eq:perc_prob_directed_out_in}
    \condprob{\phi_{ij}^-}{\avg{\mat{A}}}=\condprob{\phi_{ij}^+}{\avg{\mat{A}}^T}.
\end{equation}
Therefore, Eqs. \eqref{eq:spd_main}, \eqref{eq:prob_exact_connect_directed}, \eqref{eq:perc_prob_limit_directed}, \eqref{eq:perc_prob_pinf_directed_out} and \eqref{eq:perc_prob_directed_out_in} provide us with the full analytic form of the GLD for directed networks. Since precision in the initial condition is not important for the steady state of Eq. \eqref{eq:spd_main}, we can use the na\"{i}ve initial condition
\begin{equation}
    \label{eq:naive_init_directed}
    \condprob{A_{ij}=1}{\phi_{ij}^-}\approxnaive\prob{A_{ij}=1}
\end{equation}
for the recursive setup given the transposed average adjacency matrix $\avg{\mat{A}}^T$ to first obtain the directed percolation probability of every supercritical node pair, and then run the recursive setup a second time given $\avg{\mat{A}}$ and the exact initial condition in Eq. \eqref{eq:prob_exact_connect_directed}.

\paragraph*{Asymmetric supercriticality} As described above, a directed setting that permits asymmetric kernels can result in non-trivial deviations from an undirected setting, particularly when supercriticality is asymmetric. For instance, consider a directed ``chain'' stochastic block model (SBM) with $k$ blocks, whose block matrix has non-zero entries on the diagonal---allowing for nodes to connect to other nodes within the block---and on the upper off-diagonal---allowing for nodes to connect to nodes in the next adjacent block:
\begin{equation*}
    \mat{B} = 
    \begin{bmatrix}
    a&b&0&\cdots&0\\
    0&a&b&\ddots&\vdots\\
    \vdots&\ddots&\ddots&\ddots&0\\
    \vdots&&\ddots&a&b\\
    0&\cdots&\cdots&0&a
    \end{bmatrix}.
\end{equation*}
Evidently for $i<j$, (directed) paths can exist from nodes of block $i$ to block $i$, and block $i$ to block $j$, but they are entirely prohibited from nodes of block $j$ to block $i$. Given this directed chain structure we can expect $\prob{\phi_{ij}^-}<\prob{\phi_{ik}^-}$ for block indices $i<j<k$. Fig. \ref{fig:percolation_pipesbm} shows the variation in directed percolation probabilities in one such chain SBM, using Eqs. \eqref{eq:spd_main}, \eqref{eq:naive_init_directed}, \eqref{eq:perc_prob_limit_directed}, \eqref{eq:perc_prob_pinf_directed_out} and \eqref{eq:perc_prob_directed_out_in}.

\begin{figure}[t]
    \centering
    \includegraphics[width=\columnwidth]{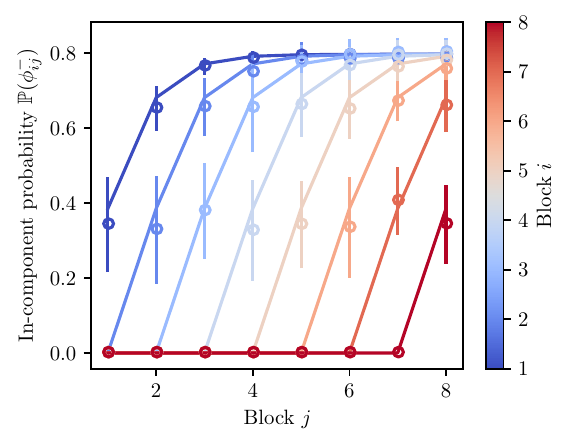}
    \caption{Empirical and analytic estimates of directed percolation probabilities agree with each other for an asymmetric ``chain'' SBM (with $8$ equi-sized blocks, values $10$ and $6$ on the diagonal and upper off-diagonal entries of the block matrix, respectively). Colors correspond to the block membership of source node, while $x$-axis corresponds to the block membership of target node. Solid lines represent analytic form using Eqs. \eqref{eq:spd_main}, \eqref{eq:naive_init_directed}, \eqref{eq:perc_prob_limit_directed}, \eqref{eq:perc_prob_pinf_directed_out} and \eqref{eq:perc_prob_directed_out_in}, while markers and bars indicate empirical estimates: mean and standard error over 20 network samples.}
    \label{fig:percolation_pipesbm}
\end{figure}

\paragraph*{Symmetric supercriticality} Consider the scenario where supercriticality is symmetric for all node pairs $(i,j)$: asymptotically, if there can exist a giant in-component of $j$ containing $i$, then there can exist a giant in-component of $i$ containing $j$, and vice-versa. This implies that the putative giant in- and out-components of all nodes coalesce into a single giant \emph{strongly connected} component, on which every node can reach every other node. Then we can asymptotically drop the dependence on the ``target'' node from the directed percolation events and write $\forall j:\phi_{ij}^-=\phi_i^-,\phi_{ij}^+=\phi_i^+$, where $\phi_i^-$ ($\phi_i^+$) is the event that node $i$ has an edge to (from) the giant strongly connected component. This renders the initial condition for directed networks to be exactly the same as that for undirected networks, Eq. \eqref{eq:prob_connect_exact}, except where $\phi_i$ is substituted by $\phi_i^-$. If we further assume that the average adjacency matrix is symmetric, then Eqs. \eqref{eq:perc_prob_directed_out_in} and \eqref{eq:perc_prob_pinf_directed_out} yield $\prob{\phi_j^-}=1-\condprob{\lambda_{ij}=\infty}{\phi_i^-}$, which is identical to Eq. \eqref{eq:perc_prob_pinf} for undirected networks, except where $\phi_i$ is substituted by $\phi_i^-$. It then follows that the GLD for a network generated using a symmetric $\avg{\mat{A}}$ does not depend on the network's directedness. This is evident in Fig. \ref{fig:spd_sbm_asym}, where we plot the empirical and analytic GLD for a directed network generated with an asymmetric kernel with symmetric supercriticality, by using an SBM with an asymmetric block matrix. We emphasize that if $\avg{\mat{A}}$ is symmetric, (or equivalently the connectivity kernel $\nu$ in the sparse general random networks framework of Sec. \ref{sec:general_graphs} is symmetric,) then supercriticality is symmetric for all node pairs, but the converse need not be true.

\paragraph*{Percolation probability via self-consistent equation} We can also derive percolation probabilities without invoking the GLD. For $i$ to not be on the giant in-component of $j$, it must not have a directed edge to any node $k$ such that $k$ is on the giant in-component of $j$. (This exemplifies the transitivity of supercriticality so-defined.) Therefore, the self-consistent Eq. \eqref{eq:gcc_consistency} yields percolation probabilities in the directed setting as well, except where $\rho_i= \prob{\phi_{ij}^-}$. Given the dependence on $j$---and unlike the undirected setting---in the worst case there can be as many non-trivial solutions to the self-consistent equation as the number of nodes, all of which need to be found. However, the GLD approach to derive percolation probabilities extracts all solutions in one sweep, encoded in the steady state of the survival function matrix, which is a significant advantage. If supercriticality is symmetric for all node pairs, there can only be a unique giant strongly connected component, and there can exist only a unique non-trivial solution to the self-consistent equation---as in the undirected setting. 

\paragraph*{Subcritical regime} We now consider a node pair $(i,j)$ that is subcritical, wherein asymptotically there cannot exist a giant in-component of $j$ containing $i$, or equivalently a giant out-component of $i$ containing $j$. In the asymptotic limit this yields that either, trivially, there exist no paths from $i$ to $j$, or that $i$ can reach $j$ through paths only on a small in-/out-component. This results in the subcriticality condition:
\begin{equation*}
    \prob{\phi_{ij}^-}=\prob{\phi_{ji}^+}=0,
\end{equation*}
in which case we consider the GLD on the small in-/out-component containing $i,j$. By arguments of Sec. \ref{sec:spd_subcritical}, the recursive setup remains identical, with the difference arising in the initial condition for the conditional PMF, now given by Eq. \eqref{eq:init_omega_subcritical}. That is, we obtain the same expressions for the GLD---Eqs. \eqref{eq:spd_main} and \eqref{eq:init_omega_subcritical}---under the subcritical regime in directed and undirected networks.
\begin{figure*}
    \centering
    \subfloat[Directed graph using $\mat{B}$]{\label{fig:spd_sbm_asym_directed} \includegraphics[width=\columnwidth]{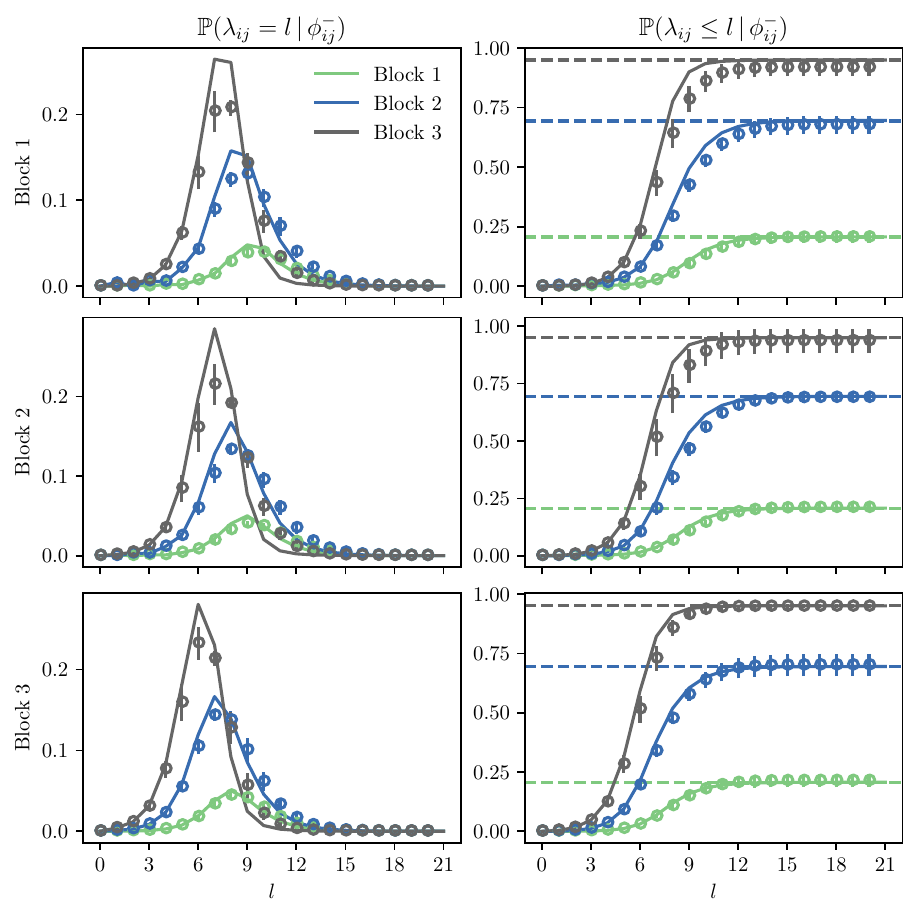}}
    \subfloat[Directed graph using $\mat{B}^T$]{\label{fig:spd_sbm_asym_directed_T} \includegraphics[width=\columnwidth]{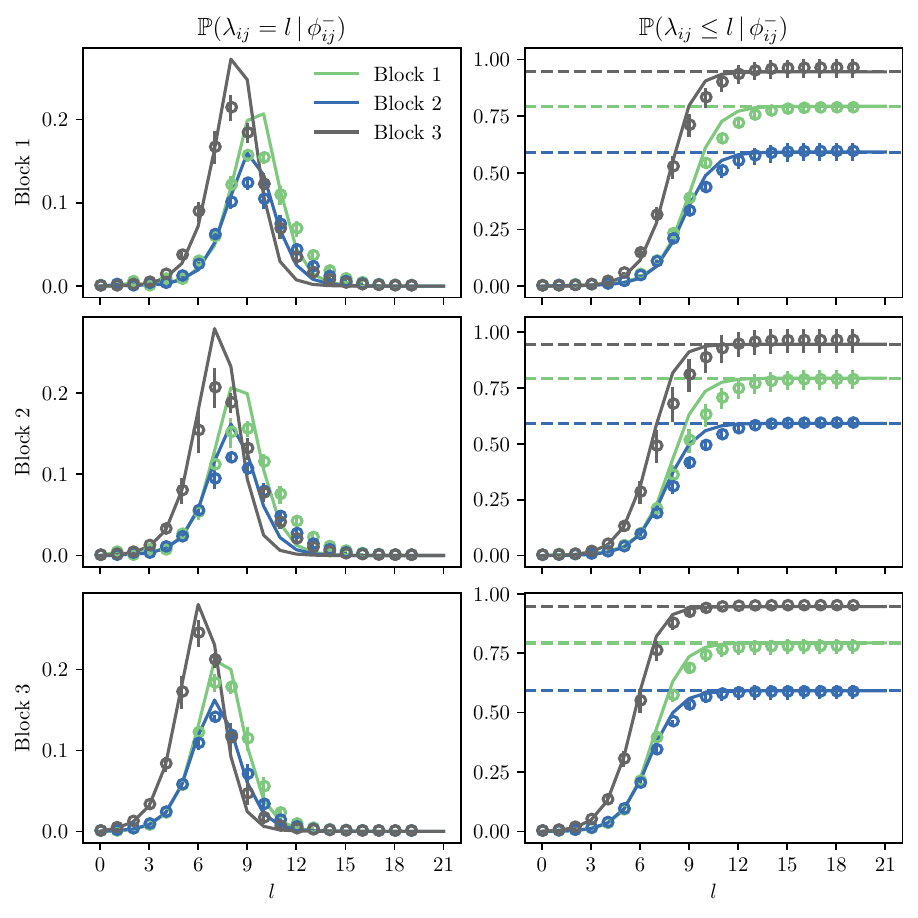}}\\
    \subfloat[Directed graph using $\frac{\mat{B}+\mat{B}^T}{2}$]{\label{fig:spd_sbm_asym_sym_directed} \includegraphics[width=\columnwidth]{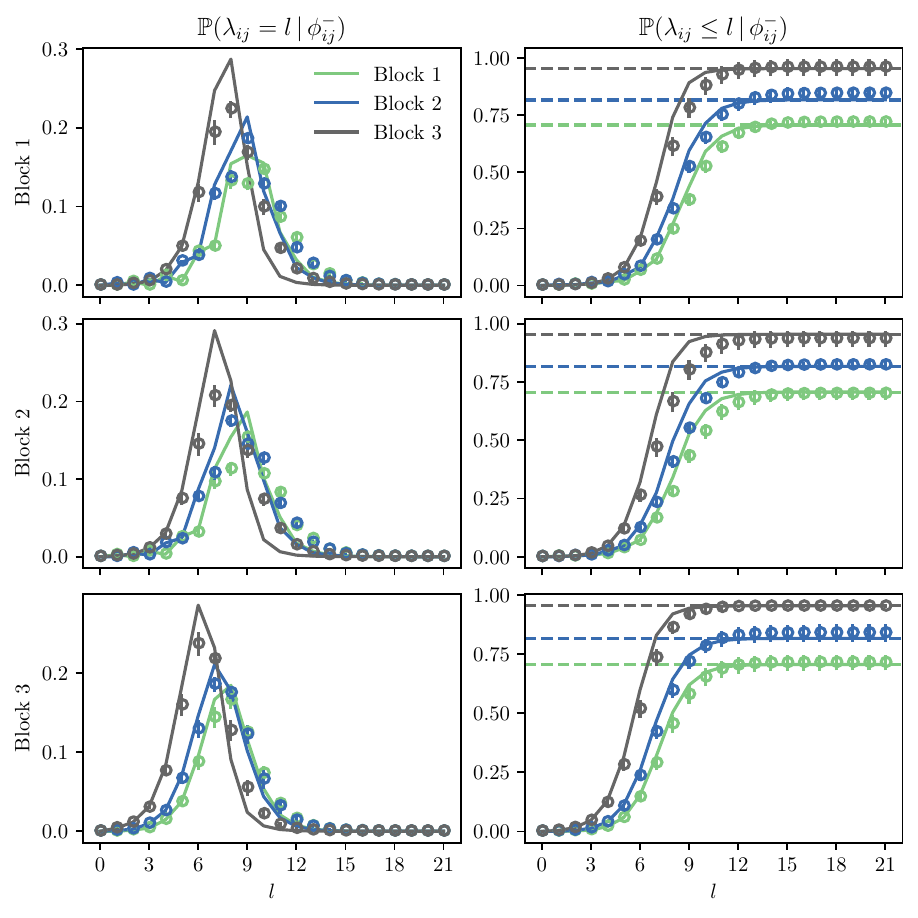}}
    \subfloat[Undirected graph using $\frac{\mat{B}+\mat{B}^T}{2}$]{\label{fig:spd_sbm_asym_sym_undirected} \includegraphics[width=\columnwidth]{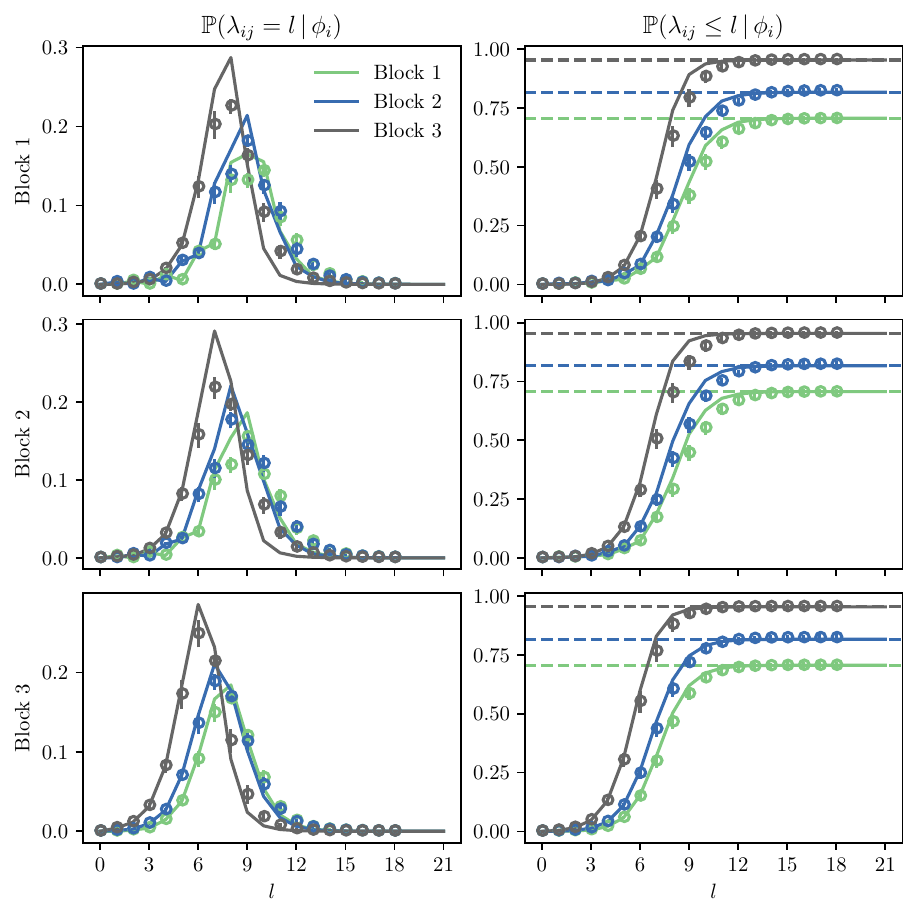}}
    \caption{Empirical and analytic distribution of geodesic lengths for directed (and undirected) graphs generated via an SBM with asymmetric (and symmetrized) block matrix $\mat{B}=\Big(\begin{smallmatrix} 0 & 8 & 0\\ 1 & 0 & 2\\ 0 & 2 & 8 \end{smallmatrix}\Big)$, distribution vector $\boldsymbol\pi=\left(\frac{1}{3}, \frac{1}{3}, \frac{1}{3}\right)$, and $n=2048$ are in good agreement. In each subplot, rows correspond to the block membership of source node, left column depicts the PMF, and the right column depicts the CDF. Solid lines represent analytic form using Eqs. \eqref{eq:spd_main}, \eqref{eq:prob_exact_connect_directed}, \eqref{eq:perc_prob_limit_directed}, \eqref{eq:perc_prob_pinf_directed_out} and \eqref{eq:perc_prob_directed_out_in}. Symbols ($\circ$) and bars indicate empirical estimates: mean and standard error over 10 network samples. Dashed asymptotes indicate percolation probabilities from Eq. \eqref{eq:gcc_consistency}.}
    \label{fig:spd_sbm_asym}
\end{figure*}

\paragraph*{Closed form of the GLD}
From Eq. \eqref{eq:sf_avg}, the closed form of the survival function matrix for the ensemble average model can be written entirely in terms of powers of $\avg{\mat{A}}$. Since the kernel $\nu$ is possibly asymmetric, $\avg{\mat{A}}$ may not be a symmetric matrix. Consequently, $\avg{\mat{A}}$ may not necessarily be a normal matrix, and therefore may not be diagonalis=zed by a unitary matrix. Analogously, in terms of the integral operator $T$ defined in Eq. \eqref{eq:integral_op}, $T$ may not be self-adjoint and therefore the spectral theorem cannot be applied, which prevents us from writing closed-form expressions for the GLD as in Eqs. \eqref{eq:spd_analytic_general_eig}, \eqref{eq:spd_analytic_general_eig_uncorrected}. However, by making use of the Jordan normal form, the closed-form survival function of the GLD can still be expressed as a weighted sum of powers of eigenvalues over a generalized eigenvector basis. In particular, let $\avg{\mat{A}}=\mat{PJP}^{-1}$, where $\mat{P}$ is an invertible matrix and $\mat{J}$ is a block diagonal matrix with $k$ blocks---not to be confused with the ``blocks'' of the stochastic block model---such that the $i^\mathrm{th}$ block denoted by $\mat{J}_i$, of size $n_i\times n_i$, corresponds to the $i^\mathrm{th}$ eigenvalue of $\avg{\mat{A}}$, denoted by $\tau_i$, and can be written as the sum of a scaled identity matrix and the canonical nilpotent matrix:
\begin{equation*}
\begin{split}
    \mat{J} &= 
    \begin{bmatrix}
    \mat{J}_{1}&0&\cdots&0\\
    0&\mat{J}_{2}&\ddots&\vdots\\
    \vdots&\ddots&\ddots&0\\
    0&\cdots&0&\mat{J}_{k}
    \end{bmatrix},\\
    \mat{J}_i &= 
    \begin{bmatrix}
    \tau_i&1&0&\cdots&0\\
    0&\tau_i&1&\ddots&\vdots\\
    \vdots&\ddots&\ddots&\ddots&0\\
    \vdots&&\ddots&\tau_i&1\\
    0&\cdots&\cdots&0&\tau_i
    \end{bmatrix}.
\end{split}
\end{equation*}
We remark that for an eigenvalue $\tau_i$, the number of Jordan blocks corresponding to it gives its geometric multiplicity, while the sum of sizes of Jordan blocks corresponding to it gives the algebraic multiplicity \cite{johnson1985matrix}. If $\mat{J}$ is diagonal, then this implies that $\avg{\mat{A}}$ is diagonalizable, and therefore $\mat{P}$ simply corresponds to the right eigenvectors of $\avg{\mat{A}}$ stacked as column vectors. In that case, the Jordan normal form is simply the eigendecomposition of $\avg{\mat{A}}$. However, since we consider a scenario where $\avg{\mat{A}}$ is not a symmetric matrix, it need not be diagonalized in such a manner. However, the Jordan normal form \emph{almost} diagonalizes it, and in this case $\mat{P}$ corresponds to a generalized set of eigenvectors of $\avg{\mat{A}}$ \cite{johnson1985matrix}, and yields: $$\avg{\mat{A}}^m = \mat{P}\mat{J}^m\mat{P}^{-1}.$$ Due to the diagonal block structure of $\mat{J}$, its powers can be written as the corresponding diagonal-block matrix of the powers of its blocks. Then for the $i^\mathrm{th}$ block, due to its structure described above, its powers for $m\in\integernonneg$ can be written in closed form as an upper-triangular matrix such that $\idx{\mat{J}_i^m}{pq}=\binom{m}{q-p}\tau_i^{m+p-q}$ when $m\ge q-p\ge 0$, and $0$ otherwise \cite{higham2008matrixfunctions}. Here, $\binom{m}{q-p}=\frac{m!}{(q-p)!(m+p-q)!}$ is the usual binomial coefficient. We can also express $\avg{\mat{A}}^m$, $\mat{P}$ and $\mat{P}^{-1}$ as block matrices:
\begin{equation*}
\begin{split}
    \avg{\mat{A}}^m &= 
    \begin{bmatrix}
    \idxp{\avg{\mat{A}}^m}{11}&\idxp{\avg{\mat{A}}^m}{12}&\cdots&\idxp{\avg{\mat{A}}^m}{1k}\\
    \idxp{\avg{\mat{A}}^m}{21}&\idxp{\avg{\mat{A}}^m}{22}&\cdots&\idxp{\avg{\mat{A}}^m}{2k}\\
    \vdots&\vdots&\ddots&\vdots\\
    \idxp{\avg{\mat{A}}^m}{k1}&\idxp{\avg{\mat{A}}^m}{k2}&\cdots&\idxp{\avg{\mat{A}}^m}{kk}
    \end{bmatrix},\\
    \mat{P} &= 
    \begin{bmatrix}
    \mat{P}_{11}&\mat{P}_{12}&\cdots&\mat{P}_{1k}\\
    \mat{P}_{21}&\mat{P}_{22}&\cdots&\mat{P}_{2k}\\
    \vdots&\vdots&\ddots&\vdots\\
    \mat{P}_{k1}&\mat{P}_{k2}&\cdots&\mat{P}_{kk}
    \end{bmatrix},\\
    \mat{P}^{-1} &= 
    \begin{bmatrix}
    \idxp{\mat{P}^{-1}}{11}&\idxp{\mat{P}^{-1}}{12}&\cdots&\idxp{\mat{P}^{-1}}{1k}\\
    \idxp{\mat{P}^{-1}}{21}&\idxp{\mat{P}^{-1}}{22}&\cdots&\idxp{\mat{P}^{-1}}{2k}\\
    \vdots&\vdots&\ddots&\vdots\\
    \idxp{\mat{P}^{-1}}{k1}&\idxp{\mat{P}^{-1}}{k2}&\cdots&\idxp{\mat{P}^{-1}}{kk}
    \end{bmatrix},
\end{split}
\end{equation*}
where we use the notation $\idxp{\mat{X}}{rs}$ to refer to the $(r,s)^\mathrm{th}$ block of a block/partition matrix $\mat{X}$ to avoid any confusion where it might arise (for instance, when $\mat{X}$ has an associated super/subscript or is a product or power of matrices). Therefore, we can express powers of $\avg{\mat{A}}$ in a block-wise manner. For any two nodes $a,b$ we can define their block indices $r,s\in\{1,2,\dots, k\}$, and within-block indices $u\in\{1,2,\dots, n_r\}$, $v\in\{1,2,\dots, n_s\}$, such that:
\begin{equation}
\label{eq:power_ensemble_avg_asym}
    \begin{split}
        &\idxp{\avg{\mat{A}}^m}{rs} = \sum_{i=1}^k\mat{P}_{ri}\mat{J}_i^m\idxp{\mat{P}^{-1}}{is}\\
        &\implies \idx{\idxp{\avg{\mat{A}}^m}{rs}}{uv} = \sum_{i=1}^k\sum_{p=1,q=1}^{n_i}\idx{\mat{P}_{ri}}{up}\idx{\mat{J}_i^m}{pq}\idx{\idxp{\mat{P}^{-1}}{is}}{qv}\\
        &=\sum_{i=1}^k\sum_{p=1}^{n_i}\sum_{q=p}^{\min(p+m,n_i)}\binom{m}{q-p}\tau_i^{m+p-q}\idx{\mat{P}_{ri}}{up}\idx{\idxp{\mat{P}^{-1}}{is}}{qv}
    \end{split}
\end{equation}
Putting Eq. \eqref{eq:power_ensemble_avg_asym} for the powers of $\avg{\mat{A}}$ in Eqs. \eqref{eq:sf_avg}, \eqref{eq:sf_avg_uncorrected} for the sparse ensemble average network, we have shown that the closed form of the survival function of the GLD of an asymmetric kernel can be expressed as a sum of powers of eigenvalues over a generalized eigenvector basis. In particular, the approximate closed form of the conditional PMF and the survival function for the geodesic length between $a,b$ using Eqs. \eqref{eq:power_ensemble_avg_asym}, \eqref{eq:spd_analytic_omega} and \eqref{eq:sf_avg_uncorrected}, is given block-wise by:
\begin{subequations}
\label{eq:asym_1}
\begin{align}
    \label{eq:asym_1_omega}
    \idx{\idxp{\Omegaacf_l}{rs}}{uv} =
    \begin{split}
    &\sum_{i=1}^k\sum_{p=1}^{n_i}\sum_{q=p}^{\min(p+l,n_i)}{\binom{l}{q-p}}\tau_i^{l+p-q}\\&\hspace{1.5cm}\times\idx{\mat{P}_{ri}}{up}[\idxp{\mat{P}^{-1}}{is}]_{qv},
    \end{split}\\
    \label{eq:asym_1_psi}
    \idx{\idxp{\Psiacf_l}{rs}}{uv} =& \exp\left(-\sum_{m=1}^l\idx{\idxp{\Omegaacf_m}{rs}}{uv}\right).
\end{align}
\end{subequations}

\bibliography{apssamp}

\end{document}